\theoremstyle{plain}
\newtheorem{theorem}{Theorem}[section]
\newtheorem{corollary}[theorem]{Corollary}
\newtheorem{proposition}[theorem]{Proposition}
\newtheorem{lemma}[theorem]{Lemma}
\theoremstyle{definition}
\newtheorem{definition}[theorem]{Definition}
\newtheorem{example}[theorem]{Example}
\newtheorem{remark}[theorem]{Remark}
\newtheorem{proc}[theorem]{Procedure}
\newcommand{\pd}[2]{\ensuremath{ \frac{ \partial #1 }{\partial #2}  }}
\newcommand{\scc}{\mathcal{S}}
\newcommand{\F}{f_{c, \kappa}}
\newcommand{\augmentH}{h_{c, a}}
\newcommand{\Sp}{\mathscr{S}}
\newcommand{\Katau}{\boldsymbol{\tau}}
\newcommand{\uri}{\rightarrow_\circ}
\newcommand{\cond}{\mathcal{C}}
\newcommand{\R}{\mathbb{R}}
\newcommand{\Q}{\mathbb{Q}}
\newcommand{\Lap}{\mathcal{L}}
\newcommand\mapsfrom{\mathrel{\reflectbox{\ensuremath{\mapsto}}}}
\begin{document}

\title[Multistationarity in structured reaction networks]{Multistationarity in structured reaction networks}


\author[A.~Dickenstein, M.~P\'erez Mill\'an, A.~Shiu, and X.~Tang]
{Alicia Dickenstein, Mercedes P\'erez Mill\'an,
Anne Shiu, and Xiaoxian Tang
}

\address{A.~D. and M.~P.~M: Departamento de Matem\'atica\\
FCEN, Universidad de Buenos Aires e IMAS (UBA--CONICET)\\
Ciudad Universitaria, Pab.I \\
1428 Buenos Aires, Argentina \\
A.~S. and X.~T.: Department of Mathematics\\ Texas A\&M University\\ College Station TX 77843, USA.} 

\email{alidick@dm.uba.ar,mpmillan@dm.uba.ar,annejls@math.tamu.edu,xiaoxian@math.tamu.edu}


\date{January 23, 2019}

%

\maketitle

\begin{abstract}
Many dynamical systems arising in biology and other areas exhibit multistationarity  
(two  or  more  positive  steady states with the same conserved quantities).  
Although deciding multistationarity for a polynomial dynamical system
is an effective question in real algebraic geometry, it is in general difficult to 
determine whether a given network can give rise to a multistationary system, and if so, 
to identify witnesses to multistationarity, that is, specific parameter values for which the system exhibits 
multiple steady states.  Here we investigate both problems.
First, we build on work of Conradi, Feliu, Mincheva, and Wiuf, 
who showed that for certain reaction networks 
whose steady states admit a positive parametrization, 
multistationarity is characterized by whether a certain ``critical function'' changes sign.
Here, we allow for more general parametrizations, 
which make it much easier to determine the existence of a sign change. This is particularly simple 
when the steady-state equations are linearly equivalent to binomials;
we give necessary conditions for this to happen, 
which hold for many networks studied in the literature.
We also give a sufficient condition for multistationarity 
of networks whose steady-state equations can be replaced by equivalent triangular-form equations.
Finally, we  present methods for  
finding witnesses to multistationarity, which we show work well 
for certain structured reaction networks, including those common to biological signaling pathways.
Our work relies on results from degree theory, on the existence of explicit rational 
parametrizations of the steady states, and on the specialization of Gr\"obner bases.

  \vskip 0.1cm
  \noindent \textbf{Keywords:} reaction network, mass-action kinetics,
multistationarity, parametrization,
binomial ideal, 
Brouwer degree, Gr\"obner basis

\end{abstract}

\section{Introduction}

An important problem in many applications is to determine whether a given dynamical system is multistationary, and if so to find 
witnesses to multistationarity (parameter values for which the system exhibits two or more steady states with the same
conserved quantities).  
Here we resolve these problems for dynamical systems arising under mass-action kinetics from reaction networks with particular structure;  for instance, 
biological signaling networks.
Specifically, our main results are criteria for multistationarity and procedures for obtaining witnesses  
for networks having the following structure:
\begin{enumerate}[(A)]
   	\item networks that admit a rational parametrization of the steady states and where the resulting 
 {\em critical function} changes sign (Theorem~\ref{thm:c-general}), and 
	\item networks for which the steady-state equations are linearly equivalent to binomial equations 
	(Theorems~\ref{thm:NP} and \ref{thm:binom}) or triangular-form equations (Theorem~\ref{prop:triangular}).
\end{enumerate}
The critical function in (A) refers to the composition of the steady-state
parametrization with the determinant of the Jacobian matrix.  
Theorem~\ref{thm:c-general} therefore generalizes recent results,
which rely on degree theory, due to Conradi, Feliu, Mincheva, and Wiuf~\cite{CFMW}.
Specifically, we consider  more general steady-state parametrizations 
 since we allow the rate-constant parameters to depend on the steady-state concentrations, 
 and we replace the parameters with ``effective parameters'', which are usually fewer.
We show that the resulting critical functions can be much simpler to analyze than
 those in~\cite{CFMW}.  In particular, for ``linearly binomial networks'' 
 (see Definition~\ref{def:lbin}), deciding multistationarity from the critical function can be done by inspection (Theorem~\ref{thm:NP}). 
We give sufficient conditions for a ``MESSI network'' (see~\cite{messi}) to be a linearly binomial network, and moreover this transformation of the steady-state equations is explicit (Theorem \ref{thm:binom}).  Such networks include many biological signaling pathways.  Additionally, we explain how critical functions are related to discriminants (Proposition~\ref{prop:discriminant}), 
and give conditions that guarantee that triangular-form equations,
as in~(B) above, exist (Corollary~\ref{cor:tri}). 
This last result relies on prior work on the specialization of 
Gr\"obner bases.
Finally, we illustrate our results on a number of reaction networks arising in biology.
Indeed, our results allow us to investigate multistationarity in biological networks systematically, where previously only ad-hoc methods could be applied.

Our results fit in the context of recent progress
on the problems of deciding multistationarity (reviewed in~\cite{mss-review};
see also \cite{BP,invitation}), obtaining witnesses for multistationarity (e.g., \cite{messi,TSS}), 
and characterizing parameter regions for multistationarity (e.g., \cite{FAM2018, CFMW,  MFA2018, KathaMulti, Smallest,WangSontag}).  
Indeed, we give new criteria for multistationarity and methods for witnesses, 
and also show that the multistationary parameter regions arising from degree theory that we describe are
open sets and thus full-dimensional (Theorem~\ref{thm:openmap}).

As mentioned above, one of our criteria for multistationarity involves examining 
the determinant of the Jacobian matrix.  The first such criterion (without composing with a 
steady-state parametrization) was given by Craciun and Feinberg~\cite{ME1} in the absence of conservation relations, 
and then was extended by many researchers~(e.g., \cite{BP,ME3,signs,ShinarFeinberg2012,WiufFeliu_powerlaw}). 
One version of such a result, a so-called {\em injectivity criterion}, says: If every term in the determinant 
of the Jacobian matrix has the same sign, then the network is not multistationary for any choice of parameters.  
Also, under some hypotheses, the converse holds~\cite{BP,CFMW,ME1,Feliu-inj}; see also the recent paper~\cite{MHR}.  
Here we prove analogous results, after using steady-state parametrizations.

Steady-state parametrizations have already been shown 
to be useful in analyzing reaction networks~\cite{CFMW,perspective,jmp2018,messi,TSS,TG}, 
and we build on those prior works.  
Similarly, like many before us,
we use degree theory to decide multistationarity
(see e.g.~\cite{CFMW, CHW08,enciso-fixed-pts}) 
and develop theory attuned to networks with certain structure
 \cite{Feliu-inj,gnacadja2011reachability,messi}, including binomials \cite{translated,signs,TSS}.

Given that our work harnesses several techniques that have already been used for analyzing 
reaction networks -- steady-state parametrizations, degree theory, and structured reaction networks --
we emphasize that our final results rely on new techniques.
Specifically, we introduce the notion of effective parameters (Definition~\ref{def:effective}) to simplify previous approaches for a common class of 
structured networks, and we use results on specializations of Gr\"obner bases~\cite{SunYao2010}.

Finally, our work is related to the following open question:
If a network $G$ admits a positive steady state that is degenerate, does this guarantee that $G$ admits multiple positive steady 
states?  (This question is related to the Nondegeneracy 
Conjecture~\cite{Joshi:Shiu:Multistationary}.)
One might hope that perturbing the parameters, i.e., rate constants and conservation-law values, would break apart the degenerate 
steady state into two or more steady states.  
Several prior results answered the above question, under some hypotheses, 
in the affirmative~\cite{CFMW,ME1, Feliu-inj,FSW,shiu-dewolff}.
Some of these results also yield procedures for generating a witness to 
multistationarity.
Here, we add new results 
to this list in Theorems~\ref{thm:c-general},~\ref{thm:NP}, and~\ref{prop:triangular}; see also Procedure~\ref{proc:witness} for linearly binomial networks.

The outline of our work is as follows.  
In Section~\ref{sec:background}, we introduce 
mass-action kinetics systems and recall a well-known result about Newton polytopes.  
In Sections~\ref{sec:degree} and~\ref{sec:binomial}, we consider networks that admit steady-state parametrizations. We show that for linearly binomial networks the effective parameters are recovered from the steady states and there are only $s$ of them, where $s$ is the number of species variables, which is a number typically much smaller than the number of parameters. We prove
 that multistationarity 
is guaranteed (by degree theory) when the critical function changes sign.
In Section~\ref{sec:messi}, we consider so-called MESSI networks (which describe Modifications of type Enzyme-Substrate or Swap with Intermediates)~\cite{messi}, which include many biological signaling networks.
We give sufficient conditions for 
such networks to be linearly binomial, which generalize Example~\ref{ex:CFMW611} (continued along the paper) and Example~\ref{cascade}.
 In Section~\ref{sec:triangular}, we consider networks whose
 steady-state equations can be replaced by equivalent triangular-form equations.
We give sufficient conditions for a degenerate steady state of 
such a network to break into multiple steady states and
we show that triangular-form equations exist under general conditions.   
We end with a Discussion in Section~\ref{sec:discussion}. Finally, we include,  in three Appendices, proofs of those results which require further background.
\section{Background} \label{sec:background}
In this section, we introduce reaction networks and their mass-action kinetics systems (Section~\ref{sec:networks}), 
and then recall a useful result pertaining to Newton polytopes (Section~\ref{sec:NP}).

\subsection{Reaction networks} \label{sec:networks}
Here we largely follow the notation of Conradi, Feliu, Mincheva, and Wiuf~\cite{CFMW}.
A {\em reaction network} $G$ consists of a set of $s$ species $\{X_1, X_2, \ldots, X_s\}$ and a set of $m$ reactions:
\[
\sum_{i=1}^s\alpha_{ij}X_i\rightarrow \sum_{i=1}^s\beta_{ij}X_i, \quad \quad
	j=1,2, \ldots, m~,
\]
where $\alpha_{ij}$ and $\beta_{ij}$ are non-negative integers. The {\em stoichiometric matrix} of 
$G$, 
denoted by $N$,  is the $s\times m$ matrix with
$(i, j)$-entry equal to $\beta_{ij}-\alpha_{ij}$.
Let $im(N)^{\perp}$ denote the orthogonal complement of the image of the stoichiometric matrix $N$, and let $d=s-{\rm rank}(N)$. 
A {\em conservation-law matrix} of $G$, denoted by $W$, is any row-reduced $d\times s$-matrix whose rows form a basis of $im(N)^{\perp}$.
If all entries of $W$ are nonnegative and 
every column of $W$ contains at least one nonzero entry, that is, 
every species occurs with a positive coefficient in at least one conservation law, then $G$ is {\em conservative}.

The concentrations of the species $X_1,X_2, \ldots, X_s$ are denoted by $x_1, x_2, \ldots, x_s$, respectively. 
The evolution of the concentrations with respect to time is given by a system of ordinary differential equations:
\begin{equation}\label{sys}
\dot{x}~=~f(x)~:=~N\cdot v(x)~, 
\end{equation}
where $x=(x_1, x_2, \ldots, x_s)$ and
$v:{\mathbb R}_{\geq 0}^s\rightarrow {\mathbb R}_{\geq 0}^{m}$ is a 
{\em reaction rate function}.  This function, in the case of {\em mass-action kinetics}, 
 is given by:
\[
v_j(x)=\kappa_j \, x_1^{\alpha_{1j}} 
		x_2^{\alpha_{2j}} 
		\cdots x_s^{\alpha_{sj}},\;\;\;j=1,2,\ldots, m,
\]
where $\kappa_j \in \mathbb R_{>0}$ is called a {\em reaction rate constant}. 
 We can consider the reaction rate constants as parameters $\kappa=(\kappa_1, \dots, \kappa_m)$ and 
 view the polynomials $f_{\kappa,i} \in \mathbb Q[\kappa,x]$, for $i=1, \dots, s$.
For ease of notation we will sometimes simply write $f_i$ in place of $f_{\kappa,i}$, for $i=1, \dots, s$.

Our system~\eqref{sys} satisfies $W \dot x = Wf_\kappa(x)=0$,  and both the positive orthant $\mathbb R_{>0}^s$ and its closure $\mathbb R_{\ge 0}$ are forward-invariant for the dynamics. Thus, 
a trajectory $x(t)$ beginning at a nonnegative vector $x(0)=x^0 \in
\mathbb{R}^s_{> 0}$ remains, for all positive time,
 in the following {\em stoichiometric compatibility class} with respect to the {\em total-constant vector} $c:= W x^0 \in {\mathbb R}_{>0}^d$:
\[
\scc_c~:=~ \{x\in {\mathbb R}_{\geq 0}^s \mid Wx=c\}~,
\]
that is, $\scc_c$ is also forward-invariant with
respect to the dynamics~\eqref{sys}.
A {\em steady state} of~\eqref{sys} is a nonnegative concentration vector 
$x^* \in \mathbb{R}_{\geq 0}^s$ at which the ODEs~\eqref{sys}  vanish: $f_\kappa (x^*) = f(x^*) =0$.  
We distinguish between {\em positive steady states} $x ^* \in \mathbb{R}^s_{> 0}$ and {\em boundary steady states} 
$x^*\in {\mathbb R}_{\geq 0}^s\backslash {\mathbb R}_{>0}^s$.

In order to analyze steady states within a stoichiometric compatibility class, we 
use the conservation laws 
in place of linearly dependent steady-state equations, as follows.
Recall that the conservation-law matrix $W$ is row-reduced.  
Let $I = \{i_1, \dots, i_d\}$ be the indices of the first nonzero coordinate of the rows of $W$,
 and assume that  $i_1<i_2<\cdots<i_d$.
Define 
the function $\F: {\mathbb R}_{\geq 0}^s\rightarrow {\mathbb R}^s$ by 
\begin{equation}\label{consys}
f_{c,\kappa,i} =\F(x)_i=
\begin{cases}
f_{\kappa,i}(x)&~\text{if}~i\not\in I,\\
(Wx-c)_k &~\text{if}~i~=~i_k\in I .
\end{cases}
\end{equation}
This particular choice is needed for the validity of Theorem~\ref{thm:c-general} below.
We refer to system~\eqref{consys}
as the system (\ref{sys}) {\em augmented by conservation laws}. 
For a rate-constant vector 
$\kappa \in \mathbb{R}^{m}_{>0}$ and a total-constant vector $c  \in \mathbb{R}^{d}_{>0}$, 
we say $x^*\in {\mathbb R}_{\geq 0}^s$ is a {\em steady state of the network for $\kappa$ and $c$} if it is a root of the augmented system $\F(x^*)=0$.
Such a steady state $x^*$ is {\em nondegenerate} if 
 the Jacobian matrix of $\F$ at $x^*$ has full rank (namely, equal to $s$).

A {\em multistationary} network admits two or more positive steady states for some rate-constant 
vector $\kappa$ and total-constant vector $c$.  
 Non-multistationary networks are {\em monostationary}.

\begin{example}[Phosphorylation/dephosphorylation of two substrates]\label{ex:CFMW611}
Consider the following network, which is from \cite[\S6.1 in Supplementary Information]{CFMW}:
\begin{center}
\begin{tabular}{cc}
\ce{A + K
<=>[\kappa_1][\kappa_2]
AK
->[\kappa_3]
A_p + K
},
&
\ce{
A_p + F
<=>[\kappa_4][\kappa_5]
A_pF
->[\kappa_6]
A  + F
},\\
\ce{B + K
<=>[\kappa_{7}][\kappa_{8}]
BK
->[\kappa_9]
B_p + K
},
&
\ce{
B_p + F
<=>[\kappa_{10}][\kappa_{11}]
B_pF
->[\kappa_{12}]
B + F
}.\\
\end{tabular}
\end{center}
In this network, phosphorylation and dephosphorylation of two substrates $A$ and $B$
are catalyzed by a kinase $K$ and a phosphatase $F$, respectively.

Following~\cite{CFMW}, let
\begin{center}
\begin{tabular}{lllll}
$X_1$=\ce{K},& $X_3$=\ce{A},& $X_5$=\ce{B},&$X_7$=\ce{AK}, & $X_9$=\ce{A_pF},\\
$X_2$=\ce{F},& $X_4$=\ce{A_p},& $X_6$=\ce{B_p},                             & $X_8$=\ce{BK},& $X_{10}$=\ce{B_{p}F}.\\
\end{tabular}
\end{center}
The system evolves according to the ODEs $\dot x = f_\kappa(x)$, where the function $f_\kappa=(f_1, \dots, f_{10})$ arising from mass-action kinetics is as follows:
\begin{center}
\begin{tabular}{ll}
$f_1 =  -\kappa_1x_1x_3 + \kappa_2 x_7 + \kappa_3 x_7 - \kappa_7 x_1x_5 + \kappa_8x_8 + \kappa_9x_8$,  &\\
$f_2 =  -\kappa_4x_2x_4 + \kappa_5x_9 + \kappa_6x_9 - \kappa_{10}x_2x_ 6 + \kappa_{11}x_{10} +\kappa_{12}x_{10}$,  &\\
$f_3 =  -\kappa_1x_1x_3 + \kappa_2 x_7 +  \kappa_6x_9$, &
 $f_4 =   -\kappa_4x_2x_4 + \kappa_3x_7 + \kappa_5x_9$, \\
$f_5 =  - \kappa_7 x_1x_5 + \kappa_8x_8 + \kappa_{12}x_{10}$, &
$f_6 =  -\kappa_{10}x_2x_6+\kappa_9x_8+\kappa_{11}x_{10}$, \\
$f_7 =   \kappa_1x_1x_3-\kappa_2x_7-\kappa_3x_7$, &
$f_8 =   \kappa_7x_1x_5-\kappa_8x_8-\kappa_9x_8$, \\
$f_9 =   \kappa_4x_2x_4-\kappa_5x_9-\kappa_6x_9$, &
$f_{10}=   \kappa_{10}x_2x_6-\kappa_{11}x_{10}-\kappa_{12}x_{10}$. 
\end{tabular}
\end{center}
Letting $c_1, c_2, c_3,c_4$ denote the 
total amounts of $K$, $F$, $A$, and $B$, respectively, then
the conservation laws are:
\[
 x_1 + x_7 + x_8 ~=~ c_1~, \quad 
  x_2  + x_9 + x_{10} ~=~ c_2~, \quad 
 x_3 + x_4 + x_7 + x_9 ~=~ c_3~, \quad 
x_5+ x_6 + x_8 + x_{10} ~=~ c_4~.
\]
Every $x_i$ appears in at least one conservation law and all have nonnegative coefficients, so the network is conservative. 
The resulting conservation-law matrix $W$ is:
\begin{equation*}
\left(
\begin{array}{cccccccccc}
1 & 0 & 0& 0& 0 & 0 & 1& 1& 0& 0 \\
0 & 1 & 0& 0& 0 & 0 & 0& 0& 1& 1\\
0 & 0 & 1& 1& 0 & 0 & 1& 0& 1& 0 \\
0 & 0 & 0& 0& 1 & 1 & 0 &1 & 0 & 1
\end{array}
\right),
\end{equation*}
which is already in row-reduced form, and 
the indices of the first nonzero coordinate of the rows 
are $1, 2, 3$, and $5$. 
So, by~(\ref{consys}), the function $\F=\F(x)$ is as follows:
\begin{center}
\begin{tabular}{ll}
$f_{c, \kappa,1} =  x_1 + x_7 + x_8 - c_1$,  &
$f_{c, \kappa,2} =  x_2  + x_9 + x_{10} - c_2$,  \\
$f_{c, \kappa,3} =   x_3 + x_4 + x_7 + x_9 - c_3$, &
 $f_{c, \kappa,4} =   -\kappa_4x_2x_4 + \kappa_3x_7 + \kappa_5x_9$, \\
$f_{c, \kappa,5} =   x_5+ x_6 + x_8 + x_{10} - c_4$, &
$f_{c, \kappa,6}=  -\kappa_{10}x_2x_6+\kappa_9x_8+\kappa_{11}x_{10}$, \\
$f_{c, \kappa,7} =   \kappa_1x_1x_3-\kappa_2x_7-\kappa_3x_7$, &
$f_{c, \kappa,8} =   \kappa_7x_1x_5-\kappa_8x_8-\kappa_9x_8$, \\
$f_{c, \kappa,9} =   \kappa_4x_2x_4-\kappa_5x_9-\kappa_6x_9$, &
$f_{c, \kappa,10}=   \kappa_{10}x_2x_6-\kappa_{11}x_{10}-\kappa_{12}x_{10}$. 
\end{tabular}
\end{center}
\end{example}

\subsection{Newton polytopes} \label{sec:NP}
Consider a 
real, multivariate polynomial
\begin{align} \label{eq:poly}
	f ~=~  b_1 x^{\sigma_1} +  b_2 x^{\sigma_2} + \dots + b_{\ell} x^{\sigma_{\ell}} 
		~\in~ \mathbb{R}[x_1,x_2,\dots, x_s]~,
\end{align}
where  $\sigma_i \in \mathbb{Z}^s$ are distinct,  $x=(x_1, x_2, \ldots, x_s)$, $x^\sigma =\prod_{j=1}^s x_j^{\sigma_{ij}}$,  
 and 
we have $ b_i \neq 0$ for all $i$.
The {\em Newton polytope} of $f$ is the convex hull of its exponent vectors:
\begin{align*}
	{\rm NP}(f) ~:=~ {\rm conv} \{\sigma_1,~ \sigma_2,~ \dots~ ,~ \sigma_{\ell}\} ~\subseteq~ \mathbb{R}^s~.
\end{align*}

We will use the following well-known lemma.
\begin{lemma} \label{lem:NP}
For a
real, multivariate polynomial $f$ as in~\eqref{eq:poly},
if $\sigma_i$ is a vertex of NP$(f)$,
then  there exists $x^* \in \mathbb{R}^s_{>0}$ such that $f(x^*)$ and $ b_{i}$ have the same sign.
\end{lemma}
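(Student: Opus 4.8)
The plan is to exploit the fact that a vertex of a polytope can be isolated from all other vertices by a linear functional. Concretely, let $f = b_1 x^{\sigma_1} + \dots + b_\ell x^{\sigma_\ell}$ and suppose $\sigma_i$ is a vertex of $\mathrm{NP}(f) = \mathrm{conv}\{\sigma_1, \dots, \sigma_\ell\}$. Since $\sigma_i$ is a vertex, there is a vector $w \in \mathbb{R}^s$ such that the linear functional $\langle w, \cdot \rangle$ is uniquely maximized over the exponent set at $\sigma_i$; that is, $\langle w, \sigma_i \rangle > \langle w, \sigma_j \rangle$ for all $j \neq i$. This separating functional is the key geometric input, and obtaining it is a standard consequence of the definition of a vertex (an exposed point) together with the finiteness of the exponent set.

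Next I would use $w$ to build a one-parameter family of positive test points. For $t > 0$, set $x^*(t) := (t^{w_1}, t^{w_2}, \dots, t^{w_s}) \in \mathbb{R}^s_{>0}$, so that each monomial becomes $(x^*(t))^{\sigma_j} = t^{\langle w, \sigma_j \rangle}$. Then
\begin{align*}
f(x^*(t)) ~=~ \sum_{j=1}^{\ell} b_j\, t^{\langle w, \sigma_j \rangle}~.
\end{align*}
By the choice of $w$, the exponent $\langle w, \sigma_i \rangle$ strictly exceeds every other exponent $\langle w, \sigma_j \rangle$, so the term $b_i\, t^{\langle w, \sigma_i \rangle}$ dominates as $t \to +\infty$.

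To finish, I would factor out the dominant power of $t$ and take the limit. Writing $M := \langle w, \sigma_i \rangle$, we have
\begin{align*}
t^{-M} f(x^*(t)) ~=~ b_i ~+~ \sum_{j \neq i} b_j\, t^{\langle w, \sigma_j \rangle - M}~,
\end{align*}
and since every exponent $\langle w, \sigma_j \rangle - M$ is strictly negative for $j \neq i$, each of these terms tends to $0$ as $t \to +\infty$. Hence $t^{-M} f(x^*(t)) \to b_i$, so for all sufficiently large $t$ the quantity $t^{-M} f(x^*(t))$ has the same sign as $b_i$; because $t^{-M} > 0$, the value $f(x^*(t))$ itself has the same sign as $b_i$. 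Choosing any such $t$ and setting $x^* := x^*(t)$ yields a point in $\mathbb{R}^s_{>0}$ with $f(x^*)$ and $b_i$ sharing the same sign, as required.

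I expect the only genuine subtlety to be the very first step: justifying the existence of the separating functional $w$ that strictly exposes the vertex $\sigma_i$ against the finite set of remaining exponents. Everything afterward is the routine monomial substitution $x_j \mapsto t^{w_j}$ and a limiting argument, both of which are standard. One minor point to handle cleanly is that $w$ need not have integer entries, but this causes no difficulty since $t^{w_j}$ is well-defined and positive for real exponents when $t > 0$; alternatively, one could scale $w$ to rational or integer coordinates without affecting the strict inequalities.
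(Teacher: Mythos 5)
Your argument is correct and is essentially the same as the paper's: both isolate the vertex $\sigma_i$ with a linear functional $\eta$ (the paper takes it integral), substitute $x_j\mapsto \lambda^{\eta_j}$, and observe that the resulting univariate Laurent polynomial is dominated by the term $b_i\lambda^{\eta\cdot\sigma_i}$ as $\lambda\to+\infty$. No substantive difference.
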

Indeed, as $\sigma_i$ is a vertex of NP$(f)$, there exists an integer vector $\eta\in \mathbb{Z}^s$ such that the inner product $\eta \cdot x$ is maximized over NP$(f)$ only at $\sigma_i$.
Consider the univariate polynomial
$g(\lambda) \, = f(\lambda^{\eta_1}, \dots, \lambda^{\eta_s})$.
The degree of $g$ equals $\eta \cdot \sigma_i$ and thus it is clear that when
$\lambda$ tends to $+ \infty$, the sign of $g$ agrees with the sign of its leading coefficient
$b_i$. This result is not true for non-vertex exponents, as the univariate polynomial
$\lambda^2 - 2 \lambda + 1$ shows.

\section{Establishing multistationarity using degree theory} \label{sec:degree}
In this section we show that 
determining whether certain networks are multistationary 
is equivalent to checking whether their ``critical functions'' change sign
(Theorem~\ref{thm:c-general}). 
Such critical functions arise from steady-state parametrizations (Definition~\ref{def:parametrization-for-h}), 
which generalize those considered by Conradi, Feliu, Mincheva, and Wiuf~\cite{CFMW}.  
Another way we generalize the approach in~\cite{CFMW}
is by working in terms of certain ``effective parameters'' rather than the original rate constants.
Accordingly, our statements extend results in~\cite{CFMW} and our proofs rely heavily on their arguments (which use degree theory).
We show in Examples~\ref{erk3} and~\ref{CFMW612},
that our critical functions can be simpler and easier to analyze than those from~\cite{CFMW}.

\subsection{Parametrizations and critical functions} \label{sec:param}

We will consider simplified versions of $\F$ obtained by linear operations.  
To motivate these ``simplified versions'', consider from Example~\ref{ex:CFMW611} one coordinate of $\F(x)$:
\[
f_7 \quad = \quad   \kappa_1\, x_1x_3-(\kappa_2 + \kappa_3)\, x_7~.
\]
We will replace $f_7$ by 
$\frac{1}{\kappa_1} f_7 \, =\, x_1x_3-\left( \frac{\kappa_2 + \kappa_3}{\kappa_1} \right) x_7$,  and then 
view $a_1=\frac{\kappa_2 + \kappa_3}{\kappa_1}$
as a new ``effective parameter''. So we will consider the rational function $a_1=a_1(\kappa)$ 
instead of the three original parameters $\kappa_1, \kappa_2, \kappa_3$ \footnote{In this example, 
$\frac{\kappa_2 + \kappa_3}{\kappa_1}$ is what is  referred to as a $K_m$-value in structural biochemistry.}.  We must require that any resulting 
reparametrization, which replaces the parameters $\kappa_i$, for $ i =1, \dots,m,$ by  new  parameters $a_j$, for $j=1, \dots, \bar m,$ which are rational functions of them,
is surjective (from $\mathbb R_{>0}^m$ to $\mathbb R_{>0}^{\bar m}$).  
 This surjectivity implies that $\bar m \leq m$, and in fact 
the inequality ${\bar m}<m$ holds
in Examples \ref{CFMW612}, \ref{erk}, and \ref{cascade}.  Indeed,
one of our motivations for introducing effective parameters is to reduce the 
number of parameters and thereby simplify the system.


Next we present an important example to further motivate and clarify the notions of {\em effective parameters}
and {\em effective steady-state function} 
(which will be defined in Definition~\ref{def:effective}).
 The network below in Example \ref{erk} underlies ERK regulation by dual-site phosphorylation by the kinase MEK 
(denoted by $E$) and dephosphorylation by the phosphatase MKP3 ($F$). 
Rubinstein, Mattingly, Berezhkovskii, and Shvartsman 
showed that this network is multistationary, and found witnesses to multistationarity by sampling parameters~\cite{long-term}.  
In this paper we take a more systematic approach to deciding multistationary and to finding a witness to multistationarity,
via a simplified system $\augmentH(x)$ (see Definition \ref{def:effective} and Theorem \ref{thm:c-general}).  

\begin{example}[ERK network]\label{erk}
Consider the following network from~\cite{long-term}:
\begin{center}
\begin{tabular}{cc}
\ce{S_{00} + E
<=>[\kappa_1][\kappa_2]
S_{00}E
->[\kappa_3]
S_{01}E
->[k_{cat}]
S_{11} + E
},
&
\ce{
S_{11} + F
<=>[l_1][l_2]
S_{11}F
->[l_3]
S_{10}F
->[l_{cat}]
S_{00} + F},\\
\ce{S_{01}E
<=>[k_{off}][k_{on}]
S_{01} + E
},
&
\ce{
S_{10} F
<=>[l_{off}][l_{on}]
S_{10} + F
},\\
\ce{S_{10} + E
<=>[m_2][m_1]
S_{10}E
->[m_3]
S_{11} + E
},
&
\ce{
S_{01} + F
<=>[n_1][n_2]
S_{01}F
->[n_3]
S_{00} + F}.
\end{tabular}
\end{center}

The rate constants of the network are as follows:
\[\kappa=(\kappa_1, \kappa_2, \kappa_3, k_{cat}, k_{on}, k_{off}, 
m_1, m_2, m_3, l_1, l_2, l_3, l_{cat}, l_{on}, l_{off}, n_1, n_2, n_3)\in {\mathbb R}_{>0}^{18}~.\]
Also, we have $s=12$ species: 
\begin{center}
\begin{tabular}{llllll}
$X_{1}=\ce{S_{00}}$, & $X_3$=\ce{F},& $X_5$=\ce{S_{10}F},&$X_{7}$=
\ce{S_{01}E},  & $X_9$=\ce{S_{01}}, & $X_{11}$=\ce{S_{00}E},\\
$X_2$=\ce{E},& $X_4$=\ce{S_{11}F},& $X_6$=\ce{S_{01}F},                            
 & $X_8$=\ce{S_{10}E},& $X_{10}$=\ce{S_{10}}, &$X_{12}=\ce{S_{11}}$.\\
\end{tabular}
\end{center}


From the $d=3$ conservation laws, 
which 
arise from the total amounts of substrate $S$, kinase $E$, and phosphatase $F$, respectively:
\begin{align}
\notag
x_{1}+x_4+x_5+x_6+x_7+x_8 +x_9+x_{10}+x_{11}+x_{12} ~&=~ c_1~,\\
\label{eq:cons-erk}
x_2 +x_7+x_8+ x_{11}~&=~ c_2 ~,\\
 \notag
x_3+ x_4+x_5+x_6 ~&=~ c_3~,
\end{align}
we obtain $\F(x)$: 
\begin{center}
\begin{tabular}{ll}
$f_{c, \kappa,1} =x_{1}+x_4+x_5+x_6+x_7+x_8 +x_9+x_{10}+x_{11}+x_{12} -c_1$, & 
$f_{c, \kappa,2} =  x_2 +x_7+x_8+x_{11}-c_2 $, \\
$f_{c, \kappa,3} =    x_3+ x_4+x_5+x_6 -c_3$,&
 $f_{c, \kappa,4}=  l_1x_3x_{12}-l_2x_4-l_3x_4 $, \\
$f_{c, \kappa,5} = l_{on}x_3x_{10}+l_3x_4-l_{cat}x_5-l_{off}x_5 $, & 
$f_{c, \kappa,6} =  n_1x_3x_9-n_2x_6-n_3x_6$,\\
$f_{c, \kappa,7} =   k_{on}x_2x_9+{\kappa}_3x_{11}-k_{cat}x_7-k_{off}x_7$, & 
$f_{c, \kappa,8} = m_2x_2x_{10}-m_1x_8-m_3x_8$,
\end{tabular}
\end{center}
\begin{tabular}{l}
$f_{c, \kappa,9} =  -k_{on}x_2x_9 - n_1x_3x_9+k_{off}x_7 + n_2x_6$, \\
$f_{c, \kappa,10}= -l_{on}x_3x_{10}-m_2x_2x_{10} + m_1x_8 + l_{off}x_5$,  \\
$f_{c, \kappa,11} =   \kappa_1x_2x_{1}-\kappa_2x_{11}-\kappa_3x_{11}$, \\
 $f_{c, \kappa,12} =-l_1x_3x_{12}+k_{cat}x_7 + m_3x_8 + l_2x_4$.
\end{tabular}

\bigskip

We introduce the following choice of $13$ {\em effective parameters}:
\noindent
\begin{equation}\label{eq:baraerk}
\begin{tabular}{lllllll}
$\bar a_1=\frac{l_{cat}}{k_{cat}}$,& $\bar a_2=\frac{m_{3}}{l_{cat}}$,& $\bar a_3=\frac{l_3}{l_{cat}}$,
&$\bar a_4=\frac{n_3}{k_{cat}}$, & $\bar a_5=\frac{\kappa_3}{k_{cat}}$, & $\bar a_6=\frac{m_3}{l_{on}}$, & $\bar a_7=\frac{l_{off}}{l_{on}}$,
 \\
$\bar a_8=\frac{n_{3}}{k_{on}}$,& $\bar a_9=\frac{k_{off}}{k_{on}}$,&$\bar a_{10}=\frac{\kappa_1}
{\kappa_2+\kappa_3}$, & $\bar a_{11}=\frac{m_2}{m_1+m_3}$, & $\bar a_{12}=\frac{l_1}{l_2+l_3}$, &
$\bar a_{13}=\frac{n_1}{n_2+n_3}$. & \\
\end{tabular}
\end{equation}

\medskip

Note that the resulting map 
$\bar a: \mathbb{R}^{18}_{>0} \to \mathbb{R}^{13}_{>0}$ given by 
$ \kappa\mapsto \bar a(\kappa)$ is surjective. 
\end{example}

\begin{definition} \label{def:effective}
Let $G$ be a network with $m$ reactions and $s$ species and let
$\dot x = f_{\kappa}(x)$ denote the resulting mass-action system.  Denote by $W$  a
row-reduced conservation-law matrix and by $I$ the set of indices of the first nonzero
coordinates of its rows, as in \S\ref{sec:networks}. 
Enumerate the complement of $I$:
	\begin{equation}\label{reindex}
	[s] \setminus I ~=~ \{j_1 < j_2< \dots< j_{s-d}\}~.
	\end{equation} 
We say that $\bar a_1(\kappa), \bar a_2(\kappa), \dots, \bar a_{\bar m}(\kappa) \in \mathbb{Q}(\kappa)$ form a set of 
 {\em effective parameters}  
 for $G$ 
if the following hold: 
\begin{enumerate}[(i)]
	\item  $\bar a_i(\kappa^*)$ is defined for all $\kappa^* \in \mathbb{R}^m_{>0}$ and, moreover,  
	$\bar a_i(\kappa^*)>0$
	for every $i=1,2,\dots,\bar m$,
	\item the following {\em reparametrization map} is surjective:
\begin{align}\label{eq:ep}
	\bar a ~:~\mathbb{R}^m_{>0} & ~\to~ \mathbb{R}^{\bar m}_{>0} \\ \notag
				\kappa & ~\mapsto~ (\bar a_1(\kappa), \bar a_2(\kappa), \dots, \bar a_{\bar m} (\kappa))~,
\end{align} 
\item there exists an $(s-d)\times (s-d)$ matrix
	$M(\kappa)$ with entries in 
	$\mathbb{Q}(\kappa):=\mathbb{Q}(\kappa_1,\kappa_2,\dots, \kappa_m)$ such that:
	\begin{enumerate}[(a)]
	\item for all $\kappa^* \in \mathbb{R}^m_{>0}$, the matrix 
$M(\kappa^*)$ is defined 	and, moreover, 
	$\det M(\kappa^*)>0$, and 
	\item letting $(\bar h_{j_{\ell} })$ denote the  functions 
	obtained from $(f_{j_\ell})$ via the linear operations defined by $M(\kappa)$, as follows:
\begin{equation}\label{eq:linearchange}
(\bar h_{j_1}, \bar h_{j_2}, \dots, \bar h_{j_{s-d}})^{\top}
\quad := \quad M(\kappa) ~  (f_{j_1}, f_{j_2}, \dots, f_{j_{s-d}})^{\top}~,
\end{equation}
 every nonconstant coefficient in every  $\bar h_{j_{\ell}}$ is equal to   
  a rational-number multiple of some
 $\bar a_i(\kappa)$. In particular, 
$\bar h_{j_{\ell}}$ in $\mathbb{Q}(\kappa)[x]$.
	  \end{enumerate}
\end{enumerate}

Given such an effective parametrization, we consider
for $\ell=1,2,\dots, s-d,$  polynomials
$h_{j_{\ell}}=h_{j_{\ell}} (a,x) \in \mathbb{Q}[a_1,a_2,\dots, a_{\bar m}][x]$ (here, the $a_i$'s are indeterminates)  such that:
\begin{align} \label{eq:effective-reln}
	\bar h_{j_{\ell}} ~=~ h_{j_{\ell}}|_{a_1 = \bar a_1(\kappa), ~\dots~,~ a_{\bar m} = \bar a_{\bar m}(\kappa)}~.
\end{align}
 For $i=1,2, \dots, s$ and any choice of $c \in \mathbb{R}^{d}_{>0}$
and $a \in \mathbb{R}^{\bar m}_{>0}$,
 set
\begin{equation}\label{consys-h}
h_{c,a,i} = \augmentH(x)_i~:=~
\begin{cases}
 h_i(a,x)
&~\text{if}~i  \notin I \\
(Wx-c)_k &~\text{if}~i=i_k\in  I.~
\end{cases}
\end{equation}
We say that the function  $\augmentH: \mathbb{R}^s_{>0} \to \mathbb{R}^s $ is an {\em effective steady-state function} of $G$.
\end{definition}

\begin{remark} \label{rmk:effective}
The choice of effective parameters is not unique. 
In fact, we can simply choose the effective parameters to be the $\kappa_i$'s (that is, $\bar a_i(\kappa^*):=\kappa^*_i$ for all $i$), 
but in our examples we will instead choose the nonconstant coefficients 
(or $\mathbb{Q}$-multiples of them)
 of the $\bar h_{j_{\ell}}$'s.  
 Indeed, there are usually fewer such coefficients than $\kappa_i$'s (see e.g.\ Examples \ref{CFMW612}, \ref{erk}, and \ref{cascade}). 
 However, we do not have an algorithm for determining a best choice of effective parameters. 
\end{remark}

\begin{definition}\label{def:lbin}
A network is {\it linearly binomial} if there exist binomials $\bar h_{j_1}, \bar h_{j_2}, \dots, \bar h_{j_{s-d}}$ 
and a matrix $M(\kappa)$ as in Definition~\ref{def:effective}, such that equality~\eqref{eq:linearchange} holds.
\end{definition} 
The networks in Examples \ref{CFMW612} and \ref{cascade} are linearly binomial. We will abstract the features that imply this property in Theorem~\ref{thm:binom}.

\begin{remark} \label{rmk:binomial-network-as-in-SF}
Every linearly binomial network is a ``binomial network'' as defined by Sadeghimanesh and Feliu~\cite{SF}. 
\end{remark}

\begin{definition} \label{def:parametrization-for-h}
Let $G$ be a network with $m$ reactions, $s$ species, and
row-reduced conservation-law matrix $W$.  Let $\F$ arise from $G$ and $W$ as in~\eqref{consys}.
Suppose that $\augmentH$ is an effective steady-state function of $G$, as in~\eqref{consys-h},
arising from a matrix $M(\kappa)$, as in~\eqref{eq:linearchange}, 
a reparametrization map $\bar a$, as in~\eqref{eq:ep}, 
and polynomials $h_{j_{\ell}}$'s as in~\eqref{eq:effective-reln} as in Definition~\ref{def:effective}.

We say that the positive steady states of $G$ {\em admit a positive parametrization with respect to $\augmentH$} 
 if there exists a function:
	\begin{align} \label{eq:effective} 
	\phi :~ \mathbb{R}^{\hat m}_{>0} \times  \mathbb{R}^{\hat s}_{>0} 
		&~\rightarrow~ 
		 \mathbb{R}_{>0}^{\bar m}\times  \mathbb{R}_{>0}^{s}~\\ \notag
		(\hat a, \hat x) & ~\mapsto~ \phi(\hat a, \hat x)~,
	\end{align}
for some $\hat m \leq \bar m$ and $\hat s \leq s$, 
such that:
	\begin{enumerate}[(i)]
	\item $\phi(\hat a, \hat x)$ extends the vector $(\hat a, \hat x)$. More precisely, 
	there exists a natural projection $\pi: \mathbb{R}^{\bar m}_{>0}\times
	  \mathbb{R}_{>0}^{s}\to \mathbb{R}^{\hat m}_{>0} \times \mathbb{R}^{\hat s}_{>0} $ 
	  such that $\pi \circ \phi$ is equal to the identity map.
	\item Consider any $(a,x) \in \mathbb R^{\bar m}_{>0} \times \mathbb  R_{>0}^s$.  Then, the equality 
		$h_i (a,x) = 0$ holds for every  $i \notin I$ 
		if and only if 
		 there exists
	 $(\hat a^*, \hat x^*) \in  \mathbb{R}^{\hat m}_{>0} \times  \mathbb{R}^{\hat s}_{>0} $ 
 such that $(a, x)=\phi(\hat a^*,\hat x^*)$.
 \end{enumerate}
 \end{definition}

Note that given any $(a,x) \in \mathbb R^{\bar m}_{>0} \times \mathbb  R_{>0}^s$, with $(a,x)=\phi(\hat a, \hat x)$, if we set
\[ c \, = \, W \, x , \]
then $h_{c,a}(x) =0$, where $h_{c,a}$ is the effective steady state function in Definition~\ref{def:effective}.

Moreover,  we can summarize the information in Definition~\ref{def:parametrization-for-h} by asking that the diagram below commutes, where the maps $\mu(\kappa,x):=(\bar a(\kappa),x)$ and $\phi$ are surjective:
\begin{center}
\begin{tikzpicture}[node distance=2cm]
   \node (C) {$\left\{(\kappa^*,x^*) \in  \mathbb{R}_{>0}^{m}\times 
\mathbb{R}_{>0}^{s} ~:~
                 f_i|_{ \kappa=\kappa^*,~x=x^*} = 0 {\rm ~for~all~} i \notin
I\right\}$};
   \node (P) [below of=C] {$\mathbb{R}^{\hat m}_{>0} \times 
\mathbb{R}^{\hat s}_{>0} $ \quad \quad \quad};
   \node (Ai) [right of=P] {\quad \quad \quad $\mathbb{R}^{\bar m}_{>0}
\times  \mathbb{R}^{s}_{>0}  $};
   \draw[->] (C) to node {$\quad \quad$ $\mu$} (Ai);
   \draw[->] (C) to node [swap] {\hskip -1.2cm$\pi \circ \mu $~~~~} (P);
   \draw[<-] (P) to node [above] {\vspace{2cm}  $\phi$} (Ai);
\end{tikzpicture}
\end{center}

\begin{example}[ERK network, continued]\label{erk2}
Let
{\footnotesize
\[M(\kappa)=
\left(\begin{array}{cccccccccccc}
\frac{1}{l_2+l_3} &  0& 0& 0  & 0&0 & 0&0 & 0\\
0 & \frac{1}{l_{cat}}  & 0 & 0   & \frac{1}{l_{cat}} & 0& \frac{1}{l_{cat}}&0&0\\
 0 & 0 & \frac{1}{n_2+n_3}& 0 & 0 & 0 & 0&0&0\\
 0 &  0 & \frac{1}{k_{cat}}    &  \frac{1}{k_{cat}}& 0 & \frac{1}{k_{cat}}& 0&0&0\\
  0 &  0 & 0     &  0& \frac{1}{m_1+m_3} & 0& 0&0&0\\
  0     &  0 &  \frac{1}{k_{on}}   & 0   & 0       &   \frac{1}{k_{on}}  & 0& 0 &0 \\
   0&  0& 0& 0  & \frac{1}{l_{on}}& 0 &\frac{1}{l_{on}}&0 & 0\\
    0&  0& 0& 0  & 0& 0 &0 &\frac{1}{\kappa_2+\kappa_3}& 0\\
     \frac{1}{k_{cat}}&  \frac{1}{k_{cat}}& 0& 0 & \frac{1}{k_{cat}}& 0 &\frac{1}{k_{cat}}&0 & \frac{1}{k_{cat}}
\end{array}
\right)~, 
\]
}
It is straightforward to check that 
 ${\rm det} M(\kappa)>0$ for all $\kappa\in \mathbb{R}^{18}_{>0}$. 
From~the effective parameters \eqref{eq:baraerk} and equations \eqref{eq:linearchange}--\eqref{consys-h}, the resulting system $\augmentH(x)$ is:

\bigskip
\noindent
\textcolor{black}{
\begin{tabular}{ll}
$h_{c, a,1} = x_{1}+x_4+x_5+x_6+x_7+x_8 +x_9+x_{10}+x_{11}+x_{12} -c_1$,  & 
$h_{c, a,2} = x_2 +x_7+x_8+ x_{11} -c_2$,  \\
$h_{c, a,3} =x_3+ x_4+x_5+x_6 -c_3$, & 
$h_{c, a,4}  =  \underline{a_{12}}x_3x_{12}-x_4$,  \\
$h_{c, a,5}  =  \underline{a_3}x_4 - x_5 -a_2x_8 $, & 
$h_{c, a,6}  =  \underline{a_{13}}x_3x_{9}-x_6$,  \\
$h_{c, a,7}  =  \underline{a_5}x_{11} -a_4x_6- x_7 $,& 
$h_{c, a,8}  =  \underline{a_{11}}x_2x_{10}-x_8$,\\
$h_{c, a,9} =  \underline{a_9}x_7-x_2x_9 - a_8x_6 $,& 
$h_{c, a,10}  =  \underline{a_7}x_5-x_3x_{10} - a_6x_8$,\\
\end{tabular}
\begin{tabular}{l}
$h_{c, a,11}  =  \underline{a_{10}}x_1x_{2}-x_{11}$,\\
$h_{c, a,12}  = x_7 -\underline{a_1}x_5$.
\end{tabular}
}

\bigskip
\noindent 
Let $\hat a = (a_2, a_4, a_6, a_8)$ and $\hat x = x$.
By solving the 
non-conservation-law 
equations $\augmentH(x)_i=0$ (for $i=4,\ldots,12$), 
for the underlined unknowns $a_1, a_3, a_5, a_7, a_9, a_{10}, a_{11}, a_{12}, a_{13}$, we obtain
 the positive parametrization $\phi: \mathbb{R}^{16}_{>0} \rightarrow  
  \mathbb{R}_{>0}^{13}\times  \mathbb{R}_{>0}^{12}$ with respect to the effective steady-state function $\augmentH$, 
where 
$\phi(\hat a, \hat x)$ $\left(=\phi(a_2, a_4, a_6, a_8;x)\right)$ 
is defined as 
\textcolor{black}{
\[\left(\frac{x_7}{x_5}, a_2, \frac{a_2x_8+x_5}{x_4}, a_4, \frac{a_4x_6+x_7}{x_{11}}, a_6,
\frac{a_6x_8+x_3x_{10}}{x_5}, a_8, \frac{a_8x_6+x_2x_9}{x_7}, \frac{x_{11}}{x_1x_{2}}, \frac{x_8}{x_2x_{10}}, \frac{x_4}{x_3x_{12}},\frac{x_6}{x_3x_9}; ~x\right).
\]
}
We will use this information in Example~\ref{erk3} below.
\end{example}

We need one more definition in order to state Theorem~\ref{thm:c-general}.

\begin{definition} \label{def:Critical}
Under the notation and hypotheses of Definition~\ref{def:parametrization-for-h}, assume  
that the steady states of $G$  admit a positive parametrization with respect to $\augmentH$.
For such a positive parametrization $\phi$, the {\em  critical function} 
$C: \mathbb{R}^{\hat m}_{>0} \times  \mathbb{R}^{\hat s}_{>0}  \rightarrow   \mathbb{R}$ is given by:
\begin{equation*}
 C(\hat a, \hat x) \quad  = \quad \left(\det {\rm Jac}(\augmentH)\right)|_{(a, x)=\phi(\hat a, \hat x)}~,
\end{equation*}
where ${\rm Jac}(\augmentH)$ denotes the Jacobian matrix of $\augmentH$ with respect to $x$. 
\end{definition}

\begin{remark} \label{rmk:degenerate-C-tilde}
Assume $(a^*,x^*)=\phi(\hat a^*, \hat x^*)$. 
It follows that {\em  $C(\hat a^*, \hat x^*)=0$ if and only if $x^*$ is a degenerate steady state of any mass-action
 system defined by network $G$ for the total-constant vector $c^*=Wx^*$ and a choice of rate-constant vector $\kappa^*$ for which $a^*= \bar a(\kappa^*)$}. 
 Here, $\bar a(\kappa^*)$ refers to the map~\eqref{eq:ep}.
\end{remark}

\begin{remark}[Comparison with~\cite{CFMW}] \label{rmk:CFMW}
The parametrizations considered by Conradi, Feliu, Mincheva, and Wiuf 
did not allow the reaction rates $\kappa_i$ to depend on the $x_i$'s~\cite{CFMW}.
Specifically, their parametrizations
have the form
  $(\kappa,\hat x) \mapsto (\kappa,(\hat x,\Phi(\kappa, \hat x))$.  The resulting critical functions
  are denoted by ``$a(\hat x)$" in their work (notice that ``$a(\hat x)$" also depends on $\kappa$, and the ``$a$" is a different notion from our $a$ in $\augmentH$); see Examples~\ref{erk3} and \ref{CFMW612}.  
  Additionally, the critical functions in~\cite{CFMW} 
 simply arise from the case when the matrix $M(\kappa)$
in \eqref{eq:linearchange}
 is the identity matrix (and the 
 ``reparametrization'' map $\bar a$,
 in \eqref{eq:ep}, is the identity map).
 In summary, 
we allow for more general positive parametrizations and critical functions
than in~\cite{CFMW}, and hence Theorem~\ref{thm:c-general} below generalizes~\cite[Theorem 1]{CFMW}; our proof is just a translation of their arguments to our setting.  
\end{remark}

\begin{remark}[Existence of steady-state parametrizations] \label{rmk:when-parametrizations}
Steady-state parametrizations exist for many biological signaling
 networks~\cite{feliu-wiuf-ptm,messi,TG}.
 They can be computed 
 by following one of the procedures in the
 references above, or, as suggested in~\cite{CFMW,perspective}, 
 by using computer-algebra software
to solve the steady-state equations for all but $d$ variables (see, e.g., 
Example~\ref{CFMW612}), eliminating if possible all
  intermediates.   (Here $d$ is the number of conservation laws.)
\end{remark}

A network is {\em dissipative} if for all choices of rate constants and stoichiometric compatibility classes $\scc_c$, there exists a compact subset of $\scc_c$
which every trajectory beginning in 
$\scc_c$
eventually enters.   Every conservative network is dissipative~\cite[pg.\ 6]{CFMW}.  
\begin{theorem} \label{thm:c-general}
Under the notation and hypotheses of 
Definitions~\ref{def:parametrization-for-h} and~\ref{def:Critical},
 assume also that $G$ is a dissipative network without boundary steady states in any compatibility class.
\begin{enumerate}[(A)]
\item  {\bf Multistationarity.}
 $G$ is multistationary 
if there exists 
$(\hat a^*, \hat x^*) \in 
\mathbb{R}^{\hat m}_{>0} \times  \mathbb{R}^{\hat s}_{>0} $ 
such that 
\[
{\rm sign}(C(\hat a^*, \hat x^*)) = (-1)^{\mathrm{rank}(N)+1}~,
\]
where 
 $N$ denotes the stoichiometric matrix of $G$.
\item  {\bf Witness to multistationarity.}
 Every 
 $(\hat a^*, \hat x^*) \in 
\mathbb{R}^{\hat m}_{>0} \times  \mathbb{R}^{\hat s}_{>0} $
 with 
	${\rm sign} (C(\hat a^*, \hat x^*))=
	(-1)^{\mathrm{rank}(N)+1}$ 
		yields 
 a witness
to multistationarity $(\kappa^*, c^*)$ as follows.  
Let $(a^*,x^*)=\phi(\hat a^*, \hat x^*)$.  
Let $c^* = W x^*$
(so, $c^*$ is the total-constant vector defined by $x^*$,
where $W$ is the conservation-law matrix), and
let $\kappa^* \in \mathbb{R}_{>0}^m$ be such that 
$\bar a (\kappa^*)= a^*$. 
\item {\bf Monostationarity.}
$G$ is monostationary 
if ${\rm sign}(C(\hat a, \hat x)) = (-1)^{\mathrm{rank}(N)}$ for all 
$(\hat a^*, \hat x^*) \in \mathbb{R}^{\hat m}_{>0} \times  \mathbb{R}^{\hat s}_{>0} $. 
\end{enumerate} 
\end{theorem}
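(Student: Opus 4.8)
The plan is to reduce the statement to the degree-theoretic argument of Conradi, Feliu, Mincheva, and Wiuf~\cite{CFMW} by transferring everything from the effective steady-state function $\augmentH$ back to the augmented function $\F$ whose degree they compute. The three facts I would establish first are: (1) for $a^* = \bar a(\kappa^*)$, the functions $\augmentH$ with parameters $(c^*,a^*)$ and $\F$ with parameters $(c^*,\kappa^*)$ have exactly the same zero set in $\R^s_{>0}$; (2) the sign of $\det \mathrm{Jac}(\augmentH)$ at a zero agrees with the sign of $\det \mathrm{Jac}(\F)$ there; and (3) via the parametrization $\phi$, the critical function $C(\hat a, \hat x)$ records precisely the sign of $\det \mathrm{Jac}(\augmentH)$ at the corresponding steady state. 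Facts (1) and (2) both follow from the identity $\augmentH(x) = L(\kappa)\,\F(x)$ (at $a=\bar a(\kappa)$), where $L(\kappa)$ is the constant-in-$x$, invertible $s\times s$ matrix acting as $M(\kappa)$ on the rows indexed by $[s]\setminus I$ and as the identity on the rows indexed by $I$; since $\det L(\kappa) = \det M(\kappa) > 0$ by Definition~\ref{def:effective}(iii)(a), the zero sets coincide and the two Jacobian determinants differ by the positive factor $\det M(\kappa)$. Fact (3) is immediate from Definition~\ref{def:Critical}, once one notes that the $x$-Jacobian of $h_i(a,x)$ evaluated at $a=\bar a(\kappa)$ equals that of $\bar h_i$.

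Next I would record the degree computation. Because $G$ is dissipative and has no boundary steady states in any compatibility class, the arguments of~\cite{CFMW} (which treat exactly the case $M=\mathrm{Id}$, $\bar a = \mathrm{Id}$) show that, for every $\kappa^*\in\R^m_{>0}$ and $c^*\in\R^d_{>0}$, there is a bounded open set $\Omega\subset\R^s_{>0}$ containing all steady states, on whose boundary $\F$ does not vanish, and for which the Brouwer degree satisfies $\deg(\F,\Omega,0) = (-1)^{\mathrm{rank}(N)}$. Using the linear relation above and the standard fact $\deg(L\circ g,\Omega,0) = \mathrm{sign}(\det L)\,\deg(g,\Omega,0)$ for a linear isomorphism $L$, together with $\det L(\kappa^*)>0$ and the coincidence of zero sets from Fact (1), I obtain $\deg(\augmentH,\Omega,0) = (-1)^{\mathrm{rank}(N)}$ as well, at $a^*=\bar a(\kappa^*)$.

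With these in hand, I would derive all three parts from the principle that the total degree equals the sum of the local degrees at the isolated zeros, each nondegenerate zero contributing $\mathrm{sign}(\det\mathrm{Jac})$. For parts~(A) and~(B): given $(\hat a^*,\hat x^*)$ with $\mathrm{sign}(C(\hat a^*,\hat x^*)) = (-1)^{\mathrm{rank}(N)+1}$, set $(a^*,x^*)=\phi(\hat a^*,\hat x^*)$, $c^*=Wx^*$, and choose $\kappa^*$ with $\bar a(\kappa^*)=a^*$, possible by the surjectivity in Definition~\ref{def:effective}(ii). Then $x^*$ is a positive steady state (by the remark following Definition~\ref{def:parametrization-for-h}), it is nondegenerate since $C(\hat a^*,\hat x^*)\neq 0$, and its local degree is $(-1)^{\mathrm{rank}(N)+1}$. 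Were $x^*$ the only steady state in $\scc_{c^*}$, the total degree would equal $(-1)^{\mathrm{rank}(N)+1}\neq(-1)^{\mathrm{rank}(N)}$, a contradiction; hence a second positive steady state exists, so $(\kappa^*,c^*)$ witnesses multistationarity. For part~(C): if $\mathrm{sign}(C)=(-1)^{\mathrm{rank}(N)}$ everywhere, then for any $(\kappa^*,c^*)$ every positive steady state in $\scc_{c^*}$ (each captured by some $(\hat a^*,\hat x^*)$ through Definition~\ref{def:parametrization-for-h}(ii)) is nondegenerate with local degree $(-1)^{\mathrm{rank}(N)}$, so if there were $k$ of them the total degree would be $k\,(-1)^{\mathrm{rank}(N)}$, forcing $k=1$.

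The genuinely new content, and the step I expect to require the most care, is not the degree computation (inherited from~\cite{CFMW}) but the bookkeeping that legitimizes the reduction: verifying that the row operations encoded by $M(\kappa)$ together with the effective reparametrization $\bar a$ preserve both the zero locus and the sign of the Jacobian determinant, and that the parametrization $\phi$ is surjective onto the steady-state locus so that no positive steady state escapes the count in part~(C). The hypotheses $\det M(\kappa)>0$ and the surjectivity of $\bar a$ and of $\phi$ are exactly what make each of these checks go through, and isolating where each is used is the crux of transcribing the argument of~\cite{CFMW} to the present, more general setting.
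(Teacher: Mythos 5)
Your proposal is correct and follows essentially the same route as the paper: both arguments reduce to \cite[Theorem 1]{CFMW} by checking that the linear relation $h_{c,a}|_{a=\bar a(\kappa)}=\widetilde M(\kappa)\,f_{c,\kappa}$ (with $\det\widetilde M(\kappa)=\det M(\kappa)>0$) preserves the positive zero set and the sign of $\det\mathrm{Jac}$, and that $\phi$ surjects onto the steady-state locus so the critical function captures every steady state in part~(C). The only difference is presentational: the paper invokes \cite[Theorem 1]{CFMW} as a black box after verifying its two hypotheses, whereas you additionally unpack its internal Brouwer-degree computation (total degree $(-1)^{\mathrm{rank}(N)}$, sum of local degrees), which is sound but not new content.
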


\begin{proof} 
We begin with (A) and (B).  
Assume that 
 $(\hat a^*, \hat x^*) \in \mathbb{R}^{\hat m}_{>0} \times  \mathbb{R}^{\hat s}_{>0} $
satisfies
	${\rm sign} (C(\hat a^*, \hat x^*))=
	(-1)^{\mathrm{rank}(N)+1}$.
That is, 
	${\rm sign}  \left(\det {\rm Jac}(\augmentH)\right)|_{(a, x)=\phi(\hat a^*, \hat x^*)} =
	(-1)^{\mathrm{rank}(N)+1}$.

By \cite[Theorem 1]{CFMW}, $G$ is multistationary if there exist 
$\kappa^* \in \mathbb{R}^m_{>0}$ and $x^* \in \mathbb{R}^s_{>0}$ such that 
	\begin{enumerate}[(i)]
	\item 
$f_i|_{ \kappa=\kappa^*,~x=x^*} = 0$ for all $i \notin I$, and 
	\item 
	 ${\rm sign}  \left(\det {\rm Jac}(\F)\right)|_{\kappa=\kappa^*,x=x^*} =
	(-1)^{\mathrm{rank}(N)+1}$.  
	\end{enumerate}
Moreover, in this case, \cite[Theorem 1]{CFMW} gives the following witness to multistationarity: $c^*=Wx^*$ and $\kappa^*$.  
To use this result, let $(a^*,x^*)=\phi(\hat a^*, \hat x^*)$, and then define 
 $c^* = W x^*$
and pick $\kappa^* \in \mathbb{R}_{>0}^m$ such that 
$\bar a (\kappa^*)= a^*$.  Then we only need to show that conditions (i) and (ii) above hold.
	
To see (i), recall that the positive-determinant matrix $M(\kappa)$ transforms the 
$f_i$'s, for $i \notin I$, to the $\bar h_i$'s, as in~\eqref{eq:linearchange}.  So, we need to show 
that $\bar h_i|_{x=x^*,\kappa=\kappa^*}=0$ if $i \notin I$.  Indeed,
\[
	\bar h_i|_{x=x^*,\kappa=\kappa^*}
	~=~
	h_i|_{x=x^*,a=\bar a(\kappa^*)}
	~=~
	h_i|_{(a,x)=\phi(\hat a^*, \hat x^*)}
	~=~
	0~,		
\]
where the final equality comes from requirement (ii) for a positive parametrization (in Definition~\ref{def:parametrization-for-h}).

For (ii), we first note that~\eqref{eq:linearchange} implies the following:
\begin{align} \label{eq:tilde-M}	
(h_{c,a,1}, h_{c,a,2}, \dots, h_{c,a,s})^{\top}|_{a=\bar a}
\quad = \quad \widetilde M(\kappa) ~  (f_{c,\kappa,1}, f_{c,\kappa,2}, \dots, f_{c,\kappa,s})^{\top}~,
\end{align}
where $\widetilde M(\kappa)$
is the $(s \times s)$-matrix obtained from $M(\kappa)$, in \eqref{eq:linearchange},
 by inserting rows and columns corresponding to the indices $i_1, i_2, \dots, i_d$ such that row $i_k$ 
 and column $i_k$ are both the canonical basis vector $e_{i_k}$, for $k=1,2,\dots, d$. 
 Thus, $\det \widetilde M(\kappa) =\det M(\kappa)>0$ for any $\kappa \in \mathbb{R}^m_{>0}$, and so \eqref{eq:tilde-M} yields:
\begin{align*}
{\rm sign}  \left(\det {\rm Jac}(\F)\right)|_{\kappa=\kappa^*,x=x^*}
	~ &=~
{\rm sign}  \left(\det {\rm Jac}(\augmentH)\right)|_{(a,x)=(\bar a(\kappa^*),x^*)=\phi(\hat a^*, \hat x^*)} \\
	~ &=~
	{\rm sign} (C(\hat a^*, \hat x^*)) \\
	~ &=~
	(-1)^{\mathrm{rank}(N)+1}~,
\end{align*}
where the final equality is by hypothesis. Thus, (ii) holds.

For (C), assume that ${\rm sign}(C(\hat a, \hat x)) = (-1)^{\mathrm{rank}(N)}$ for all 
$(\hat a^*, \hat x^*) \in \mathbb{R}^{\hat m}_{>0} \times  \mathbb{R}^{\hat s}_{>0} $. 
Suppose that $\kappa^* \in \mathbb{R}^m_{>0}$ and $x^* \in \mathbb{R}^s_{>0}$ 
are such that $f_i|_{\kappa=\kappa^*,x=x^*}=0$ for all $i \notin I$.  By~\cite[Theorem 1]{CFMW},
we only need to show the following:
\begin{align} \label{eq:mono-sign}
{\rm sign}  \left(\det {\rm Jac}(\F)\right)|_{\kappa=\kappa^*,x=x^*} =
	(-1)^{\mathrm{rank}(N)}~.
\end{align}
To this end, let $(\hat a^*, \hat x^*)=\pi \circ \mu (\kappa^*, x^*)$, so that 
(using the commutative diagram following Definition~\ref{def:parametrization-for-h}) we have 
$\phi(\hat a^*, \hat x^*) = (\bar a(\kappa^*), x^*)$.  We now verify equality~\eqref{eq:mono-sign}:
\begin{align*}
{\rm sign}  \left(\det {\rm Jac}(\F)\right)|_{\kappa=\kappa^*,x=x^*} 
	~&=~
	{\rm sign}  \left(\det {\rm Jac}(\augmentH)\right)|_{(a,x)=(\bar a(\kappa^*),x^*)=\phi(\hat a^*, \hat x^*)} \\
	~ &=~
	{\rm sign} (C(\hat a^*, \hat x^*)) \\
	~ &=~
	(-1)^{\mathrm{rank}(N)}.
\end{align*}
\end{proof}


\begin{remark}\label{rmk:vertex}
Theorem~\ref{thm:c-general}(B) suggests a procedure for finding a witness to multistationarity, 
which relies on picking some $(\hat a^*, \hat x^*)$ with 
	${\rm sign} (C(\hat a^*, \hat x^*))=
	(-1)^{\mathrm{rank}(N)+1}$. 
There is no general method for picking such a 
vector 
$(\hat a^*, \hat x^*)$, but 
we can sometimes accomplish this via 
the Newton polytope of $C(\hat a, \hat x)$,
namely, when one of the vertices corresponds to a coefficient of $C(\hat a, \hat x)$ 
with the desired sign $(-1)^{\mathrm{rank}(N)+1}$
(recall Lemma~\ref{lem:NP}).  Indeed, we will
see in Section~\ref{sec:binomial} that 
this approach always succeeds for linearly binomial networks.
\end{remark}

\begin{remark} \label{rmk:stability}
Theorem~\ref{thm:c-general} can guarantee multiple steady states, but does not say anything about whether they are stable
\cite{CFMW}.
Also in that theorem, ``monostationarity''
in part (C) can be strengthened to state the existence of a unique positive steady state in each compatibility class~\cite[Theorem 1]{CFMW}.
\end{remark}

\begin{remark} \label{rmk:M-notation}
We use $M(\kappa)$ to denote, as in~\eqref{eq:linearchange}, the matrix transforming 
the $f_i$'s to the $\bar h_i$'s, 
while the authors of \cite{CFMW} use ``$M(x)$'' to denote the matrix ${\rm Jac}(\F)$, after substituting the parametrization.
\end{remark}

\begin{remark} \label{rmk:choose-M}
Our strategy for constructing a useful matrix $M(\kappa)$ as 
in~\eqref{eq:linearchange}, is to perform linear operations on the $f_i(x)$'s  to obtain as many binomials as possible (see, e.g., Example~\ref{erk}).
Additionally, we will show in Theorem~\ref{thm:binom} that
 for many biological signaling networks,
 a suitable $M(\kappa)$ exists so that in the resulting system $\augmentH$, every non-conservation-law equation is a binomial.
\end{remark}

\begin{remark} \label{rmk:hypotheses}
Theorem~\ref{thm:c-general} relies on degree theory, 
which is why two key hypotheses are required: being dissipative and 
having no boundary steady states.  A discussion on how to verify 
these two hypotheses is in~\cite[pp.\ 11-12]{CFMW}.  In the examples below,
the networks are conservative and thus dissipative, and checking that there are no 
boundary steady states can be done using, for instance, results from \cite{SS2010} or  \cite[Theorem~3.13]{messi}.
\end{remark}

\subsection{Examples} \label{sec:examples}
We present two examples.
In the first one, 
we will  see that the critical function arising from  
$\augmentH$ in Example \ref{erk2} is simpler than 
that  
allowed by Conradi, Feliu, Mincheva, and Wiuf~\cite{CFMW}.
We will also see in this example how to
use Theorem~\ref{thm:c-general} to obtain a witness to multistationarity.

\begin{example}[ERK network, continued]\label{erk3}
The critical function $C(\hat a, \hat x)$ derived by the positive parametrization in Example \ref{erk2} (recall 
that $(\hat a, \hat x)=(a_2, a_4, a_6, a_8, x)$) is a rational function, where
the denominator is the 
monomial \textcolor{black}{$x_{1}x_2x_3x_5x_9x_{10}x_{12}$}, 
and the numerator has total degree $11$ and $704$ terms. 

This network is conservative (see~\eqref{eq:cons-erk}) and hence dissipative. 
It is  straightforward to check (for instance, using criteria in \cite{SS2010}) that it has no boundary steady states. In fact, the ERK network is a MESSI network (see the definition in Section~\ref{sec:messi}) and the absence of boundary steady states is a direct and easy consequence of Theorem~3.13 in~\cite{messi}.
So we can apply Theorem~\ref{thm:c-general} to find a witness to  multistationarity. Accordingly, 
we compute ${\rm rank}(N)=s-d=12-3=9$, so the sign of interest is $(-1)^{{\rm rank}(N)+1}=1$. 
Thus, we must find $(\hat a^*, \hat x^*)$ for which 
 $C(\hat a^*, \hat x^*) >0$.    
 We find the following such point by
using a vertex of the Newton polytope of the numerator of $C$
that corresponds to a monomial with positive coefficient (recall Remark~\ref{rmk:vertex}), for instance:
\textcolor{black}{
\[(\hat a^*, \hat x^*) ~=~(a^*_2, a^*_4, a^*_6, a^*_8, x^*)~=~\left(\frac{1}{10}, 10, 10, \frac{1}{10}, \frac{1}{10}, 10, \frac{1}{10}, 10, 10, 10, \frac{1}{10}, 10, 10, 10, 10, \frac{1}{10}\right)~.\]
}
Letting $(a^*,x^*)=\phi(\hat a^*, \hat x^*)$, we have
\[a^*=\left( \frac{1}{100},  \frac{1}{10}, \frac{11}{10}, 10, \frac{1001}{100}, 10, \frac{101}{10}, \frac{1}{10}, 1010, 10, \frac{1}{10}, 1000, 10 \right)~,\]
and
\[c^* ~:=~Wx^*~=~ \left(\frac{703}{10},~~\frac{301}{10},~~\frac{301}{10}\right).\]
Thus, by Theorem~\ref{thm:c-general},
we obtain a witness to multistationarity via any choice of $\kappa^*$ for which $\bar a(\kappa^*) = a^*$, where $\bar a$ is defined in equation \eqref{eq:baraerk}.  
One such $\kappa^*$ is as follows:  
\begin{align*}
\kappa^*~&=~
\left(
\frac{1001}{10}, \;
 \frac{1001}{100} , \;
 \frac{1001}{100}, \; 
1, \;
 100, \; 
101000, \; 
 \frac{1}{1000},\;
 \frac{1}{10000},\;
 \frac{1}{1000}, \; 
 11,\;
 \frac{11}{1000},\;
 \frac{11}{1000}, \;  \right. \\
   & \quad \quad \quad \left.
\frac{1}{100},\; 
\frac{1}{10000}, \;
 \frac{101}{100000}, \; 
 200,\;
 10{\bf,}\;
10
\right)~.
\end{align*}
To confirm this witness, using $a^*$ and $c^*$ above, we approximately solve the polynomial system 
$h_{c,a}|_{(a,c)=(a^*, c^*)}=0$ and find seven real solutions for $x$.
Three of these are positive steady states:
\textcolor{black}{
 \begin{align*}
x^{(1)}~&\approx~
(1.18,~2.25,~0.46,~1.54,~1.56,~26.55,~0.016,~1.31,~5.81,~5.81,~26.52,~0.0034)~,
\\
x^{(2)}~&=~(0.1,~ 10,~ 0.1,~  10,~ 10,~ 10,~ 0.1,~ 10,~ 10,~ 10,~ 10,~ 0.1)~, \quad \quad {\rm and}
\\
x^{(3)}~&\approx~( 0.0027,~ 14.56,~ 0.0037,~ 14.87 ,~ 14.85 ,~ 0.38 ,~ 0.15,~ 15.00,~ 10.30,~   10.30,~ 0.39,~ 4.06)~.
\end{align*}
}
With an eye toward comparing our approach to that of~\cite{CFMW}, 
we now obtain another critical function, using the procedures in \cite{CFMW}.
Recall from Remark \ref{rmk:CFMW} that the parametrization in \cite{CFMW} has the form
  $(\kappa,\hat x) \mapsto (\kappa,(\hat x,\Phi(\kappa, \hat x))$.  Applying their method here,
we first solve for 
$9$ of the variables $x_i$ (namely, those with $i=4, \ldots,12$)
in the 9 non-conservation-law equations $f_{c,\kappa,i}=0$, for $i=4,\ldots,12$.  This yields a steady-state parametrization:
 \[
 \psi: \mathbb{R}^{21}_{>0} \rightarrow   \mathbb{R}_{>0}^{18}\times  \mathbb{R}_{>0}^{12}~,\] 
 of the form $\psi( \kappa, x_1, x_2, x_{3}) = (\kappa, x)$.  
Substituting this parametrization
into $\det {\rm Jac}({\F})$ gives
 a critical function which we denote by
$\widetilde{C}(v)$ (here we are writing $v=(\kappa, x_1, x_2, x_3)$).
The denominator of $\widetilde{C}(v)$ is a polynomial in $\mathbb{Q}[v]$
that is positive 
for every $\kappa \in \mathbb{R}^{18}_{>0}$ and every 
$(x_1, x_2, x_{3}) \in \mathbb{R}^3_{>0}$, and has total degree $33$ and $1750$ terms.
The numerator, also a polynomial 
in $\mathbb{Q}[v]$, has total degree $47$ and $246232$ terms. 

As seen in Table~\ref{tab:compare},
which compares the critical functions $C$ and $\widetilde{C}$, our critical function $C$ is simpler 
(fewer variables, with both numerator and denominator having lower degree and fewer terms).  
Moreover, the fact that the denominator of $C$ is a monomial makes it apparent that this denominator 
is positive on the positive orthant.  In contrast, the denominator of $\widetilde{C}$ has 1750 terms (which nonetheless are all positive).

\begin{table}
\renewcommand{\arraystretch}{1.3}
\centering
\begin{tabular}{@{}llll@{}}
\toprule
~ & $C$ & $\widetilde{C}$ \\
\hline
Numerator & & \\
\quad total degree & 11 & 47 \\
\quad number of terms & 704 & 246,232 \\
Denominator & & \\
\quad total degree & 7 & 33 \\
\quad number of terms & 1 & 1750 \\
\bottomrule
\end{tabular}
\vspace{.5cm}
\caption{Comparison for Example~\ref{erk} of the critical functions $C$ and $\widetilde{C}$.  
We see that $C$ is simpler and hence easier to analyze. In particular, its denominator is a monomial and thus is easily seen to be positive on the positive orthant.}
\label{tab:compare}
\end{table}
\end{example}

\begin{example}[Phosphorylation of two substrates, continued]
\label{CFMW612}
Revisiting the network in Example~\ref{ex:CFMW611},
we exhibit two critical functions $C$ and $\widetilde{C}$:
 one that 
 arises from a simplification $\augmentH$, and then the one that
Conradi, Feliu, Mincheva, and Wiuf presented~\cite{CFMW}. We will see again that $C$ is much simpler than $\widetilde{C}$. 

\smallskip

{\bf The first critical function.}
Consider the following upper-triangular matrix:
{\footnotesize
\[M(\kappa)=
\left(\begin{array}{cccccccccc}
\frac{1}{\kappa_3}&  0& 0& 0  & \frac{1}{\kappa_3}&0 \\
 0&   \frac{1}{\kappa_9}& 0& 0  & 0& \frac{1}{\kappa_9}\\
 0 & 0 &  \frac{1}{\kappa_1}& 0 & 0 & 0\\
 0 &  0 & 0     &  \frac{1}{\kappa_7} & 0 & 0\\
 0 & 0 & 0   & 0   &  \frac{1}{\kappa_4}& 0\\
  0     &  0 & 0   & 0   & 0       &  \frac{1}{\kappa_{10}}
\end{array}
\right).
\]
}
Note that ${\rm det} M(\kappa)>0$ for all $\kappa\in \mathbb{R}^{10}_{>0}$. 
Following~\eqref{eq:linearchange}--\eqref{consys-h}, the resulting system is
\begin{center}
\begin{tabular}{ll}
$h_{c, a,1}  =x_1 + x_7 + x_8 - c_1$, &
$h_{c, a,2}  =  x_2  + x_9 + x_{10} - c_2$,\\
$h_{c, a,3}=   x_3 + x_4 + x_7 + x_9 - c_3$,&
$h_{c, a,4}=  x_7 -a_1x_9$,\\
$h_{c, a,5} =  x_5+ x_6 + x_8 + x_{10} - c_4$,&
$h_{c, a,6} =   x_8-a_2x_{10}$,\\
$h_{c, a,7} =  x_1x_3-a_3x_7$,&
$h_{c, a,8} =   x_1x_5-a_4x_8$,\\
$h_{c, a,9}=   x_2x_4-a_5x_9$,&
$h_{c, a,10} =   x_2x_6-a_6x_{10}$, 
\end{tabular}
\end{center}
where the effective parameters are: 
\begin{center}
\begin{tabular}{llllll}
$\bar a_1=\frac{\kappa_6}{\kappa_3}$,& $\bar a_2=\frac{\kappa_{12}}{\kappa_9}$,& $\bar a_3=
\frac{\kappa_2+\kappa_3}{\kappa_1}$,&$\bar a_4=\frac{\kappa_8+\kappa_9}{\kappa_7}$, & $\bar a_5=\frac{\kappa_5+\kappa_6}{\kappa_4}$, & 
$\bar a_6=\frac{\kappa_{11}+\kappa_{12}}{\kappa_{10}}$~.
\end{tabular}
\end{center}
Notice that the  map 
$\mathbb{R}^{12}_{>0} \to \mathbb{R}^{6}_{>0}$ given by 
$ \kappa\mapsto \bar a(\kappa)$ is surjective. 
The Jacobian matrix of $\augmentH$ is 
{\footnotesize
\[{\rm Jac}(\augmentH)=
\left(\begin{array}{cccccccccc}
1 & 0 & 0& 0& 0 & 0&1 & 1 & 0& 0\\
0 & 1 & 0& 0& 0& 0 & 0& 0 & 1& 1  \\
0 & 0 & 1 & 1 & 0 & 0 & 1 &0 & 1 &0\\
0& 0 & 0& 0& 0& 0& 1& 0  & -a_1&0 \\
0& 0& 0& 0& 1& 1& 0 & 1 &0 &1\\
0& 0 & 0& 0& 0& 0& 0& 1  & 0&-a_2 \\
x_3 & 0 & x_1 & 0 & 0 & 0 & -a_3& 0 & 0 & 0\\
x_5 & 0 & 0    & 0 & x_1 & 0 & 0     & -a_4 & 0 & 0\\
0     & x_4 & 0    & x_2 & 0 & 0 & 0   & 0   & -a_5 & 0\\
0     & x_6& 0    &  0     & 0 & x_2 & 0   & 0   & 0       & -a_6 
\end{array}
\right).
\]
}By solving the equations 
$h_{c, a,4}=h_{c, a,6}=h_{c, a,7}=h_{c, a,8}=h_{c, a,9}=h_{c, a,10}=0$ 
in the unknowns $a_1, a_2, a_3, a_4, a_5, a_6$, we obtain
\begin{center}
\begin{tabular}{llllll}
$a_1=\frac{x_7}{x_9}$, &
$a_2=\frac{x_8}{x_{10}}$, & 
$a_3=\frac{x_1x_3}{x_7}$, & 
$a_4=\frac{x_1x_5}{x_8}$, & 
$a_5=\frac{x_2x_4}{x_9}$,&
$a_6=\frac{x_2x_6}{x_{10}}$.
\end{tabular}
\end{center}
This yields a steady-state parametrization
$\phi: \mathbb{R}^{10}_{>0} \rightarrow   \mathbb{R}_{>0}^{6}\times  \mathbb{R}_{>0}^{10}$, 
where 
$\hat a=\emptyset$ and $\hat x=x$, so we denote  
 $\phi(\hat a, \hat x)$ by $\phi(x)=(a_1, \dots, a_6,x)$, which is defined as follows:  
\begin{align} \label{eq:CFMW-param-2}
\left(a_1=\frac{x_7}{x_9},~a_2=\frac{x_8}{x_{10}},~a_3=\frac{x_1x_3}{x_7} ,
~a_4=\frac{x_1x_5}{x_8} ,~a_5=\frac{x_2x_4}{x_9} , ~a_6=\frac{x_2x_6}{x_{10}}; ~ x \right)~.
\end{align}
We substitute the parametrization $\phi(x)$ into $\det {\rm Jac}(\augmentH)$ to 
obtain $C(x)=-\frac{x_1x_2}{x_9x_{10}}B(x)$, where $B(x)$ is the following polynomial:
\begin{center}
{$x_1x_{10}x_2x_3+x_1x_{10}x_2x_4+x_1x_{10}x_2x_7+x_1x_{10}x_2x_9+x_1x_{10}x_3x_6
+x_1x_{10}x_4x_6+x_1x_{10}x_4x_9+x_1x_{10}x_6x_7+x_1x_{10}x_6x_9+x_1x_2x_3x_5
+x_1x_2x_3x_6+x_1x_2x_3x_8+x_1x_2x_4x_5+x_1x_2x_4x_6+x_1x_2x_4x_8+
x_1x_2x_5x_7+x_1x_2x_5x_9+x_1x_2x_6x_7+x_1x_2x_6x_9+x_1x_2x_7x_8+x_1x_2x_8x_9+
x_1x_4x_5x_9+x_1x_4x_6x_9+x_1x_4x_8x_9+x_{10}x_2x_3x_7+
x_{10}x_3x_6x_7\underline{-x_{10}x_4x_5x_7}+x_2x_3x_5x_7+x_2x_3x_5x_8+x_2x_3x_6x_7+x_2x_3x_7x_8+
x_2x_4x_5x_8+x_2x_5x_7x_8+x_2x_5x_8x_9\underline{-x_3x_6x_8x_9}+x_4x_5x_8x_9
$.
}
\end{center}
We see that 
$B(x)$ is square-free and homogeneous, so every monomial of $B(x)$ defines a vertex of NP$(B(x))$.  So, as $B(x)$ has both positive and negative terms (underlined), 
we conclude from Lemma~\ref{lem:NP} that $B(x)$, and thus $C(x)$ too, changes sign.
So, by Theorem~\ref{thm:c-general} (this network is conservative and hence dissipative, and 
has no boundary steady states~\cite{CFMW,messi}), the network is multistationary. 
We will see that such properties of $B(x)$ -- having square-free monomials and homogeneous -- 
come from the fact that every non-conservation-law $h_{c,a,i}$
is a binomial (Theorem~\ref{thm:NP}).

\smallskip

{\bf The second critical function (from \cite{CFMW}).}
Using the equations for $\F$ given in Example~\ref{ex:CFMW611}, the resulting Jacobian matrix is as follows: 
{\footnotesize
\[{\rm Jac} ({\F})=
\left(\begin{array}{cccccccccc}
1 & 0 & 0& 0& 0 & 0& 1 & 1 & 0& 0\\
0 & 1 & 0& 0& 0& 0  &  0& 0& 1 & 1 \\
0 & 0 & 1 & 1 & 0 & 0 & 1 &0 & 1& 0\\
0 & -\kappa_4x_4 & 0 & -\kappa_4x_2 & 0&  0 & \kappa_3 &0&\kappa_5&0\\
0 & 0 & 0 & 0 & 1 & 1 & 0 & 1 & 0 & 1\\
0 & -\kappa_{10}x_6 & 0 &0 & 0 & -\kappa_{10}x_2 & 0  &\kappa_9& 0 &\kappa_{11}\\
\kappa_1x_3 & 0 & \kappa_1x_1 & 0 & 0 & 0 &  -\kappa_2-\kappa_3 & 0& 0& 0\\
\kappa_7x_5 & 0 & 0 & 0 &  \kappa_7x_1 & 0 & 0  & -\kappa_8-\kappa_9 & 0 & 0\\
0 & \kappa_4x_4 & 0 & \kappa_4x_2 & 0  & 0 & 0 & 0 &-\kappa_5-\kappa_6 & 0   \\
0 & \kappa_{10}x_6 & 0 & 0& 0 &  \kappa_{10}x_2 & 0 & 0 & 0& -\kappa_{11}-\kappa_{12}\\
\end{array}
\right).
\]
}%
In \cite{CFMW}, a steady state parametrization of the form
  $(\kappa,\hat x) \mapsto (\kappa,(\hat x,\Phi(\kappa, \hat x))$, where $\hat x=(x_1, x_2, x_3, x_5)$, is derived as follows.  
The equations $f_{c,\kappa,i}=0$ $(i=4, 6, 7, 8, 9, 10)$
are solved for the unknowns $x_i$ $(i=4, 6, 7, 8, 9, 10)$,
which yields a steady-state parametrization,
 \[
 \psi : \mathbb{R}^{16}_{>0} \longrightarrow   \mathbb{R}_{>0}^{12}\times  \mathbb{R}_{>0}^{10}~,\] 
 where 
 $\psi( \kappa; x_1, x_2, x_3, x_5)$ is defined as
{\footnotesize \[\left(\kappa; ~x_1, x_2, x_3, \frac{\kappa_1\kappa_3(\kappa_5+\kappa_6)x_1x_3}{\kappa_4\kappa_6(\kappa_2+\kappa_3)x_2}, 
x_5, \frac{(\kappa_{11}+\kappa_{12})\kappa_7\kappa_9x_1x_5}{(\kappa_8+\kappa_9)\kappa_{10}\kappa_{12}x_2}, 
\frac{\kappa_1x_1x_3}{\kappa_2+\kappa_3},
\frac{\kappa_7x_1x_5}{\kappa_8+\kappa_9},
\frac{\kappa_1\kappa_3x_1x_3}{(\kappa_2+\kappa_3)\kappa_6},
\frac{\kappa_1\kappa_7\kappa_9x_1x_5}{(\kappa_8+\kappa_9)\kappa_{12}}
\right)~.
\]
}
Substituting 
$(\kappa, x)=\psi(\kappa, x_1, x_2, x_3, x_5)$ into $\det {\rm Jac} ({\F})$ yields 
 the critical function 
$\widetilde{C}(\kappa; x_1, x_2, x_3, x_5)$, which is exactly the function ``$a(\hat{x})$" in \cite[\S 6 in Supplementary Information]{CFMW}.  
This critical function
$\widetilde{C}$ 
is a rational function, and the denominator is $(\kappa_2+\kappa_3)\kappa_6(\kappa_8+\kappa_9)\kappa_{12}x_2$ and thus is positive for every 
choice of positive $(\kappa; x_1, x_2, x_3, x_5)\in \mathbb{R}^{16}_{>0}$.
The numerator is a polynomial 
in $\mathbb{Q}[\kappa; x_1,x_2,x_3,x_5]$ 
with total degree $15$ and $225$ terms. 
Conradi, Feliu, Mincheva, and Wiuf showed that $\widetilde{C}$ changes sign, and so again by 
Theorem~\ref{thm:c-general} 
(or \cite[Corollary 2]{CFMW}) the network is multistationary.


We compare the two critical functions $C$ and  $\widetilde{C}$ described above:
\begin{enumerate}
\item $C$ is simpler than $\widetilde{C}$. The numerator of $C$ has $36$ terms while that of $\widetilde{C}$ has $225$ terms. 
\item Unlike the numerator of $\widetilde{C}$, the numerator of $C$ (namely, $B(x)$)
is square-free and homogeneous. 
So, to conclude that the network is multistationary, 
we only had to examine the signs in $B(x)$, whereas in~\cite{CFMW} 
a more careful examination of the Newton polytope of the numerator of $\widetilde{C}$ had to be undertaken.
\item
The critical function $\widetilde{C}$ contains the rate-constant vector $\kappa$, 
so the authors of~\cite{CFMW}
used the structure of $\widetilde{C}$ to derive
necessary and sufficient
conditions on $\kappa$ for multistationarity -- in other words, they found the multistationary region of parameter space.  
In contrast, $C$ does not contain $\kappa$, 
so we can not readily use $C$ to find the multistationary region.
Nonetheless, we can use $C$ to find a witness to multistationarity (by Theorem~\ref{thm:c-general}).
\end{enumerate}
\end{example}

\subsection{Critical functions and discriminants} \label{sec:discriminant}

For readers familiar with real algebraic geometry,  we now explain the relationship between a critical function 
$C(\hat a, \hat x)$ and the mixed discriminant of the polynomial equation system $\augmentH(x)=0$ \cite{GKZ}. Let $I$ be the ideal in the polynomial ring
$\mathbb Q[c,a,x_1,x_1^{-1}, \dots, x_s, x_s^{-1},y_1, y_1^{-1}, \dots, y_s, y_s^{-1}]$ generated by $h_{c,a,1}(x), \dots, h_{c,a,s}(x)$ and the equations
expressing that $y=(y_1, \dots, y_s)$ lies in the kernel of the Jacobian matrix
${\rm Jac}(\augmentH)$ with respect to the $x$ variables. Note that the vanishing of these equations for given values $(c^*,a^*)$ of $(c,a)$ at $x^*, y^*$ imply that $\det \left( {\rm Jac}(h_{c^*,a^*})\right)(x^*)=0$ and so $x^*$ is a non-simple common root of $h_{c^*,a^*}$.
If the elimination ideal
$I ~\cap~ {\mathbb Q}[c,a]$
	has codimension one, we call 
	any generator of this ideal a 
	 {\em mixed discriminant} of the family $h_{c,a}(x)$, and we denote it by $D(c,a)$. 
	 So, 
 $D(c^*,a^*)=0$ whenever $h_{c^*,a^*}$ has a multiple root over $\mathbb{C}^*$ where all maximal minors of size $(s-1)$ of the Jacobian matrix are nonzero.  As a real positive root $x^*$ of $h_{c^*,a^*}(x)$ is a real positive root of $f_{k^*}(x)=0$ when $\bar a(k^*) = a^*$,  and the matrix $M(\kappa^*)$ is invertible,  such  a positive multiple root $x^*$ of $h_{c^*,a^*}$ is a degenerate steady state of the dynamical system $\dot{x}=f_{\kappa^*}(x)$, and in this case $D(c^*,a^*)=0$.

Assume that the codimension of the elimination ideal is one and
 consider the {\em discriminant locus}, i.e., the set of pairs of vectors of parameters $(c^*,a^*)$ where the discriminant vanishes ($D(c^*,a^*)=0$). Consider also
 the {\em critical locus}, i.e., the set of pairs $(\hat a, \hat x)$ where 
the critical function vanishes ($C(\hat a, \hat x)=0$).  We relate these
two conditions in Proposition~\ref{prop:discriminant} below.
We begin with a simple example.

\begin{example} \label{ex:1-species-net}
Consider the following network~\cite{joshi2013complete}, where
we set two of the rate constants to 1 to simplify the analyses:
\begin{align}\label{eq:net-1-species}
\begin{tabular}{cc}
\ce{0
<=>[1][\kappa]
A
}, &
\ce{2A
->[1]
3A
}~.
\end{tabular}
\end{align}
The resulting mass-action kinetics ODE is: 
	\begin{align}\label{eq:ODE-1-species}
	\frac{dx_A}{dt}~=~  x_A^2 - \kappa x_A +1~,
	\end{align}
and the discriminant equals $\kappa^2-4$.  We obtain, from the ODE~\eqref{eq:ODE-1-species}, 
the steady-state parametrization $\phi: \mathbb{R}_{>0} \to \mathbb{R}^2_{>0}$ given by: 
\begin{align}  \label{eq:param-1-species}
 x_A ~\mapsto ~\left( x_A, ~ \kappa=\frac{x_A^2+1}{x_A}\right)~.
\end{align}
(Here, $\bar a$ is the identity map  $\bar a(\kappa)=\kappa$,  and $\hat a=\emptyset, \hat x=x_A$.)
This steady-state parametrization is depicted in Figure~\ref{fig:param}, 
along with the unique degenerate steady state (when $(x_A,\kappa)=(1,2)$) 
and the corresponding projections onto the critical locus ($x_A=1$)
and
 onto the discriminant locus ($\kappa =2$). The second point on the discriminant locus, $\kappa=-2$, is not shown.
\end{example}

\begin{center}
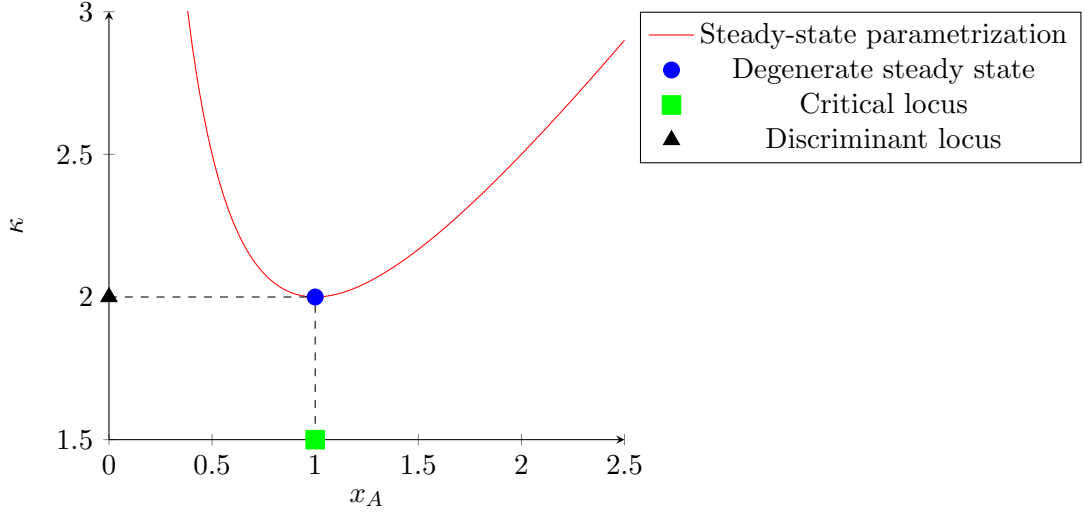
\begin{figure}[ht]
\begin{tikzpicture}
\begin{axis}[legend pos=outer north east,
    ymin=1.5, ymax=3,
    axis lines = left,
    xlabel = $x_A$,
    ylabel = {$\kappa$},
]
\addplot[
	domain=0:2.5, 
    samples=100, 
    color=red
    ]
{x+1/x};
\addlegendentry{Steady-state parametrization}
\addplot[
    only marks,
    mark=oplus*,
    mark size=3pt,
    color=blue
    ]
    coordinates {
    (1,2)    };
\addlegendentry{Degenerate steady state}
\addplot[
    only marks,
    mark=square*,
     mark size=3.5pt,
   color=green
    ]
    coordinates {
    (1,1.5)    };
\addlegendentry{Critical locus}
\addplot[
    only marks,
    mark size=3.5pt,
    mark=triangle*,
    color=black
       ]
    coordinates {
    (0,2)    };
\addlegendentry{Discriminant locus}
\addplot [domain=0:1,samples=2, dashed] {2};
\addplot +[mark=none,dashed,color=black] coordinates {(1,1.5) (1, 2)};
\end{axis}
\end{tikzpicture}
\caption{The image of the
steady-state parametrization~\eqref{eq:param-1-species}
for network~\eqref{eq:net-1-species}. 
Also shown is the unique degenerate steady state
and its projections to the critical locus (where $C=0$)
 and discriminant locus (where $D=0$).
  \label{fig:param}}
\end{figure}
\end{center}

We see in Figure~\ref{fig:param} that the critical locus is a projection of the degenerate steady states, 
and the discriminant locus is a subset of 
a similar projection.  
The first observation -- viewing the critical locus as a projection of the degenerate steady states --
 gives an {\em intrinsic definition of the critical locus}, which does not depend on the specific parametrization.  
 As for the second observation, pertaining to the discriminant locus, we generalize this in 
Proposition~\ref{prop:discriminant}.


\begin{proposition}[$C=0$ implies $D=0$] \label{prop:discriminant}
Consider
a network $G$ with conservation-law matrix~$W$ and 
simplified system $\augmentH(x)$ as in Definition \ref{def:effective}.
Assume $I \cap \mathbb Q [c,a]$ has codimension one, and
let $D(c,a)$ denote the mixed discriminant.
	Consider, as in Definition~\ref{def:parametrization-for-h}, a positive parametrization 
	$\phi: \mathbb{R}^{\hat{m}}_{>0}\times  \mathbb{R}_{>0}^{\hat s}\rightarrow  \mathbb{R}_{>0}^{\bar m}\times  \mathbb{R}_{>0}^{s}$, 
	denoted by $(\hat a, \hat x) \mapsto \phi(\hat a, \hat x)$, with respect to $\augmentH$.
	Let $C(\hat a, \hat x)$ denote the resulting critical function. 
	Then, for $(\hat a^*, \hat x^*) \in  \mathbb{R}^{\hat{m}}_{>0}\times  \mathbb{R}_{>0}^{\hat s}$, if $C(\hat a^*, \hat x^*)=0$ and any  minor of the Jacobian matrix ${\rm Jac}(h_{\hat c^*, \hat a^*})(\hat x^*)$ of size $(s-1)$ is nonzero, then 
	$D( c^*, a^*)=0$, where $(a^*, x^*)=\phi(\hat a^*, \hat x^*)$ and 
	$c^*=W x^*$.
	\end{proposition}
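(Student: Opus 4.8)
The plan is to exhibit a single point $(c^*,a^*,x^*,y^*)$ lying on the variety of the ideal $I$ and then to exploit that the mixed discriminant $D(c,a)$, being a generator of $I\cap\mathbb{Q}[c,a]$, belongs to $I$: once every generator of $I$ vanishes at our point, so does $D$, and since $D$ involves only $c$ and $a$ this yields $D(c^*,a^*)=0$. Throughout, set $(a^*,x^*)=\phi(\hat a^*,\hat x^*)\in\mathbb{R}^{\bar m}_{>0}\times\mathbb{R}^s_{>0}$ and $c^*=Wx^*$, and write $J^*:=\mathrm{Jac}(\augmentH)|_{(a^*,x^*)}$ for the $s\times s$ Jacobian (with respect to $x$) whose determinant is precisely $C(\hat a^*,\hat x^*)$; thus the hypothesis $C(\hat a^*,\hat x^*)=0$ says exactly that $J^*$ is singular.

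First I would verify that the polynomial generators $h_{c^*,a^*,1}(x^*),\dots,h_{c^*,a^*,s}(x^*)$ of $I$ all vanish. For $i$ outside the index set $I$ this is requirement~(ii) of Definition~\ref{def:parametrization-for-h} for the positive parametrization, applied to $(a^*,x^*)=\phi(\hat a^*,\hat x^*)$; for $i=i_k$ in the index set it is the identity $(Wx^*-c^*)_k=0$, which holds because $c^*=Wx^*$ (this is the remark immediately following Definition~\ref{def:parametrization-for-h}).

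The main step, and the one I expect to be the real obstacle, is to manufacture a kernel vector $y^*$ with all coordinates nonzero, so that $(c^*,a^*,x^*,y^*)$ genuinely lies in the torus on which $I$ is defined. This is exactly where the hypothesis that every $(s-1)$-minor of $J^*$ is nonzero enters. Since one such minor is nonzero we have $\mathrm{rank}(J^*)\ge s-1$, and $\det J^*=0$ gives $\mathrm{rank}(J^*)\le s-1$, so $\mathrm{rank}(J^*)=s-1$ and $\ker J^*$ is one-dimensional. Deleting any fixed row of $J^*$ produces an $(s-1)\times s$ matrix with the same kernel, and the cofactor (Cramer) formula spans that kernel by the vector whose $j$-th entry is, up to sign, the minor obtained by deleting that row together with column $j$. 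Each such entry is one of the $(s-1)$-minors of $J^*$, hence nonzero by hypothesis; taking $y^*$ to be this vector, we obtain $y^*\in(\mathbb{R}^*)^s$ with $J^*y^*=0$, so $y^*$ satisfies the generators of $I$ expressing membership in $\ker\mathrm{Jac}(\augmentH)$.

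Finally I would assemble these observations. Because $x^*\in\mathbb{R}^s_{>0}$ and $y^*\in(\mathbb{R}^*)^s$, the point $(c^*,a^*,x^*,y^*)$ lies in the torus, and evaluation at it is a well-defined homomorphism on the Laurent ring $\mathbb{Q}[c,a,x^{\pm},y^{\pm}]$ in which $I$ sits. Every generator of $I$ vanishes at the point, hence so does every element of $I$; in particular $D$ vanishes, and as $D\in\mathbb{Q}[c,a]$ we conclude $D(c^*,a^*)=0$. The delicate point that makes the whole argument run is the nonvanishing of all coordinates of $y^*$: without the hypothesis that all $(s-1)$-minors are nonzero, the kernel vector could acquire a zero coordinate, the point would fail to lie in the torus, and it would no longer define a zero of the Laurent-ideal $I$ — which is precisely the degenerate situation excluded in the definition of the mixed discriminant.
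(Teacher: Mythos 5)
Your proof is correct and follows essentially the same route as the paper's: the paper's own argument simply observes that $x^*$ is a multiple positive root of $h_{c^*,a^*}$ and then invokes the definition of the mixed discriminant together with the minor hypothesis. You additionally supply the step the paper leaves implicit --- the adjugate/cofactor construction of a kernel vector $y^*$ with all coordinates nonzero, which is exactly what places the point $(c^*,a^*,x^*,y^*)$ in the torus where the Laurent ideal $I$ lives and hence forces $D\in I\cap\mathbb{Q}[c,a]$ to vanish --- so your write-up is, if anything, more complete than the one in the paper.
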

\begin{proof} 
If $C(\hat a^*, \hat x^*)=0$, 
then, by definition,
$ \left(\det {\rm Jac}(\augmentH)\right)|_{(a, x)=\phi(\hat a^*, \hat x^*)}=0$.
Let $(a^*, x^*)=\phi(\hat a^*, \hat x^*)$. 
Then, 
for $c^*=W x^*$, 
the system $h_{c^*,a^*}=0$ has a multiple root (namely, $x^*$).
So, by the definition of the mixed discriminant and the assumptions about the Jacobian matrix ${\rm Jac}(h_{\hat c^*, \hat a^*})(\hat x^*)$, we have 
$D(c^*, a^*)=0$.  
\end{proof}
The converse of Proposition~\ref{prop:discriminant} does not hold, as we see in the following example.

\begin{example}[Converse of Proposition~\ref{prop:discriminant} is false] \label{ex:discriminant-converse}
We revisit the network in Examples~\ref{ex:CFMW611} and \ref{CFMW612}. 
For the following choice of $\kappa^*$ and $c^*$:
	\begin{align*}
\kappa^*~&=~ 
\left(1, ~  \frac{1}{20},  ~ \frac{1}{20}, ~  1, ~  \frac{19}{2}, ~  \frac{1}{2}, ~ 1,~   5,~  5,~ 6,~ \frac{926}{823}, ~  \frac{4630}{823}\right)~ \quad {\rm and}\\
c^*~&=~
\left(  12, ~  \frac{2675}{926},  ~22,  ~ \frac{11935}{926} \right)~,
	\end{align*}
there are two positive steady states:
 \begin{align*}
x^{(1)} ~&=~ 
	\left( 1,~  1,~  1,~  10,~  10,~   1,~  10,~  1,~   1,~   \frac{823}{926} \right) \quad {\rm and }\\
x^{(2)}~&\approx~ \left(  2.48, ~ 0.60 ,~  0.31,~  13.08, ~6.84 , ~ 2.81 , ~7.82 , ~1.70,~ 0.78, ~ 1.51 \right)~.
\end{align*}
The first steady state, $x^{(1)}$, is degenerate, while the second, $x^{(2)}$, is nondegenerate.
Thus, for the parametrization~\eqref{eq:CFMW-param-2}, which we denote by $x \mapsto \phi(x)$, it follows that 
$C(x^{(1)})=0$ (because $x^{(1)}$ is degenerate).
Let $(a^{(1)}, x^{(1)})=\phi(x^{(1)})$ and $(a^{(2)}, x^{(2)})=\phi(x^{(2)})$.
 So, 
\[
D\left(W x^{(2)},  a^{(2)} \right) 
~=~
D\left( W x^{(1)}, a^{(1)}\right)
~=~
D\left( c^*, a^{(1)} \right)=0~,
\]
where we are using the fact that $a^{(1)}=a^{(2)}$
and
$W x^{(1)}=W x^{(2)}=c^*$. 
However, the second steady state, $x^{(2)}$, is nondegenerate, so
although $D\left(W x^{(2)},  a^{(2)} \right)=0$,
we nonetheless have $C(x^{(2)}) \neq 0$.
\end{example}

\section{Multistationarity for linearly binomial networks} \label{sec:binomial}
This section focuses on a class of networks that
appear surprisingly often in applications.  We called these 
networks {\em linearly binomial networks} in Definition~\ref{def:lbin} 
because their steady-state equations are equivalent to 
binomial equations, and these binomial equations are related to the polynomials arising from mass-action kinetics linearly  via multiplication by a matrix $M(\kappa)$. 
We prove for linearly binomial networks 
that multistationarity can be decided simply from inspecting the critical function
(Theorem~\ref{thm:NP}) and
that the regions of parameter space
allowing for multistationarity that we find via degree theory and the use of effective parameters are full-dimensional  
(Theorem~\ref{thm:openmap}). 

It follows from Definition~\ref{def:lbin} that
 {\em linearly binomial networks} 
are the networks for which the 
augmented system $f_{c,\kappa}$ can be simplified to some
$h_{c,a}$,
as in~\eqref{consys-h},
with the following form: 
\begin{equation}\label{toric}
\augmentH(x)_i=
\begin{cases}
 x^{\gamma_k} - a_k\cdot x^{\delta_k}&~\text{if}~i 
 	 ~=~j_k \in \{j_1,j_2,\dots, j_{s-d}\} \quad \text{(i.e., } i \notin I\text{)}\\
(Wx-c)_k &~\text{if}~i =i_k \in I~.
\end{cases}
\end{equation}
where for $k  = 1,2, \dots,  s-d$, 
the coefficient $a_k$ 
is an indeterminate arising from an effective parameter $\bar a_i(\kappa)$,
and
both  $x^{\gamma_k}$ and $x^{\delta_k}$  are monomials. 
The steady-state equations can be transformed into binomial equations by linear operations, as in \eqref{eq:linearchange}, 
via a matrix $M(\kappa)$ which has positive determinant for all $\kappa \in \mathbb{R}^m_{>0} $.

\begin{remark} \label{rem:tss}
Mass-action systems arising from linearly binomial networks have ``toric steady states'', as defined in~\cite{TSS}, 
for every choice of positive rate constants.  For networks with toric steady states, analyzing and finding witnesses for multistationarity has been investigated~\cite{signs,messi,TSS}.
\end{remark}

We use the binomials in $(\ref{toric})$ to solve for the indeterminates $a_k$:
\begin{equation*} 
a_k~=~x^{\gamma_k-\delta_k},\;\;\;\;\;k~=~1,2, \dots, s-d~,
\end{equation*}
and this yields a steady-state parametrization $\phi(\hat a, \hat x)$, where $\hat a=\emptyset$ and $\hat x=x$:
\begin{align} \label{eq:phia}
	\phi:~ \mathbb{R}^{s}_{>0} &~\rightarrow~  \mathbb{R}_{>0}^{s-d}\times  \mathbb{R}_{>0}^{s}~\\ \nonumber
		(\hat a, \hat x) & ~\mapsto~ (\phi_a(x),x)
	\end{align}
given by
$\phi_a(x)~:=~\left( x^{\gamma_{1}-\delta_1}, ~ x^{\gamma_{2} -\delta_2}, ~\ldots~,~  x^{\gamma_{s-d}-\delta_{s-d}} \right)$.
The parametrization $\phi(\hat a, \hat x)$, which we denote by $\phi(x)$, 
yields the following critical function:
	\begin{equation}\label{ecf}
C(x)~ =~\left(\det {\rm Jac}({\augmentH})\right)|_{a_{k}= x^{\gamma_{k}-\delta_k}, \;\;\;\;k~=1, \ldots, s-d}~.
	\end{equation}

\begin{example}[Phosphorylation of two substrates, continued] \label{ex:2-phos-continued}  
The network describing phosphorylation/dephosphorylation of two substrates, from 
Examples~\ref{ex:CFMW611} and~\ref{CFMW612},
is linearly binomial.  This can be seen from the system $\augmentH$ given in Example~\ref{CFMW612}.  
Moreover, the steady-state parametrization given there in~\eqref{eq:CFMW-param-2}, has the form in~\eqref{ecf}.
\end{example}

\begin{example}[ERK network, continued] \label{ex:eek-continued}
The ERK network from Example~\ref{erk} is MESSI (see Section \ref{sec:defmessi}) but not linearly binomial. 
This result can be checked using results from~\cite{messi}.
\end{example}

\subsection{Establishing multistationarity using the critical function} \label{sec:inspection}
For linearly binomial networks, we show that the critical function
$C(x)$
has the same sign as a  homogeneous polynomial $B(x)$ with square-free monomials
(Lemma~\ref{lm:ecfstructure}).
It follows that every monomial of $B(x)$ defines a vertex of the Newton polytope,
so establishing 
multistationarity for a linearly binomial network is easy:
simply check whether $B(x)$ has coefficients of both signs
(Theorem~\ref{thm:NP}).

Recall that a polynomial $g(x)$ is {\em homogeneous of degree} $d$ if 
$g(\lambda x) = \lambda^d g(x)$ for all constants $\lambda \in {\mathbb C}$.

\begin{lemma}\label{lm:ecfstructure}
For a linearly binomial network $G$, 
the critical function $C(x)$ in (\ref{ecf}) has the form 
\begin{equation}\label{nice}
C(x)~=~\frac{x^{\alpha}}{x^{\beta}}\cdot B(x)~,
\end{equation}
where 
$x^{\alpha}$ and $x^{\beta}$ are monomials, and
$B(x)$ is either the zero polynomial or a homogeneous polynomial 
of 
degree $d=
s-rank(N)$ such that all monomials have
 exponents vectors with coordinates $0,1$. Here,
 $s$ denotes the number of species, and $N$ denotes the stoichiometric matrix of $G$.
\end{lemma}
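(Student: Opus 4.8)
The plan is to compute the Jacobian matrix $\mathrm{Jac}(\augmentH)$ explicitly, substitute the parametrization $a_k = x^{\gamma_k-\delta_k}$, and then expand the determinant by separating the $d$ conservation-law rows from the $s-d$ binomial rows. First I would record the entries. The rows indexed by $I$ come from the affine-linear forms $(Wx-c)_k$, so their entries are the constant entries of $W$. Each remaining row comes from a binomial $x^{\gamma_k}-a_k x^{\delta_k}$ (for $k=1,\dots,s-d$), so its entry in column $j$ is $(\gamma_k)_j\,x^{\gamma_k-e_j}-a_k(\delta_k)_j\,x^{\delta_k-e_j}$, where $e_j$ is the $j$-th standard basis vector.

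The crucial computation is the effect of the substitution $a_k = x^{\gamma_k-\delta_k}$ on a binomial row: the second term becomes $(\delta_k)_j\,x^{\gamma_k-\delta_k}x^{\delta_k-e_j}=(\delta_k)_j\,x^{\gamma_k-e_j}$, which carries the \emph{same} monomial as the first term. Hence the $(k,j)$-entry collapses to the single term $\bigl((\gamma_k)_j-(\delta_k)_j\bigr)\,x^{\gamma_k}/x_j$. This cancellation is the key structural fact that lets the whole determinant factor, and it holds uniformly (even when $(\gamma_k)_j=0$, in which case only the $-(\delta_k)_j$ survives).

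Next I would expand $\det \mathrm{Jac}(\augmentH)$ by the generalized Laplace expansion along the $s-d$ binomial rows. For each size-$(s-d)$ column set $S \subseteq [s]$ assigned to these rows (its complement $[s]\setminus S$, of size $d$, going to the conservation rows), the summand is, up to a sign, the product of a $d\times d$ minor of $W$ (a constant) and the $(s-d)\times(s-d)$ minor of the binomial block on the columns $S$. In that latter minor I factor $x^{\gamma_k}$ out of each row $k$ and $1/x_j$ out of each column $j\in S$, leaving the constant minor of the integer matrix with $(k,j)$-entry $(\gamma_k)_j-(\delta_k)_j$. Thus each summand equals a constant $c_S$ times $x^{\sum_k \gamma_k}/\prod_{j\in S}x_j$, and summing over $S$ gives
\[
C(x) ~=~ \frac{x^{\sum_k \gamma_k}}{\prod_{j=1}^{s} x_j}\;\sum_{S} c_S \prod_{j \notin S} x_j~.
\]
Setting $x^{\alpha}:=x^{\sum_k \gamma_k}$, $x^{\beta}:=\prod_{j=1}^{s}x_j$, and $B(x):=\sum_S c_S\prod_{j\notin S}x_j$ yields the claimed form, since each $\prod_{j\notin S}x_j$ is a square-free monomial of degree $|[s]\setminus S|=d=s-\mathrm{rank}(N)$; so $B$ is either identically zero or homogeneous of degree $d$ with every exponent vector in $\{0,1\}^s$.

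The steps are routine once the collapse of the binomial-row entries is in hand; the only thing to get right is the bookkeeping in the Laplace expansion—namely that the $x$-dependence of each summand is governed entirely by the column set $S$, through the common factor $x^{\sum_k\gamma_k}/\prod_{j\in S}x_j$, while the global sign from reordering rows and the two numerical minors contribute only the constant $c_S$. I expect the sole mild subtlety to be tracking these signs and the indexing between the rows of $W$ and the conservation-law rows of $\augmentH$; since these affect only the constants $c_S$ and not the monomials, they do not interfere with the conclusions about homogeneity and square-freeness.
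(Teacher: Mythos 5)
Your proposal is correct and follows essentially the same route as the paper: the key step in both is that substituting $a_k=x^{\gamma_k-\delta_k}$ collapses each entry of a binomial row to $\bigl((\gamma_k)_j-(\delta_k)_j\bigr)x^{\gamma_k}/x_j$, after which one factors $x^{\gamma_k}$ from each binomial row and $1/x_j$ from each column. The paper then observes directly that every term of the resulting determinant (whose bottom block is $W$ with column $j$ scaled by $x_j$) is a constant times a product of $d$ distinct $x_i$'s, which is exactly what your generalized Laplace expansion over column sets $S$ makes explicit.
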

\begin{proof}
First we reorder the polynomials $h_{c,a}$ as $b_1, b_2, \ldots, b_s$ such that the first $s-d$ polynomials are the binomials:
\begin{equation*} 
b_k~=~
x^{\gamma_{k}} - a_{k}\cdot x^{\delta_{k}}~, \;\;\;\;\;\;\; k = 1,2, \ldots, s-d~,
\end{equation*}
and the remaining $d$ polynomials are the linear equations from the conservation laws:
\begin{equation*}
b_{s-d+\ell}~=~(Wx-c)_{\ell}~, \;\;\;\;\;\;\;\;   \ell = 1, 2,\ldots, d~.
\end{equation*}
As $b$ is obtained by reordering the polynomials $h_{c,a}$, we have that
\[\det {\rm Jac}({\augmentH})~=~(-1)^\mu\cdot \det {\rm Jac}({b}),\]
for some integer $\mu$.
So, the critical function, as in (\ref{ecf}), is:
\[C(x)~ =~\left(\det {\rm Jac}({\augmentH})\right)|_{a_k= x^{\gamma_k-
\delta_k}}~=~ (-1)^\mu \cdot \left(\det {\rm Jac}(b)\right)|_{a_k=x^{\gamma_k-
\delta_k}}~.\]
 We first describe the first $s-d$ rows of ${\rm Jac}(b)$. 
 For every  $k =1, 2, \ldots, s-d$ and for every 
 $r=1,2, \ldots, s$, the entry
 in the $k$-th row and the $r$-th column of ${\rm Jac}(b)$ is
 \begin{equation*}
 \frac{{\partial} b_{k}}{{\partial} x_r}~=~\gamma_{k, r}\frac{x^{\gamma_{k}}}{x_r} - a_{k} \delta_{k, r}\frac{x^{\delta_{k}}}{x_r}~.
 \end{equation*}
 Thus, the entry
 in the $k$-th row and the $r$-th column of ${\rm Jac}(b)|_{a_k=x^{\gamma_k-
\delta_k}}$ is
\begin{equation*}
\frac{{\partial} b_{k}}{{\partial} x_r}|_{a_{k}= x^{\gamma_{k} -\delta_{k}}}~=~
\frac{ x^{\gamma_{k}}}{x_r}\left(\gamma_{k, r}  - \delta_{k,r}\right)~.
\end{equation*}
  As for the last $d$ rows of ${\rm Jac}(b)$,
  this submatrix is exactly $W$, 
  the conservation-law matrix.
  So, 
  \[{\rm Jac}(b)|_{a_k=x^{\gamma_{k} -\delta_{k}}}~=~
  \left(\begin{array}{ccc} 
	    \frac{x^{\gamma_1}}{x_1}(\gamma_{1,1}-\delta_{1,1}) & \dots & \frac{x^{\gamma_1}}{x_s}(\gamma_{1,s}-\delta_{1,s})\\
	    \vdots & \ddots & \vdots \\
            \frac{x^{\gamma_{s-d}}}{x_1}(\gamma_{s-d,1}-\delta_{s-d,1}) & \dots & \frac{x^{\gamma_{s-d}}}{x_s}(\gamma_{s-d,s}-\delta_{s-d,s})\\
            \hline
            & & \\
            & W &\\
            & & 
\end{array}\right)~,
\]
and hence,
\begin{align} \label{eq:final-det}
\notag
  \left(\det {\rm Jac}(b)\right)|_{a_k=x^{\gamma_{k} -\delta_{k}}} 
  &~=~
  \underset{k=1}{\overset{s-d}{\prod}}x^{\gamma_k}\det\left(\begin{array}{ccc} 
	    \frac{1}{x_1}(\gamma_{1,1}-\delta_{1,1}) & \dots & \frac{1}{x_s}(\gamma_{1,s}-\delta_{1,s})\\
	    \vdots & \ddots & \vdots \\
            \frac{1}{x_1}(\gamma_{s-d,1}-\delta_{s-d,1}) & \dots & \frac{1}{x_s}(\gamma_{s-d,s}-\delta_{s-d,s})\\
            \hline
            & & \\
            & W &\\
            & & 
\end{array}\right)\\
&~=~
	\frac{\underset{k=1}{\overset{s-d}{\prod}}x^{\gamma_k}}{\underset{r=1}{\overset{s}{\prod}}x_r}\det\left(\begin{array}{ccc} 
	    (\gamma_{1,1}-\delta_{1,1}) & \dots & (\gamma_{1,s}-\delta_{1,s})\\
	    \vdots & \ddots & \vdots \\
            (\gamma_{s-d,1}-\delta_{s-d,1}) & \dots & (\gamma_{s-d,s}-\delta_{s-d,s})\\
            \hline
            & & \\
            & Wx &\\
            & & 
\end{array}\right),
\end{align}
where $Wx$ denotes the $d \times s$ matrix obtained from $W$ by multiplying
 the $j$-th column by $x_j$, for every $j=1,2, \dots,s$.

Therefore,
in the determinant in~\eqref{eq:final-det},
each term is a 
product of $d$ distinct $x_i$'s times a number, that is,  
we get a homogeneous polynomial $\widetilde B(x)$ of degree $d$ 
where each monomial has exponents vectors with coordinates $0,1$, as stated.
Hence,
 \[C(x)~ =~(-1)^\mu\cdot \left(\det {\rm Jac}(b)\right)|_{a_k=x^{\gamma_{k} -\delta_{k}}}~=~(-1)^\mu
 \cdot\frac{\Pi_{k=1}^{s-d}x^{\gamma_{k}}}{\Pi_{r=1}^{s}x_r} \widetilde B(x)~.\]
 Then, letting $B(x)=(-1)^\mu\widetilde B(x)$,
 it follows that $B(x)$ is as in~(\ref{nice}) and is homogeneous of degree $d$ 
with each monomial has exponents vectors with coordinates $0,1$.%
\end{proof}

Notice that $C(x)$ and $B(x)$, as in (\ref{nice}), have the same sign for every $x\in {\mathbb R}^s_{>0}$. 
This observation motivates the following definition.

\begin{definition}\label{ecfb}
The {\em critical polynomial} of a linearly binomial network 
is the polynomial
 $B(x)$ in (\ref{nice}).   
\end{definition}

In the remainder of this section, we make the following assumption: 
{\em B(x) is not the zero polynomial.}
Indeed, we do not know of any network for which the critical function is the zero function.  
For such a network, every steady state (in every compatibility class and for every choice of rate constants) would be degenerate.  
We suspect that no such networks exist, and in fact the non-existence of such networks is implied by the (open) 
Nondegeneracy Conjecture~\cite{Joshi:Shiu:Multistationary}.

\begin{theorem}[Multistationarity for linearly binomial networks] \label{thm:NP}
Let $G$ be a linearly binomial network that is dissipative and has no boundary steady states in any compatibility class. 
Let $B(x)$ denote its critical polynomial. 
Then:
\begin{enumerate}[(A)]
\item  {\bf Multistationarity.}
 $G$ is multistationary 
if and only if $B(x)$ has a coefficient with sign equal to $(-1)^{\mathrm{rank}(N)+1}$, where 
 $N$ denotes the stoichiometric matrix of $G$.
\item  {\bf Witness to multistationarity.}
 Every $x^* \in \mathbb{R}^s_{>0}$ with 
${\rm sign} (B(x^*))=
(-1)^{\mathrm{rank}(N)+1}$ yields 
 a witness
to multistationarity $(\kappa^*, c^*)$ as follows:
let $c^* = W x^*$ 
(where $W$ is the conservation-law matrix), and
let $\kappa^* \in \mathbb{R}_{>0}^m$ be such that 
$\bar a(\kappa^*)=\phi_a(x^*)$ 
(where 
$\phi_a$ is as in~\eqref{eq:phia}). 
\end{enumerate}
\end{theorem}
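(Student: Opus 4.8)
The plan is to reduce both parts of the theorem to Theorem~\ref{thm:c-general} by exploiting the rigid structure of the critical polynomial $B(x)$ recorded in Lemma~\ref{lm:ecfstructure}. Since $G$ is linearly binomial, the positive parametrization $\phi$ of~\eqref{eq:phia} has $\hat a=\emptyset$ and $\hat x=x$, so the critical function is $C(x)$, and by Lemma~\ref{lm:ecfstructure} it factors as $C(x)=(x^{\alpha}/x^{\beta})\,B(x)$ with $x^{\alpha}/x^{\beta}>0$ on $\mathbb{R}^s_{>0}$. Hence $C(x)$ and $B(x)$ share the same sign at every positive point, and every sign statement about $C$ over $\mathbb{R}^s_{>0}$ can be read directly off the coefficients of $B$.

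The crux will be the observation that \emph{every} monomial of $B(x)$ corresponds to a vertex of its Newton polytope $\mathrm{NP}(B)$. By Lemma~\ref{lm:ecfstructure}, $B$ is homogeneous of degree $d=s-\mathrm{rank}(N)$ and each of its exponent vectors lies in $\{0,1\}^s$. Each such $0/1$ vector is a vertex of the unit cube $[0,1]^s$; since $\mathrm{NP}(B)$ is the convex hull of a subset of the cube's vertices, and an extreme point of a polytope remains extreme in the convex hull of any sub-collection containing it, every exponent vector of $B$ is a vertex of $\mathrm{NP}(B)$. This is precisely the step that turns the sign test into an inspection: by Lemma~\ref{lem:NP}, for each coefficient $b_i$ of $B$ there exists $x^{*}\in\mathbb{R}^s_{>0}$ at which $B(x^{*})$ has the same sign as $b_i$.

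With these preparations part~(A) follows quickly. For the ``if'' direction I will take a coefficient $b_i$ of sign $(-1)^{\mathrm{rank}(N)+1}$; its exponent is a vertex of $\mathrm{NP}(B)$, so Lemma~\ref{lem:NP} furnishes an $x^{*}$ with $\mathrm{sign}(B(x^{*}))=(-1)^{\mathrm{rank}(N)+1}$, whence $\mathrm{sign}(C(x^{*}))=(-1)^{\mathrm{rank}(N)+1}$ and Theorem~\ref{thm:c-general}(A) gives multistationarity. For the ``only if'' direction I will argue by contrapositive: if no coefficient of $B$ has sign $(-1)^{\mathrm{rank}(N)+1}$, then (as $B\not\equiv 0$ by the standing assumption of this section) every nonzero coefficient carries the opposite sign $(-1)^{\mathrm{rank}(N)}$, so $B(x)$ is a sum of positive monomials with coefficients all of this one sign and therefore satisfies $\mathrm{sign}(B(x))=(-1)^{\mathrm{rank}(N)}$ for all $x\in\mathbb{R}^s_{>0}$. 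The same holds for $C$, and Theorem~\ref{thm:c-general}(C) then forces $G$ to be monostationary.

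Finally, part~(B) is a direct translation of Theorem~\ref{thm:c-general}(B) through the parametrization. Given $x^{*}$ with $\mathrm{sign}(B(x^{*}))=(-1)^{\mathrm{rank}(N)+1}$, we have $\mathrm{sign}(C(x^{*}))=(-1)^{\mathrm{rank}(N)+1}$, so Theorem~\ref{thm:c-general}(B) yields a witness $(\kappa^{*},c^{*})$ with $c^{*}=Wx^{*}$ and $\kappa^{*}$ chosen so that $\bar a(\kappa^{*})=a^{*}$, where $(a^{*},x^{*})=\phi(x^{*})$; since $\phi(x)=(\phi_a(x),x)$ we have $a^{*}=\phi_a(x^{*})$, which is exactly the stated condition $\bar a(\kappa^{*})=\phi_a(x^{*})$ (such a $\kappa^{*}$ exists by the surjectivity of $\bar a$). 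The only genuinely nontrivial ingredient is the vertex argument of the second paragraph; everything else is bookkeeping that aligns the signs produced by Lemma~\ref{lm:ecfstructure} with the threshold $(-1)^{\mathrm{rank}(N)+1}$ of Theorem~\ref{thm:c-general}.
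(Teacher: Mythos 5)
Your proposal is correct and follows essentially the same route as the paper's proof: reduce to Theorem~\ref{thm:c-general} via the sign identity $C(x)=(x^{\alpha}/x^{\beta})B(x)$ from Lemma~\ref{lm:ecfstructure}, use the fact that every monomial of $B$ is a vertex of $\mathrm{NP}(B)$ together with Lemma~\ref{lem:NP} for the ``if'' direction, and argue the ``only if'' direction by contrapositive via part~(C). Your unit-cube justification of the vertex claim (every $0/1$ exponent vector is extreme in $[0,1]^s$, hence in the convex hull of any subset of the cube's vertices containing it) is a valid and slightly more explicit filling-in of the step the paper merely asserts from square-freeness and homogeneity.
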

\begin{proof}
Part (B) follows directly from Theorem~\ref{thm:c-general} and Lemma~\ref{lm:ecfstructure}.

For the forward direction of part (A), we proceed by contrapositive.  Assume that every coefficient of $B(x)$ 
has sign $(-1)^{\mathrm{rank}(N)}$ (at least one such coefficient exists, as $B(x)$ is nonzero).
So, $B(x)$, and thus $C(x)$ as well (by Lemma~\ref{lm:ecfstructure}), has sign
$(-1)^{\mathrm{rank}(N)}$ for all $x \in \mathbb{R}^s_{>0}$.  Thus, by Theorem~\ref{thm:c-general},
$G$ is monostationary.

For the backward direction of part (A), assume that 
$B(x)$ has a coefficient with sign equal to $(-1)^{\mathrm{rank}(N)+1}$. 
By Lemma~\ref{lm:ecfstructure}, 
$B(x)$ is homogeneous with square-free monomials, so 
every monomial of $B(x)$ defines a vertex of the Newton polytope.
Then, by Lemma~\ref{lem:NP}, there exists
$x^* \in \mathbb{R}^s_{>0}$ with 
${\rm sign} (B(x^*))=
(-1)^{\mathrm{rank}(N)+1}$.  By again appealing to Lemma~\ref{lm:ecfstructure}, 
we conclude that $C(x)$ also takes that sign, so, by Theorem~\ref{thm:c-general},
$G$ is multistationary. 
\end{proof}

Theorem~\ref{thm:NP} and its proof yield the following procedure for obtaining, for linearly binomial networks, a witness to multistationarity.

\begin{proc}[Witness to multistationarity for linearly binomial networks] \label{proc:witness}
~

 \noindent
{\bf Input:}  A  binomial network that is dissipative and has no boundary steady states, given by 
equations 
$\augmentH(x)$
as in~\eqref{toric}, 
arising from a reparametrization map $\bar a$ as in~\eqref{eq:ep}, 
with conservation-law matrix $W$ 
and steady-state parametrization $\phi(x)=(\phi_a(x),x)$ as in~\eqref{eq:phia}.\\
\noindent
{\bf Output:} ``No'' if $G$ is {\em not} multistationary; otherwise, a witness to multistationarity. \\
{\bf Steps:} 
\begin{enumerate}
	\item Does the critical polynomial $B(x)$ (Definition~\ref{def:Critical}) have a
	coefficient with sign equal to $(-1)^{\mathrm{rank}(N)+1}$ (e.g., if $B(x)$ has
	both positive and negative coefficients)?  If not, return ``No''.  If yes, pick any $x^* \in \mathbb{R}^s_{>0}$ 
	such that ${\rm sign}(B(x^*))=(-1)^{\mathrm{rank}(N)+1}$.  
	\item Let $c^*:=W x^*$.  
	\item Pick any $\kappa^* \in \mathbb{R}^{m}_{>0}$ such that
$\bar a (\kappa^*)=\phi_a(x^*)$. 
	\item
	 Return $(\kappa^*,c^*)$.
\end{enumerate}
\end{proc}

Recall from Remark~\ref{rmk:vertex} that one way to complete Step (1) (in the ``yes'' case) follows the proof 
of Lemma~\ref{lem:NP}.  
Namely, choose a monomial $x^{\alpha}$ of $B(x)$ with coefficient having sign $(-1)^{\mathrm{rank}(N)+1}$,
and 
then pick $x^* \in \mathbb{R}^s_{>0}$ as follows: let $ \lambda \gg 1$ and $x^*_i=\lambda$ when $i \in \alpha$ and  $x^*_i= 1$ when $i \notin \alpha$.  
It follows that $(x^*)^{\alpha} \gg (x^*)^{\beta}$ for every other monomial $x^{\beta}$ in $B(x)$, and hence 
 ${\rm sign}(B(x^*))=(-1)^{\mathrm{rank}(N)+1}$.  

We demonstrate Procedure~\ref{proc:witness} in the following example.

\begin{example}\label{cascade}

Consider the following network from \cite[Figure 1(k)]{FW2012}:

\begin{minipage}{0.25\textwidth}
 \begin{tikzpicture}[scale=0.7,node distance=0.5cm]
  \node[] at (1.1,-1.5) (dummy2) {};
  \node[left=of dummy2] (p0) {$P_0$};
  \node[right=of p0] (p1) {$P_1$}
    edge[->, bend left=45] node[below] {\textcolor{black!70}{\small{$F$}}} (p0)
    edge[<-, bend right=45] node[above] {} (p0);
  \node[] at (-1,0) (dummy) {};
  \node[left=of dummy] (s0) {$S_0$};
  \node[right=of s0] (s1) {\textcolor{blue}{$S_1$}}
    edge[->, bend left=45] node[below] {\textcolor{black!70}{\small{$F$}}} (s0)
    edge[<-, bend right=45] node[above] {\textcolor{black!70}{\small{$E$}}} (s0)
    edge[->,thick,color=blue, bend left=25] node[above] {} ($(p0.north)+(25pt,10pt)$);
 \end{tikzpicture}
\end{minipage}
\begin{minipage}{0.7\textwidth}
\begin{center}
\begin{tabular}{cc}
\ce{S_{0} + E
<=>[\kappa_1][\kappa_2]
ES_{0}
->[\kappa_3]
S_{1} + E},  &
\ce{S_{1} + F
<=>[\kappa_4][\kappa_5]
FS_{1}
->[\kappa_6]
S_{0} + F}\\
\ce{P_{0} + S_{1}
<=>[\kappa_7][\kappa_8]
S_{1}P_{0}
->[\kappa_9]
P_{1} + S_{1} }, 
&
\ce{P_{1} + F
<=>[\kappa_{10}][\kappa_{11}]
FP_{1}
->[\kappa_{12}]
P_{0} + F} \\

\end{tabular}
\end{center}
\end{minipage}

\noindent
This network describes a ``cascade motif'' with two layers;
each layer is a ``one-site modification cycle'', and the same
phosphatase ($F$) acts in each layer \cite[Figure 1(k)]{FW2012}.

The network has $s=10$ species:
\begin{center}
\begin{tabular}{llllll}
$X_1$=\ce{S_0},& $X_3$=\ce{S_1},& $X_5$=\ce{P_0},&$X_7$=\ce{P_1}, & $X_9$=\ce{E}, &  \\
$X_2$=\ce{ES_0},& $X_4$=\ce{FS_1},& $X_6$=\ce{S_1P_0},                             & $X_8$=\ce{FP_1},& $X_{10}$=\ce{F}~.\\
\end{tabular}
\end{center}

\noindent There are $d=4$ conservation laws, which arise from the total amounts of substrate $S$, enzyme $E$, enzyme $F$, and product $P$, respectively:
\begin{equation}\label{eq:ex45con}
x_1 + x_2 + x_3 + x_4 + x_6 = c_1, \;\; 
x_2 + x_9 = c_2,  \;\;
x_4 + x_8 + x_{10} = c_3, \;\;
x_5 + x_7 + x_6 + x_8 = c_4~.
\end{equation}
The resulting augmented system is 
\begin{align*}
f_{c, \kappa,1} ~&=~ x_1 + x_2 + x_3 + x_4 + x_6 - c_1,   \\
f_{c, \kappa,2} ~&=~ x_2 + x_9 - c_2,  \\
f_{c, \kappa,3} ~&=~  \kappa_3x_2 - \kappa_4x_3x_{10} + \kappa_5 x_4 - \kappa_7x_3x_5 + \kappa_8x_6 + \kappa_9x_6, \\
f_{c, \kappa,4}~&=~ x_4 + x_8 + x_{10} - c_3,  \\
f_{c, \kappa,5} ~&=~  x_5 + x_7 + x_6 + x_8 - c_4,  \\
f_{c, \kappa,6} ~&=~ \kappa_7x_3x_5 - \kappa_8x_6 - \kappa_9x_6, \\
f_{c, \kappa,7} ~&=~ -\kappa_{10}x_7x_{10} + \kappa_{11}x_8 + \kappa_{9}x_6,  \\
f_{c, \kappa,8}~&=~ \kappa_{10}x_7x_{10}  - \kappa_{11} x_8 - \kappa_{12} x_8, \\
f_{c, \kappa,9}~&=~  -\kappa_1x_1x_9  +\kappa_2x_2  + \kappa_3x_2 ,  \\
f_{c, \kappa,10} ~&=~ -\kappa_4x_3x_{10} +\kappa_5x_4 +\kappa_6x_4 -\kappa_{10}x_7x_{10}  + \kappa_{11} x_8 + \kappa_{12} x_8 .  
\end{align*}
Consider the following upper-triangular matrix:
{\footnotesize
\[M(\kappa)=
\left(\begin{array}{cccccccccc}
\frac{1}{\kappa_3} &   \frac{1}{\kappa_3} & 0 & -\frac{1}{\kappa_3} & 0&-\frac{1}{\kappa_3}\\
 0&   \frac{1}{\kappa_7}& 0& 0  & 0& 0\\
 0 & 0 &  \frac{1}{\kappa_9}&   \frac{1}{\kappa_9}& 0 & 0\\
 0    &  0 & 0     &  \frac{1}{\kappa_{10}} & 0 & 0\\
 0    &  0 & 0   & 0   &  \frac{1}{\kappa_{1}}& 0\\
 0    & 0 & 0   & 0   & 0       &  \frac{1}{\kappa_{4}}
\end{array}
\right).
\]
} 
Note that ${\rm det} M(\kappa)>0$ for all $\kappa\in \mathbb{R}^{12}_{>0}$. 
Following~\eqref{eq:linearchange}--\eqref{consys-h}, the resulting system is:
\begin{center}
\begin{tabular}{lll}
$h_{c, a,1}  =  x_1 + x_2 + x_3 + x_4 + x_6 - c_1$,&
$h_{c, a,2} = x_2 + x_9 - c_2 $, &
$h_{c, a,3}  =  x_2 - a_1x_4$, \\
$h_{c, a,4}  =  x_4 + x_8 + x_{10} - c_3$, &
$h_{c, a,5}  =  x_5 + x_7 + x_6 + x_8 - c_4$, &
$h_{c, a,6}= x_3x_5 - a_2x_6$, \\
$h_{c, a,7} =x_6-a_3 x_8$,  &
$h_{c, a,8} = x_7x_{10}  -a_4 x_8$,& 
$h_{c, a,9}  =  -x_1x_9  +a_5x_2$, \\
$h_{c, a,10}  = -x_3x_{10} +a_6x_4$ &&
\end{tabular}
\end{center}
where the effective parameters are as follows:
\begin{equation}\label{eq:ep46}
\begin{tabular}{llllll}
$\bar a_1=\frac{\kappa_6}{\kappa_3}$,& $\bar a_2=\frac{\kappa_8 + \kappa_9}{\kappa_7}$,& $\bar a_3=
\frac{\kappa_{12}}{\kappa_9}$,&$\bar a_4=\frac{\kappa_{11} + \kappa_{12}}{\kappa_{10}}$, & $\bar a_5=
\frac{\kappa_{2} + \kappa_{3}}{\kappa_{1}}$, & $\bar a_6=\frac{\kappa_{5} + \kappa_{6}}{\kappa_{4}}~.$
\end{tabular}
\end{equation}
The associated map 
$\mathbb{R}^{12 }_{>0} \to \mathbb{R}^{6}_{>0}$ given by 
$ \kappa\mapsto \bar a(\kappa)$ is surjective. 

Note that the non-conservation-law equations in $\augmentH$ are binomials, so the network is linearly binomial. 
Hence, as in~\eqref{eq:phia}, we use the binomial equations 
$h_{c, a,3}=h_{c, a,6}=h_{c, a,7}=h_{c, a,8}=h_{c, a,9}=h_{c, a,10}=0$
to solve for the $a_k$'s: 
\begin{equation}\label{eq:ex46}
\begin{tabular}{llllll}
$a_1=\frac{x_2}{x_4}$, &
$a_2=\frac{x_3x_5}{x_6}$, & 
$a_3=\frac{x_6}{x_8}$, & 
$a_4=\frac{x_7x_{10}}{x_8}$, & 
$a_5=\frac{x_1x_9}{x_2}$,&
$a_{6}=\frac{x_3x_{10}}{x_4}$~.\\
\end{tabular}
\end{equation}
Substituting (\ref{eq:ex46})  into $\det {\rm Jac}({\augmentH})$ -- that is, considering the 
parametrization $x \mapsto (\phi_a(x) , x)$ where $\phi_a(x)$ is given by~\eqref{eq:ex46}
 --
 yields the following critical function $C(x)$, as in $(\ref{ecf})$:
\[C(x)=\frac{x_3x_{10}}{x_4x_8}B(x)~,\] 
where $B(x)$ is the following critical polynomial:
\begin{center}
{$
x_1x_{10}x_2x_5+x_1x_{10}x_2x_6+x_1x_{10}x_2x_7+x_1x_{10}x_2x_8+x_1x_{10}x_5x_9+
x_1x_{10}x_6x_9+x_1x_{10}x_7x_9+x_1x_{10}x_8x_9\underline{-x_1x_2x_5x_8}+x_1x_2x_7x_8\underline{-x_1x_5x_8x_9}+
x_1x_7x_8x_9+x_{10}x_2x_5x_9+x_{10}x_2x_6x_9+
x_{10}x_2x_7x_9+x_{10}x_2x_8x_9+x_{10}x_3x_5x_9+x_{10}x_3x_6x_9+x_{10}x_3x_7x_9+
x_{10}x_3x_8x_9+x_{10}x_4x_5x_9+x_{10}x_4x_6x_9+
x_{10}x_4x_7x_9+x_{10}x_4x_8x_9+x_{10}x_5x_6x_9\underline{-x_2x_5x_8x_9}+x_2x_7x_8x_9+
x_3x_4x_5x_9+x_3x_4x_6x_9+x_3x_4x_7x_9+x_3x_4x_8x_9+
x_3x_7x_8x_9+x_4x_5x_6x_9\underline{-x_4x_5x_8x_9-x_4x_6x_7x_9}+x_4x_7x_8x_9
$~.
}
\end{center}
Consistent with Lemma \ref{ecfb}, $B(x)$ is homogeneous with total degree $d=4$ and 
square-free monomials.

From the conservation laws~\eqref{eq:ex45con}, 
we see that this network is conservative and hence dissipative. 
It is also straightforward to check (for instance, using criteria in \cite{SS2010,messi}) that it has no boundary steady states. 
Thus, we can follow Procedure~\ref{proc:witness} to find a witness as follows:

\underline{Step 1}.
We compute the following sign:
\[(-1)^{\mathrm{rank}(N)+1}~=~(-1)^{s-d+1}~=~(-1)^{10-4+1}~=~ -1~.\]
We see that $B(x)$ has five (underlined) terms with the above negative sign; one
such term is: \[-x_1x_2x_5x_8~.\] 
Accordingly, 
define 
$x^*\in {\mathbb R}^{10}$
with coordinates $\lambda$ (in indices $1,2,5,$ and $8$) and 1 (all others):
\begin{align} \label{eq:x-lambda}
x^*~=~
 \left(
  \lambda,~  \lambda, ~ 1, ~ 1, ~ \lambda, ~ 1, ~ 1, ~ \lambda, ~ 1,~  1
 \right)~.\end{align}
Then we have
\begin{align} \label{eq:B-phos}
B(x^*)~=~
-\lambda^4+\lambda^3 + 7\lambda^2 + 14\lambda + 5~.\end{align}
It follows that $B(x^*)<0$ if $\lambda$ is larger than the largest positive root of the polynomial~\eqref{eq:B-phos}. 
There are many well-known upper bounds for the real roots of a univariate  polynomial. Here, we use an elementary bound,  
the sum of the absolute values of all coefficients:
\[1+1+7+14+5 = 28~.\]
Let $\lambda = 29$; then, $B(x^*)|_{\lambda=29}=-676594<0$.

\underline{Step 2}.
To solve for $c^*$,
we substitute  $x^*|_{\lambda=29}$, as in~\eqref{eq:x-lambda},
into equation~\eqref{eq:ex45con}, which yields:
\begin{align*}
c^* ~=~ (61, 30, 31, 60)
\end{align*}

\underline{Steps 3--4}. 
We substitute  $x^*|_{\lambda=29}$, as in~\eqref{eq:x-lambda},
into equation~\eqref{eq:ex46}.  This yields:
\begin{align*}
\quad {\rm and} \quad
\phi_a(x^*)
 ~=~ (29,~ 29,~ 1/29,~ 1/29,~1,~1)~.
\end{align*}

Finally, we choose $\kappa^*$ for which $ \bar a (\kappa^*) = \phi_a(x^*)$, 
as in (\ref{eq:ep46}):
 \begin{align*}
\kappa^* ~=~ (2,~1,~1,~30,~1,~29,~1,~28,~1,~2,~ 1/29,~ 1/29)~.
\end{align*}
So, $(\kappa^*, c^*)$ is a witness to multistationarity. 
\end{example}

\subsection{Open multistationarity regions in parameter space} \label{sec:open-map}

We saw in Procedure~\ref{proc:witness} that, for linearly binomial networks,
 finding $x^*$ with ${\rm sign}(B(x^*))=(-1)^{\mathrm{rank}(N)+1}$
allows us to obtain a witness to multistationarity 
$(\kappa^*,c^*)$ satisfying $\varphi(\kappa^*,c^*)=\psi(x^*)$ for the maps:
\begin{align*}
	 \mathbb{R}^{s}_{>0} ~		&\overset{\psi}\to ~  \mathbb{R}_{>0}^{s-d} \times \mathbb{R}_{>0}^{d}
	~~ \overset{\varphi}\leftarrow ~
	\mathbb{R}_{>0}^m \times \mathbb{R}_{>0}^d~ \\ 
	x ~ &\mapsto 
		\quad \quad (a,c) \quad \quad
		 \mapsfrom ~ (\kappa,c)~,
	\end{align*}
where $\psi: x ~\mapsto~ (a,c)$ is given by:
\begin{equation}\label{eq:openmap}
\begin{cases}
a_k =~\psi_k~=~x^{\gamma_k-\delta_k}, &k~=~1,2, \dots, s-d\\
c_k =~\psi_{s-d+k} ~=~(Wx)_k, & k~=~1,2,  \ldots, d~,
\end{cases}
\end{equation}
and $\varphi$ is given by 
\begin{equation} \label{eq:phi}
\varphi(\kappa, c) ~:=~ (\bar a(\kappa),c)~.
\end{equation}

Therefore, the region in the parameter space 
$\mathbb{R}_{>0}^m \times \mathbb{R}_{>0}^d$ 
of the $(\kappa,c)$'s where degree theory guarantees multistationarity is
\begin{equation} \label{eq:region}
	\varphi^{-1}(\psi(U))~,
\end{equation}
where 
\begin{equation}\label{eq:u}
U~=~
	\left\{ x\in {\mathbb R}_{>0}^s\;|\; {\rm sign}(B(x))~=~(-1)^{\mathrm{rank}(N)+1} \right\}~.
\end{equation}
We will show that the set $\psi(U)$ is open (Theorem~\ref{thm:openmap}), 
and thus so is $\varphi^{-1}(\psi(U))$, our region of interest (Corollary~\ref{cor:openmap}).  
For ``typical'' networks, the multistationarity regions of parameter space are full-dimensional.
The interpretation for applications is that multistationarity persists  under
 small perturbations of the rate constants and the total-constant values (equivalently, initial values). Such robustness properties are desirable in biological systems. In our setting, it is not immediately evident that we find an open region in the parameter space of the variables $(c,\kappa)$ because we detect a multistationarity region in the $x$ space and not in the parameter space.


Before stating the results in this section,
we show in Figure~\ref{fig:openregion} 
a ``slice'' of the set $U$, as in~\eqref{eq:u}, arising from the network in 
 Example \ref{cascade}.  More precisely, starting with the critical polynomial $B$ from that example, we display the (open) region where $B<0$, under the 
 following specialization:
\begin{align} \label{eq:specialize-B}
x_3=x_4=1~,\quad 
 x_5=x_8=6~, \quad x_6=x_7=1~, \quad x_9=x_{10}=1~.
\end{align}
This region, in $(x_1,x_2)$-space, is guaranteed by degree theory to yield multistationarity.

\begin{center}
\begin{figure}[ht]
\includegraphics[width=0.75\linewidth]{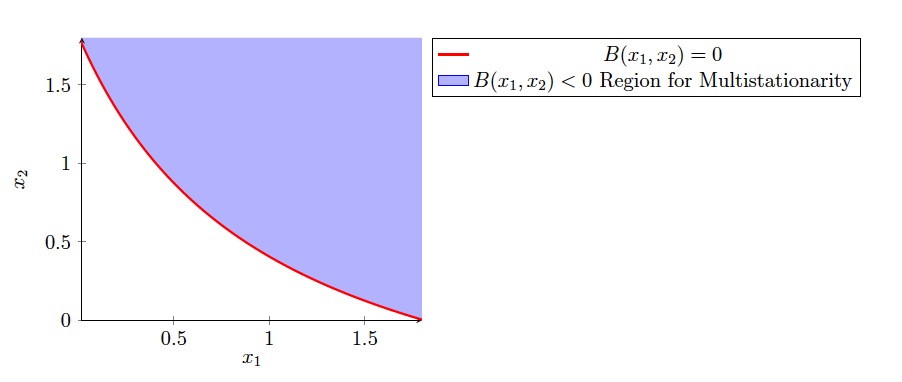}
\caption{The shaded region depicts a ``slice'' of the multistationarity region $U$, as in~\eqref{eq:u},
for the network in 
 Example \ref{cascade}.  The slice arises from the specialization~\eqref{eq:specialize-B}.
  \label{fig:openregion}}
\end{figure}
\end{center}

\begin{lemma}\label{lm:openmap}
Let $G$ be
 a linearly binomial network with critical polynomial $B(x)$ and map $\psi$ 
 as in~(\ref{eq:openmap}). 
Then for every $x^*\in {\mathbb R}_{>0}^s$, we have
$\det {\rm Jac}~ \psi |_{x=x^*} 
\neq 0$ 
if and only if $B(x^*)\neq 0$. 
\end{lemma}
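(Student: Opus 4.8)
The plan is to compute $\det {\rm Jac}\,\psi$ explicitly and to recognize it, up to a Laurent monomial that never vanishes on $\mathbb{R}^s_{>0}$, as the critical polynomial $B(x)$. First I would write down the $s\times s$ Jacobian of $\psi$ from~\eqref{eq:openmap}. Its first $s-d$ rows come from the coordinates $a_k=x^{\gamma_k-\delta_k}$, whose partial derivatives are $\partial a_k/\partial x_r=(\gamma_{k,r}-\delta_{k,r})\,x^{\gamma_k-\delta_k}/x_r$, and its last $d$ rows come from $c_k=(Wx)_k$, whose partial derivatives are simply $W_{k,r}$. Thus the bottom $d$ rows of ${\rm Jac}\,\psi$ are exactly the conservation-law matrix $W$, which is the same block appearing in ${\rm Jac}(b)$ in the proof of Lemma~\ref{lm:ecfstructure}.

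The key observation is that ${\rm Jac}\,\psi$ agrees with the matrix ${\rm Jac}(b)|_{a_k=x^{\gamma_k-\delta_k}}$ from that proof except for a per-row rescaling of its first $s-d$ rows: the $(k,r)$ entry of ${\rm Jac}(b)|_{a_k=x^{\gamma_k-\delta_k}}$ carries the row factor $x^{\gamma_k}$, whereas the corresponding entry of ${\rm Jac}\,\psi$ carries the row factor $x^{\gamma_k-\delta_k}$. By multilinearity of the determinant in the rows, this yields
\[
\det {\rm Jac}\,\psi ~=~ \Big(\prod_{k=1}^{s-d} x^{-\delta_k}\Big)\, \det {\rm Jac}(b)|_{a_k=x^{\gamma_k-\delta_k}}~.
\]
Next I would invoke the computation already carried out in~\eqref{eq:final-det}, which shows $\det {\rm Jac}(b)|_{a_k=x^{\gamma_k-\delta_k}}=\frac{\prod_{k=1}^{s-d} x^{\gamma_k}}{\prod_{r=1}^{s} x_r}\,\widetilde B(x)$, where $\widetilde B(x)$ is the homogeneous polynomial obtained by stacking the rows $(\gamma_{k,r}-\delta_{k,r})$ over $Wx$ and satisfies $B(x)=(-1)^{\mu}\widetilde B(x)$. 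Combining the two displays gives
\[
\det {\rm Jac}\,\psi ~=~ \frac{\prod_{k=1}^{s-d} x^{\gamma_k-\delta_k}}{\prod_{r=1}^{s} x_r}\,\widetilde B(x)~.
\]
Since the prefactor is a Laurent monomial, it is strictly positive, hence nonzero, at every $x^*\in\mathbb{R}^s_{>0}$; as $\widetilde B$ and $B$ differ only by the sign $(-1)^{\mu}$, we conclude that $\det {\rm Jac}\,\psi|_{x=x^*}$ and $B(x^*)$ vanish together, which is exactly the claimed equivalence.

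There is no serious obstacle here; the argument is bookkeeping built directly on Lemma~\ref{lm:ecfstructure}. The one point requiring care is the distinction between the row factors $x^{\gamma_k}$ and $x^{\gamma_k-\delta_k}$: the map $\psi$ records $a_k=x^{\gamma_k-\delta_k}$ and differentiating it produces the factor $x^{\gamma_k-\delta_k}$, whereas in Lemma~\ref{lm:ecfstructure} one differentiates the binomial $b_k=x^{\gamma_k}-a_k\,x^{\delta_k}$ and substitutes afterward, producing $x^{\gamma_k}$. Tracking this per-row discrepancy, and noting that it is absorbed into a nonvanishing monomial, is all that is needed; in particular, since $\psi$ orders its outputs with the $a$-coordinates before the $c$-coordinates, ${\rm Jac}\,\psi$ compares to ${\rm Jac}(b)$ (not to ${\rm Jac}(h_{c,a})$) without any extra reordering sign.
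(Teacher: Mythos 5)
Your proposal is correct and follows essentially the same route as the paper: both identify $\det {\rm Jac}\,\psi$ as the determinant of the Jacobian of the simplified system after a per-row monomial rescaling of the first $s-d$ rows, and then invoke the computation behind Lemma~\ref{lm:ecfstructure} to express the result as a nonvanishing Laurent monomial times $B(x)$. Your version merely makes explicit two points the paper glosses over, namely that the factor $\prod_k x^{-\delta_k}$ acts row by row (so the identity is only literally true at the level of determinants via multilinearity) and that comparing to the reordered system $b$ rather than $\augmentH$ avoids the $\pm$ sign, neither of which changes the argument.
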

\begin{proof}
It is straightforward to verify, using equations (\ref{toric}) and (\ref{eq:openmap}),
 the following equality: 
\[ {\rm Jac}~ \psi ~ = ~ 
	\pm 
\left({\frac{1}{\Pi_{k=1}^{s-d}x^{\delta_{k}}}}\right) \cdot {\rm Jac}\left(\augmentH\right)|_{a_k=x^{\gamma_k-\delta_k}}, \;\;\;\;k~=1,2, \ldots, s-d. 
\]
Thus, for every $y\in {\mathbb R}_{>0}^s$ we obtain the first equality here:
\[
\det {\rm Jac}~ \psi |_{x=y} ~=~
	\pm 
\left({\frac{1}{\Pi_{k=1}^{s-d} y^{\delta_{k}}}}\right) \cdot \left(\det \left({\rm Jac}\left(\augmentH\right)\right)|_{a_k=x^{\gamma_k-\delta_k}}\right)|_{x=y}
~=~
	\pm 
{\frac{{C}(y)}{\Pi_{k=1}^{s-d} y^{\delta_{k}}}}~=~
	\pm 
 \frac{y^{\alpha}}{y^{\beta}}\cdot {\frac{B(y)}{\Pi_{k=1}^{s-d} y^{\delta_{k}}}}~, \]
and the second and third equalities arise, respectively,
from equation~(\ref{ecf}) and Lemma~\ref{lm:ecfstructure}.
Thus, for every $y \in {\mathbb R}_{>0}^s$, we have 
$\det {\rm Jac}~ \psi |_{x=y} 
\neq 0$ 
if and only if $B(y)\neq 0$. 
\end{proof}

\begin{theorem}\label{thm:openmap}
Let $G$ be
 a  linearly binomial network with critical polynomial $B(x)$ and map $\psi$ 
 as in (\ref{eq:openmap}). 
Let $U$ be as in~\eqref{eq:u}. 
Then $\psi(U)$ is an open set in the $(c,a)$-space ${\mathbb R}_{>0}^{s}$. 
\end{theorem}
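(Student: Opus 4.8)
The plan is to exhibit $\psi$ as a local diffeomorphism at each point of $U$ and then invoke the Inverse Function Theorem to conclude openness of the image. First I would observe that $U$ itself is open: since $B(x)$ is a polynomial it is continuous, and $U$ is the intersection of the open orthant $\mathbb{R}^s_{>0}$ with the preimage under $B$ of one of the two open half-lines $\{t>0\}$ or $\{t<0\}$ (depending on the sign $(-1)^{\mathrm{rank}(N)+1}$), hence open. Next I would record that the map $\psi$ of~\eqref{eq:openmap} is smooth on $\mathbb{R}^s_{>0}$, since each of its coordinates is either a Laurent monomial $x^{\gamma_k-\delta_k}$ or a linear form $(Wx)_k$, and that its domain $\mathbb{R}^s_{>0}$ and its codomain, the $(c,a)$-space $\mathbb{R}^{s-d}_{>0}\times\mathbb{R}^d_{>0}=\mathbb{R}^s_{>0}$, both have dimension $s$.

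The key input is Lemma~\ref{lm:openmap}. For every $x^*\in U$ we have $B(x^*)\neq 0$, because $\mathrm{sign}(B(x^*))=(-1)^{\mathrm{rank}(N)+1}\neq 0$; therefore $\det\mathrm{Jac}\,\psi|_{x=x^*}\neq 0$. By the Inverse Function Theorem, for each $x^*\in U$ there is an open neighborhood $V_{x^*}\subseteq U$ of $x^*$ such that $\psi|_{V_{x^*}}$ is a diffeomorphism onto the open set $\psi(V_{x^*})\subseteq\mathbb{R}^s_{>0}$. Then, given any point $y=\psi(x^*)\in\psi(U)$, we have $y\in\psi(V_{x^*})\subseteq\psi(U)$ with $\psi(V_{x^*})$ open, so $y$ is an interior point of $\psi(U)$; equivalently, $\psi(U)=\bigcup_{x^*\in U}\psi(V_{x^*})$ is a union of open sets and hence open.

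The argument is essentially immediate once Lemma~\ref{lm:openmap} is in hand, so I do not anticipate a genuine obstacle. The only point requiring care is the dimension bookkeeping: one must confirm that the domain and the $(c,a)$-codomain of $\psi$ both have dimension $s$, so that a nonvanishing Jacobian determinant does indeed force local invertibility and the Inverse Function Theorem applies. This matching is exactly why the $s-d$ effective-parameter coordinates together with the $d$ conservation-law coordinates were packaged into $\psi$.
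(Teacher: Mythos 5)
Your proposal is correct and follows essentially the same route as the paper: apply Lemma~\ref{lm:openmap} to see that the Jacobian of $\psi$ is nonsingular on $U$ (since $B$ does not vanish there) and then invoke the Inverse Function Theorem to conclude that $\psi(U)$ is open. Your version merely spells out the openness of $U$ and the dimension bookkeeping that the paper leaves implicit.
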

\begin{proof}
Note that $B(x^*) \neq 0$ for any $x^*\in U$. So, we deduce from
Lemma~\ref{lm:openmap} that the Jacobian of $\psi$ never vanishes on $U$. The
result now follows from the Inverse Function Theorem.
\end{proof}

\begin{corollary}\label{cor:openmap}
Let $G$ be a linearly binomial network, with maps
 $\psi$  and $\varphi$ 
 as in \eqref{eq:openmap}--\eqref{eq:phi}. 
Then the 
multistationarity 
region of parameter space  
$\varphi^{-1}\left(\psi(U)\right)$ as in~\eqref{eq:region}, 
is an open set in the $(\kappa, c)$-space ${\mathbb R}_{>0}^m\times {\mathbb R}_{>0}^{d}$. 
\end{corollary}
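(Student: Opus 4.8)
The plan is to obtain this as an immediate consequence of Theorem~\ref{thm:openmap} together with the continuity of the reparametrization, reducing the corollary to the elementary topological fact that the preimage of an open set under a continuous map is open.

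First I would invoke Theorem~\ref{thm:openmap} to record that $\psi(U)$ is open in the common codomain $\mathbb{R}^{s-d}_{>0}\times\mathbb{R}^d_{>0}$, which I identify with $\mathbb{R}^s_{>0}$ since $(s-d)+d=s$. This is the substantive ingredient: its proof, via the Inverse Function Theorem applied using Lemma~\ref{lm:openmap}, already accomplishes the genuinely nontrivial task of transporting a region detected in the $x$-space to an open region of the $(a,c)$-space. The present corollary merely pulls this open region back along $\varphi$.

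Next I would check that $\varphi$ is continuous. By \eqref{eq:phi} we have $\varphi(\kappa,c)=(\bar a(\kappa),c)$, whose second coordinate is the identity, so it suffices to see that the reparametrization map $\bar a:\mathbb{R}^m_{>0}\to\mathbb{R}^{s-d}_{>0}$ is continuous. Each component $\bar a_i(\kappa)$ is a rational function of $\kappa$, and condition (i) of Definition~\ref{def:effective} guarantees that every $\bar a_i$ is defined --- that is, its denominator does not vanish --- and takes positive values throughout $\mathbb{R}^m_{>0}$. A rational function is continuous wherever its denominator is nonzero, so $\bar a$, and hence $\varphi:\mathbb{R}^m_{>0}\times\mathbb{R}^d_{>0}\to\mathbb{R}^{s-d}_{>0}\times\mathbb{R}^d_{>0}$, is continuous.

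Finally, combining these two observations, since $\psi(U)$ is open and $\varphi$ is continuous, the set $\varphi^{-1}(\psi(U))$ --- which is exactly the multistationarity region~\eqref{eq:region} --- is open in $\mathbb{R}^m_{>0}\times\mathbb{R}^d_{>0}$, as claimed. I do not anticipate any real obstacle here: essentially all of the difficulty has already been absorbed into Theorem~\ref{thm:openmap}, and the only point requiring any care is confirming that the rational functions $\bar a_i$ are genuinely continuous on the positive orthant, which is precisely what Definition~\ref{def:effective}(i) supplies.
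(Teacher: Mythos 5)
Your proposal is correct and is essentially identical to the paper's own proof: both invoke Theorem~\ref{thm:openmap} for the openness of $\psi(U)$ and then conclude via the continuity of $\varphi$ that the preimage is open. The only difference is that you spell out why the rational functions $\bar a_i$ are continuous on the positive orthant, a detail the paper leaves implicit.
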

\begin{proof}
The map $\varphi$ is continuous, and, 
by Theorem \ref{thm:openmap}, $\psi(U)$ is open.
Hence, $\varphi^{-1}\left(\psi(U)\right)$ is open.
\end{proof}

\section{Many MESSI networks are linearly binomial networks}\label{sec:messi}
In this section, we show that a class of networks that includes many biological signaling networks 
are linearly binomial networks (Theorem~\ref{thm:binom}).  These networks are so-called MESSI networks~\cite{messi}, whose definition we recall below.

\subsection{Definition of MESSI systems}\label{sec:defmessi}


A chemical reaction network with species set
$\Sp$ is a {\it MESSI network} if there is a partition 
\begin{equation}\label{eq:S}
\Sp=\Sp^{(0)}\bigsqcup \Sp^{(1)} \bigsqcup \Sp^{(2)} \bigsqcup \dots \bigsqcup \Sp^{(m)},
\end{equation}
where $m \ge 1$ and $\bigsqcup$ denotes disjoint union, such that the complexes and reactions satisfy the 
conditions below.
Species in $\Sp^{(0)}$ and $\Sp_1:= \Sp \setminus\Sp^{(0)}$ are called, respectively, {\em intermediate} and {\em core}.
Complexes are also partitioned into two disjoint sets: {\em intermediate complexes} and {\em core complexes} \cite{fw13}.
Each intermediate complex consists of a unique intermediate species.

Core complexes satisfy the following two conditions:
\begin{enumerate}[(i)]
\item They are molecular or bimolecular and consist of either one or two core species. 
\item If the core complex consists of two species $X_i, X_j$, they {\em must} belong to \emph{different} sets 
$\Sp^{(\alpha)}, \Sp^{(\beta)}$ (with $\alpha \neq \beta$ and $\alpha, \beta \geq1$).
\end{enumerate}

We say that complex $y$ reacts to complex $y'$ \emph{via intermediates} if either $y\to y'$ or
there exists a path of reactions from $y$ to $y'$ {\em only through intermediate complexes}.  
This is denoted  by $y \uri y'$.
The intermediate complexes of a MESSI network satisfy moreover the following condition. 
For every intermediate complex $y$, there exist core
complexes $y'$ and $y''$ such that $y'\uri y$ and $y\uri y''$.


\medskip

The reactions of MESSI networks are constrained by the following rules:
\begin{enumerate}[(i)]
\item  If three species are related by $X_i+ X_j \uri X_k$ or $X_k \uri X_i + X_j$, 
then $X_k$ is an intermediate species.
 \item If two core species $X_i, X_j$ 
 are related by $X_i\uri X_j$, then there exists $\alpha \ge 1$ such that both belong to $\Sp^{(\alpha)}$.
 \item If $X_i+X_j\uri X_k+X_\ell$, then 
 there exist $\alpha \neq \beta$ such that $X_i,X_k \in \Sp^{(\alpha)}$,
$X_j,X_\ell \in
\Sp^{(\beta)}$ or $X_i,X_\ell \in \Sp^{(\alpha)}$, $X_j,X_k \in
\Sp^{(\beta)}$.
\end{enumerate}

The partition~\eqref{eq:S} \emph{defines a MESSI structure} on
the network.
A {\em MESSI system} is the mass-action kinetics dynamical system~\eqref{sys} associated with a MESSI network. 

\subsection{The associated digraphs}\label{ssec:associated_digraphs}

\medskip

We now present the digraphs $G_1$, $G_2^\circ$, and $G_E$ associated to a MESSI network $G$ through an example. 
For the actual definition we refer the reader to \cite{messi}.

\begin{example}[Two-layer cascade, continued] \label{ex:2-layer-again}
Recall the two-layer cascade network in Example~\ref{cascade}. 
We consider the following partition $\Sp^{(0)}=\{ES_0,FS_1,S_1P_0, FP_1\}$  (intermediate species), and
$\Sp^{(1)}=\{S_0,S_1\}$, $\Sp^{(2)}=\{P_0,P_1\}$, $\Sp^{(3)}=\{E\}$, $\Sp^{(4)}=\{F\}$ (partition of the core species).
The intermediate complexes correspond to the intermediate species, and the remaining complexes are core complexes. 
This partition defines a MESSI structure in the network.

We  first define a digraph $G_1$ by keeping the core complexes as vertices and considering the edges
$y\to y'$ if $y \uri y'$  in $G$. The labels assigned to these edges, $\Katau(\kappa)$,
are rational functions of the original rate constants $\kappa$, following~\cite[Theorem~3.1]{fw13}:
\begin{align}\label{eq:katau}
\Katau ~:~\mathbb{R}^m_{>0} & ~\to~ \mathbb{R}^{r}_{>0} \\ \notag
                \kappa & ~\mapsto~ (\tau_1(\kappa), \tau_2(\kappa), \dots, \tau_{r} (\kappa))~,
\end{align}
where $r$ denotes the number of edges in $G_1$.
We then define a new digraph $G_2$ where we ``hide'' the concentrations of some of the species in the labels.
We keep all monomolecular reactions $X_i\to X_j$ and for each reaction $X_i+X_\ell \overset{\tau}{\longrightarrow} X_j+X_m$,
with $X_i,X_j \in \Sp^{(\alpha)}$, $X_\ell,X_m \in \Sp^{(\beta)}$,
we consider two reactions $X_i \overset{\tau x_\ell}{\longrightarrow} X_j$
and $X_\ell \overset{\tau x_i}{\longrightarrow} X_m$.
We obtain a multidigraph $MG_2$ that may contain loops or parallel edges between some pairs of nodes
(i.e., directed edges with the same source and target nodes).
We define the digraph $G_2$ by collapsing into one edge all parallel edges in $MG_2$ and we define the
labels of each edge as the sum of the labels of the corresponding collapsed edges in $MG_2$.
Note that these labels might depend on some of the concentrations.
We will moreover denote by $G_2^\circ$ the digraph obtained from $G_2$ by deleting loops and isolated nodes.
We finally define the associated digraph $G_E$. The set of vertices of $G_E$ equals $\{\Sp^{(\alpha)}\mid \, \alpha \ge 1\}$.
The pair $(\Sp^{(\alpha)},\Sp^{(\beta)})$ is an edge of $G_E$ when there is a species
in $\Sp^{(\alpha)}$ in a label of an edge in $G_2^\circ$ between (distinct) species of $\Sp^{(\beta)}$.

The graphs $G_1$, $G_2^\circ$, and $G_E$ associated to this network are the following:

 \begin{tabular}{|>{\centering}m{1.3in} >{\centering}m{0.1in}  >{\centering}m{1.3in} >{\centering\arraybackslash}m{1.6in}|}
 \hline
 \multicolumn{1}{|l}{$G_1$:} &  & \multicolumn{1}{l}{$G_2^\circ$:} & \multicolumn{1}{l|}{$G_E$:}\\
 $\begin{array}{l}
  S_0+E \overset{\tau_1}{\rightarrow} S_1+E \\
  S_1+F \overset{\tau_2}{\rightarrow} S_0+F \\
  P_0+S_1 \overset{\tau_3}{\rightarrow} P_1+S_1\\
  P_1+F \overset{\tau_4}{\rightarrow} P_0+F
\end{array}$ & 
\text{\Large \textcolor{blue}{$\Rightarrow$}} & 
$\begin{array}{l}
S_0\underset{\tau_2x_{10}}{\overset{\tau_1x_{9}}{\rightleftarrows}} S_1\\
P_0\underset{\tau_4x_{10}}{\overset{\tau_3x_3}{\rightleftarrows}} P_1
\end{array}$ & 
\begin{tikzpicture}[ampersand replacement=\&] 
\matrix (m) [matrix of math nodes, row sep=1.5em, column sep=1em, text height=1.5ex, text depth=0.25ex]
{ \Sp^{(3)}\&  \Sp^{(1)} \& \Sp^{(2)} \\
\Sp^{(4)} \&  \& \\};
\draw[->]($(m-1-1)+(0.4,0)$) to node[below] (x) {} ($(m-1-2)+(-0.4,0)$);
\draw[->]($(m-2-1)+(0.4,0)$) to node[below] (x) {} ($(m-1-2)+(-0.35,-0.15)$);
\draw[->]($(m-1-2)+(0.4,0)$) to node[below] (x) {} ($(m-1-3)+(-0.4,0)$);
\draw[->]($(m-2-1)+(0.4,0)$) to node[below] (x) {} ($(m-1-3)+(-0.4,-0.1)$);
\end{tikzpicture}\\ 
& & & \\

\hline
\end{tabular}\\

Here, as defined in \eqref{eq:katau}, $\Katau: \mathbb{R}^{12}_{>0}  ~\to~ \mathbb{R}^{4}_{>0}$ is 
\begin{equation*}
\begin{tabular}{llll}
$\tau_1=\frac{\kappa_1}{\kappa_2+\kappa_3}$,\;\;& $\tau_2=\frac{\kappa_4}{\kappa_5+\kappa_6}$,\;\;& $\tau_3=
\frac{\kappa_7}{\kappa_8+\kappa_9}$,\;\;&$\tau_4=\frac{\kappa_{10}}{\kappa_{11}+\kappa_{12}}$.
\end{tabular}
\end{equation*}
\end{example}

\begin{remark}\label{rem:G1}
  An important fact is that for any MESSI network with digraph $G$, once the edges in $G_1$ are labeled with the 
  constants $\Katau(\kappa)$, the steady states of the mass-action chemical reaction system defined by $G$ and those of $G_1$ 
  are in one-to-one correspondence. Moreover, $G_1$ and $G_2^\circ$, together with the corresponding equations 
  of the intermediate species, define the whole variety of steady states of $G$.
\end{remark}

\subsection{Main result: conditions for MESSI linearly binomial networks}

 In this section we give sufficient conditions on a MESSI system
that ensure that the network is linearly binomial (Definition~\ref{def:lbin}).
We  define a further condition
$(\cond)$:  For every intermediate complex $y$ there exists a \emph{unique} core complex $y'$ such that $y'\uri y$ in $G$.

\begin{theorem}\label{thm:binom}
Let $G$ be the underlying digraph of a MESSI system with $m$  reactions ($m$ directed edges). Assume that $G$ satisfies condition~$(\cond)$ and that the associated digraph $G_E$ has no directed cycles, the underlying undirected graph 
of the associated graph $G_2^\circ$ is a forest (an acyclic graph), and $MG_2$ has no parallel edges. 
 Then, there exist:
\begin{enumerate}
\item $\bar a_1(\kappa), \bar a_2(\kappa), \dots, \bar a_{\bar m}(\kappa) \in \mathbb{Q}(\kappa)$ 
such that $\bar a_i(\kappa^*)$ is defined and, moreover, $\bar a_i(\kappa^*)>0$ for every 
$i=1,2,\dots,\bar m$ and for all $\kappa^* \in \mathbb{R}^m_{>0}$, and 
\item  an $(s-d)\times(s-d)$ matrix $M(\kappa)$ with $\det M(\kappa^*)>0$ for all 
$\kappa^* \in \mathbb{R}^m_{>0}$,
\end{enumerate}	
such that the functions 
\begin{equation*}
(\bar h_{j_1}, \bar h_{j_2}, \dots, \bar h_{j_{s-d}})^{\top}
\quad := \quad M(\kappa) ~  (f_{j_1}, f_{j_2}, \dots, f_{j_{s-d}})^{\top}~,
\end{equation*}
 are binomials, where $f$ is the polynomial system obtained from the ODEs \eqref{sys} of 
 the network $G$,  and every nonconstant coefficient in $\bar h_{j_l}$ is equal to a rational-number multiple of some $\bar a_i(\kappa)$.
  Therefore, if the map $\bar a$ is surjective, then $\bar a$ is a reparametrization map, as in 
 \eqref{eq:ep}, and $\bar{h}_{j_1},\dots,\bar{h}_{j_{s-d}}$ is an effective steady-state function 
 $\augmentH(x)$ of $G$, as in \eqref{eq:effective-reln}--\eqref{consys-h}, and so 
 $G$ is a linearly binomial network. 
\end{theorem}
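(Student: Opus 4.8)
The plan is to build the matrix $M(\kappa)$ and the effective parameters in stages dictated by the MESSI structure, treating the equations for intermediate species and for core species separately and then assembling the linear operations into a single triangular transformation. First I would partition the non-conservation-law coordinates $f_{j_1}, \dots, f_{j_{s-d}}$ into those indexed by intermediate species and those indexed by core species. For each intermediate species $X_k$, condition $(\cond)$ guarantees a \emph{unique} core complex $y'$ with $y' \uri X_k$, so the elimination procedure of Feliu and Wiuf~\cite{fw13} (which underlies the labels $\Katau(\kappa)$ in Remark~\ref{rem:G1}) expresses the steady-state equation for $X_k$, after the standard triangular linear combination of intermediate-flux equations, as a binomial of the form $x^{\gamma} - \bar a_i(\kappa)\, x^{\delta}$ with $x^{\delta}$ a monomial in the core species. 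The linear operations involved here are lower-triangular with positive pivots (reciprocals of positive sums of rate constants), so they contribute a positive-determinant block to $M(\kappa)$.

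Next I would treat the core species by passing, via the steady-state correspondence of Remark~\ref{rem:G1}, to the digraphs $G_1$ and $G_2^\circ$. Since $MG_2$ has no parallel edges, every label of $G_2^\circ$ is a \emph{single} monomial $\tau_i\, x_\ell$ rather than a sum, which is exactly what is needed to keep the resulting flux relations binomial. Because the underlying undirected graph of $G_2^\circ$ is a forest, each connected component is a tree; for a monomolecular network whose underlying graph is a tree there are no cycle-compatibility relations among the edge fluxes, so the steady-state conditions can be reorganized, one edge at a time, into binomials $\tau_i\, x_\ell\, x_a - \tau_j\, x_{\ell'}\, x_b$ expressing that the net flux across each tree edge vanishes. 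Summing fluxes along the tree from the leaves inward realizes these binomials as the image of the core-species equations under a triangular, positive-determinant linear map.

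The acyclicity of $G_E$ is what lets me glue these local operations into one coherent ordering and verify that every nonconstant coefficient is genuinely an effective parameter. Since $G_E$ has no directed cycles, I would fix a topological order on the blocks $\Sp^{(\alpha)}$ so that the species appearing in the label of any edge among species of a given block belong to strictly earlier blocks. Processing the blocks in this order, each binomial's nonconstant coefficient is a ratio or product of the $\tau_i(\kappa)$'s, hence a rational-number multiple of some $\bar a_i(\kappa) \in \mathbb{Q}(\kappa)$, positive for every $\kappa^* \in \mathbb{R}^m_{>0}$, and the overall transformation $M(\kappa)$ is, up to a permutation of rows, block upper-triangular with the positive-determinant blocks produced in the two previous steps on its diagonal. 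Therefore $\det M(\kappa^*) > 0$ for all $\kappa^* \in \mathbb{R}^m_{>0}$, the functions $\bar h_{j_\ell}$ are binomials as in~\eqref{eq:linearchange} with coefficients that are rational multiples of the $\bar a_i(\kappa)$, and Definition~\ref{def:lbin} is satisfied; when $\bar a$ is moreover surjective, conditions (i)--(iii) of Definition~\ref{def:effective} hold and $\augmentH(x)$ is the asserted effective steady-state function.

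The main obstacle I expect is the second step: proving that the forest condition on $G_2^\circ$ together with the absence of parallel edges in $MG_2$ is \emph{exactly} strong enough to force every transformed core-species equation to be a binomial rather than a longer polynomial. The delicate point is that a cycle in $G_2^\circ$ would introduce a linear dependence among the edge-flux relations, and eliminating it would generically create a trinomial; one must argue that the forest hypothesis removes precisely these dependencies, so that the number of independent binomial relations matches the number of core-species equations and they can be simultaneously realized as the rows of a single invertible matrix $M(\kappa)$ with positive determinant. Coordinating this with the acyclic ordering from $G_E$, so that no circular substitution among blocks occurs, is where the combinatorial hypotheses must be used in full.
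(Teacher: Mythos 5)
Your proposal follows essentially the same route as the paper's proof: condition~$(\cond)$ together with the Feliu--Wiuf elimination yields binomials for the intermediate species, the forest structure of $G_2^\circ$ is exploited by a leaf-pruning induction (your ``summing fluxes from the leaves inward''), and the acyclicity of $G_E$ supplies the topological order of blocks $L_0, L_1, \dots$ that makes the eliminations non-circular. The only notable imprecision is that the intermediate-species block of $M(\kappa)$ is not lower-triangular with positive pivots but rather the inverse of the first principal minor of the Laplacian of the strongly connected graph $G_y$ (invertible and well defined for all positive $\kappa$ by the Matrix-Tree theorem), and the ``delicate point'' you flag in your final paragraph is resolved by exactly the leaf-pruning argument you already sketch, using that acyclicity of $G_E$ forces all core reactions to be enzymatic so that pruning a leaf does not disturb the equations of the species appearing in the edge labels.
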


We prove Theorem~\ref{thm:binom} in Appendix~\ref{sec:pf-thm-binom}.

\begin{remark}
The extra hypothesis in Theorem~\ref{thm:binom} that guarantees that $G$ is a linearly binomial 
network -- namely, the condition that the map $\bar a$ is surjective -- holds for every example we have examined. 
See, for instance, Examples \ref{CFMW612} and \ref{cascade}.
\end{remark}

\begin{example}[Two-layer cascade, continued] \label{ex:2-layer-again-again}
æ Recall the two-layer cascade network in Examples~\ref{cascade} and~\ref{ex:2-layer-again}. Notice that $G$ satisfies
condition~$(\cond)$ and that the associated digraph $G_E$ has no directed cycles (see Example~\ref{ex:2-layer-again}). Moreover, the
underlying undirected graph of the associated graph $G_2^\circ$ is a forest:
æ \[S_0 - S_1 \qquad P_0 -æ P_1\]
æand $MG_2$ has no parallel edges.
Hence, Theorem~\ref{thm:binom} applies, and in fact the resulting
binomials are the ones we computed in Example~\ref{cascade}.
æ\end{example}

\section{Establishing multistationarity using triangular forms} \label{sec:triangular}
For networks that are non-dissipative or have boundary steady states, 
Theorems~\ref{thm:c-general} and~\ref{thm:NP}
do not apply. 
Accordingly, in this section, we propose a method to analyze such networks. 
The main idea is to find a degenerate positive steady state $x^*$ and then 
perturb the corresponding parameters.
The hope is that 
$x^*$ will break into two distinct positive steady states. 
 We prove that this approach will succeed under certain conditions 
 (Theorem~\ref{prop:triangular}).
Specifically, we require that the steady-state equations admit a ``triangular form'' (Definition~\ref{def:triangular}).  
Finally, we show that 
such a triangular form exists whenever the steady-state equations form a ``general zero-dimensional system'' 
(Corollary~\ref{cor:tri}).

Before stating our results, we introduce our running example, which we first show has boundary steady states (so, our earlier results do not apply).
\begin{example}[Calvin cycle]\label{cc}
Consider the following 
 ``elementary mode''
 of the Calvin cycle network 
proposed in~\cite[Fig.\ 4]{GAKKSN2011}:
\begin{center}
\begin{tabular}{ll}
\ce{RuBP + E_1
->[\kappa_1]
RuBPE_1
->[\kappa_2]
2\cdot PGA + E_1
},
&
\ce{
PGA + E_2
->[\kappa_3]
PGAE_2
->[\kappa_4]
DPGA + E_2},\\
\ce{
DPGA + E_3
->[\kappa_5]
DPGAE_3
->[\kappa_6]
GAP + E_3
},
&
\ce{
5\cdot GAP + E_4
->[\kappa_{7}]
GPAE_4 
->[\kappa_{8}]
3\cdot Ru5P + E_4
},\\
\ce{
Ru5P + E_5
->[\kappa_{9}]
Ru5PE_5
->[\kappa_{10}]
RuBP + E_5
},
&
\ce{
GAP + E_7
->[\kappa_{11}]
GAPE_7
->[\kappa_{12}]
E_7}.
\end{tabular}
\end{center}
This network is obtained by shutting down  $9$ transporter reactions from the original Calvin cycle network (see $``v_1^{EM}"$ in 
 \cite[page 218]{GAKKSN2011}). 
Let
\begin{center}
\begin{tabular}{llllll}
$X_1$=\ce{RuBP},& $X_2$=\ce{E_1},& $X_3$=\ce{RuBPE_1}, & $X_4$=\ce{PGA}, & $X_5$=\ce{E_2}, & $X_6$=\ce{PAGE_2},\\
$X_7$=\ce{DPGA}, & $X_8$=\ce{E_3}, & $X_9$=\ce{DPGAE_3},   & $X_{10}$=\ce{GAP}, & $X_{11}=\ce{E_4}$,   &$X_{12}=\ce{GAPE_4}$,\\
$X_{13}$=\ce{Ru5P},& $X_{14}$=\ce{E_5},& $X_{15}$=\ce{Ru5PE_5},                             & $X_{16}$=\ce{E_7},& $X_{17}$=\ce{GAPE_7}. &
\end{tabular}
\end{center}
The function $\F(x)$ is 
\begin{center}
\begin{tabular}{lll}
$f_{c, \kappa,1} = -\kappa_1x_1x_2+\kappa_{10}x_{15}$,  & &$f_{c, \kappa,2} = x_2 + x_3 - c_1$, \\
$f_{c, \kappa,3} =  \kappa_1x_1x_2-\kappa_2x_3$,  & &$f_{c, \kappa,4} = 2\kappa_2x_3 -\kappa_3x_4x_5$, \\
$f_{c, \kappa,5} = x_5 + x_6 - c_2$, &  &$f_{c, \kappa,6}= \kappa_3x_4x_5 - \kappa_4x_6$, \\
$f_{c, \kappa,7}= \kappa_4x_6-\kappa_5x_7x_8$, & &$f_{c, \kappa,8} = x_8 + x_9 - c_3 $, \\
$f_{c, \kappa,9} = \kappa_5x_7x_8 - \kappa_6x_9$, & &$f_{c, \kappa,10} =\kappa_6x_9 -5\kappa_{7}x_{10}^5x_{11}-\kappa_{11}x_{10}x_{16}$, \\
$f_{c, \kappa,11} =x_{11}+x_{12}-c_4 $, & &$f_{c, \kappa,12} =\kappa_{7}x_{10}^5x_{11}-\kappa_{8}x_{12}$,\\
$f_{c, \kappa,31} =3\kappa_{8}x_{12}-\kappa_{9}x_{13}x_{14}$, & &$f_{c, \kappa,14} =x_{14} + x_{15} - c_5$,\\
$f_{c, \kappa,15} =\kappa_{9}x_{13}x_{14}-\kappa_{10}x_{15}$, & &$f_{c, \kappa,16} =x_{16} + x_{17} - c_6$, \\
$f_{c, \kappa,17} =\kappa_{11}x_{10}x_{16}-\kappa_{12}x_{17}$.
\end{tabular}
\end{center}
If we set $x_1=x_3=x_4=x_6=x_7=x_9=x_{10}=x_{12}=x_{13}=x_{15}=x_{17}=0$, the polynomials $f_{c, \kappa, i}$ (for $i=1,2,\ldots, 17)$ above become
\[0,~ x_2-c_1,~ 0,~ 0,~ x_5-c_2,~ 0,~ 0,~ x_8 - c_3,~ 0,~ 0,~ x_{11}-c_4,~ 0,~ 0,~ x_{14}-c_5,~ 0,~ x_{16}-c_6,~ 0~.\]
Thus, for any rate-constant vector $\kappa$ and for any total-constant vector $c$, 
we have the following boundary steady state:
\[(0, c_1, 0, 0, c_2, 0, 0, c_3, 0, 0, c_4, 0, 0, c_5, 0, c_6, 0)~.\]
Thus, Theorem \ref{thm:c-general} does not apply to the Calvin cycle network. 
\end{example}

\subsection{Steady-state equations that admit a triangular form}
In this section, we investigate multistationarity for networks whose steady-state equations admit a triangular form (Theorem~\ref{prop:triangular}).

\begin{definition} \label{def:triangular}
Let $G$ be a reaction network with $s$ species. 
The steady-state equations of~$G$ {\em admit a triangular form} if there exists a  system $\augmentH(x)=0$ \eqref{consys-h} for $G$ 
and there exist functions $T_1, T_2,\ldots, T_s$ of the following form:
		\begin{align*}
		T_s~&=~\theta_s (c, a, x_s),  \\
		T_{s-1} ~&=~ x_{s-1} - \theta_{s-1} (c, a, x_s), \\
		T_{s-2} ~&=~ x_{s-2} - \theta_{s-2} (c, a, x_{s-1}, x_s), \\
		&~~ \vdots \\
		T_1 ~&=~ x_1 - \theta_1 (c, a, x_2, \dots, x_s)~ , 
		\end{align*}
such that:
\begin{enumerate}[(i)]
	\item each 
	$\theta_i: 
		\mathbb{R}^{d}_{>0} \times  \mathbb{R}^{\bar m}_{>0}  \times \mathbb{R}^{s-1}_{>0}
			 \to \mathbb{R} $ 
	is a $C^2$-function, 
 $\theta_s$ does not depend on $x_1,x_2,\dots, x_{s-1}$, and 
 	(for $1 \leq i \leq s-1$) $\theta_i$ does not depend on $x_1, x_2, \dots, x_i$; and
	\item there exists a variety
	 $\mathcal{W} \subsetneq   \mathbb{C}^{d} \times  \mathbb{C}^{\bar m}$
	 such that  
	for all $(c^*,{a}^*)  \in \mathbb{R}_{>0}^{d} \times \mathbb{R}_{>0}^{\bar m} \setminus \mathcal{W} $, 
	the positive zeros  of $h_{c^*,a^*}$ coincide with the positive zeros
	of the system \[T_s(c^*, a^*, x_s),~\dots~,~ T_1 (c^*, a^*, x_1,x_2, \dots, x_{s}).\] 
\end{enumerate}
\end{definition}

We recall the standard notion of singular point and we state a useful lemma (Lemma \ref{lem:4}) that we will prove in Appendix~\ref{sec:pf-lem-4} using the Implicit Function Theorem.

\begin{definition}\label{def:multiplicity}
 Consider a polynomial $f \in {\mathbb R}[b, z]$, where $(b, z)\in  {\mathbb R}^{n}\times {\mathbb R}$. We say $(b^*, z^*)\in {\mathbb R}^{n}\times {\mathbb R}$ is a {\em singular point} of $f$ if
$f(b^*, z^*)=0~, \;\;\frac{\partial f}{\partial z}(b^*, z^*)=0$, and 
$\frac{\partial f}{\partial b_i}(b^*, z^*)=0, \text{ for all } i=1, \dots, n$.
We say $(b^*, z^*)$ is a {\em regular point} of $f$ if
$f(b^*, z^*)=0$ and $(b^*, z^*)$ is not a singular point of $f$. 
Given a univariate polynomial $f \in {\mathbb R}[z]$, we say that $z^*\in {\mathbb R}$ is a {\em multiplicity-$2$ solution} of $f(z)=0$ if 
$f(z^*)=0~, \;  \frac{d f}{d z}(z^*)=0$, and $\frac{d^2 f}{d z^2}(z^*)\neq 0.$
\end{definition}

\begin{lemma} \label{lem:4}
Consider a  $C^2$-function $f: \mathbb{R}^{n+1} \to \mathbb{R}$. 
Assume that $(a^*,z^*) = (a_1^*,a_2^*,\dots, a_n^*,z)\in \mathbb{R}^{n+1}$ satisfies the following:
	\begin{enumerate}
	\item $z^*\in {\mathbb R}$ is a multiplicity-$2$ solution of $f(a^*, z)=0$,  and 
	\item  there exists an index $\ell \in [n]$ such that $\frac{\partial f}{\partial a_\ell}(a^*,z^*)\neq0$ {\em (}i.e. $(a^*,z^*)$ is 
	a regular point of $f${\em )}.
	\end{enumerate}
Then for every $\epsilon>0$,  there exists $\delta>0$ such that for all $\delta' \in (0,\delta)$, for either $a_\ell^{**}=a_\ell^*-\delta'$ 
or $a_\ell^{**}=a_\ell^*+\delta'$, the equation $f(a^{**}, z)=0$, where we set $a^{**}_i=a^*_i$ for all $i\neq \ell$, has two distinct real 
solutions $z^{(1)}$ and $z^{(2)}$, for which $|z^{(1)}-z^*|<\epsilon$ and $|z^{(2)}- z^*|<\epsilon$. 
\end{lemma}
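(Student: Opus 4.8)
The plan is to collapse the problem to two variables and then split the double root by a one-dimensional intermediate value argument. First I would freeze every coordinate except the $\ell$-th: set $t := a_\ell$ and define the $C^2$ function $F(t,z) := f(a_1^*,\dots,a_{\ell-1}^*,t,a_{\ell+1}^*,\dots,a_n^*,z)$, writing $t^* := a_\ell^*$. By the hypotheses we have $F(t^*,z^*)=0$, $\frac{\partial F}{\partial z}(t^*,z^*)=0$, $\frac{\partial^2 F}{\partial z^2}(t^*,z^*)\neq 0$, and $\frac{\partial F}{\partial t}(t^*,z^*)=\frac{\partial f}{\partial a_\ell}(a^*,z^*)=:B\neq 0$. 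Replacing $f$ by $-f$ if necessary (this leaves the zero set, and hence the entire statement, unchanged), I may assume $\frac{\partial^2 F}{\partial z^2}(t^*,z^*)>0$.

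Next I would exploit that $z^*$ is a \emph{strict local minimum} of $z \mapsto F(t^*,z)$ with minimum value $0$: since $\frac{\partial F}{\partial z}(t^*,z^*)=0$ and $\frac{\partial^2 F}{\partial z^2}(t^*,z^*)>0$, there is an $\eta_0>0$ so that $F(t^*,z^*\pm\eta)>0$ for every $\eta\in(0,\eta_0)$. I then fix any $\eta\in\bigl(0,\min(\eta_0,\epsilon)\bigr)$, and use continuity of $F$ to pick $\delta_1>0$ with $F(t,z^*-\eta)>0$ and $F(t,z^*+\eta)>0$ whenever $|t-t^*|<\delta_1$.

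The third step manufactures a point where $F$ is negative. Because $F$ is $C^1$ in $t$, the mean value theorem gives $F(t,z^*)-F(t^*,z^*)=\frac{\partial F}{\partial t}(\xi,z^*)\,(t-t^*)$ for some $\xi$ between $t^*$ and $t$; since $\frac{\partial F}{\partial t}(\cdot,z^*)$ is continuous and equals $B\neq 0$ at $t^*$, there is $\delta_2\in(0,\delta_1)$ on which this partial keeps the sign of $B$. Hence, with the fixed sign $s:=-\operatorname{sign}(B)$, one has $F(t^*+s\delta',z^*)<0$ for every $\delta'\in(0,\delta_2)$. Setting $\delta:=\delta_2$ and $a_\ell^{**}:=t^*+s\delta'$ (all other coordinates kept at $a^*$), the three sign conditions $F(t^{**},z^*-\eta)>0$, $F(t^{**},z^*)<0$, $F(t^{**},z^*+\eta)>0$ hold simultaneously. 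The intermediate value theorem then produces a zero $z^{(1)}\in(z^*-\eta,z^*)$ and a zero $z^{(2)}\in(z^*,z^*+\eta)$ of $z\mapsto F(t^{**},z)$; these are distinct and satisfy $|z^{(i)}-z^*|<\eta<\epsilon$, which is exactly the claim, with the ``either/or'' corresponding to the choice of the sign $s$.

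The only genuine work is the bookkeeping of the uniform neighborhoods: one must choose $\eta$ \emph{first} (to pin down the strict minimum and the value $0$), and only then shrink $\delta$ so that positivity at $z^*\pm\eta$ and a definite sign of $F(\cdot,z^*)$ persist together for \emph{all} $|t-t^*|<\delta$, which is what makes the conclusion hold for every $\delta'\in(0,\delta)$. I expect this uniformity, rather than any individual inequality, to be the main point to handle with care; everything else is the standard fact that a nondegenerate double root of a $C^2$ family splits into two simple roots under a transverse perturbation. Equivalently, one could package the argument through the Implicit Function Theorem applied to $\frac{\partial F}{\partial z}$, obtaining a $C^1$ critical-point curve $z=\zeta(t)$ with $\zeta(t^*)=z^*$ and observing that the critical value $\phi(t):=F(t,\zeta(t))$ satisfies $\phi(t^*)=0$ and $\phi'(t^*)=B\neq 0$, so that $\phi$ becomes strictly negative on one side of $t^*$ and the same intermediate value argument applies.
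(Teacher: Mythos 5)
Your proof is correct, and it takes a genuinely different route from the paper's. The paper applies the Implicit Function Theorem to $f$ itself, using $\frac{\partial f}{\partial a_\ell}(a^*,z^*)\neq 0$ to solve the equation $f=0$ for the parameter $a_\ell$ as a function $\beta(z)$ near $z^*$; differentiating the identity $f(\hat\beta(z),z)=0$ twice then yields $\beta'(z^*)=0$ and $\beta''(z^*)\neq 0$, so $\beta$ has a strict local extremum at $z^*$ and values of $a_\ell$ slightly to one side of $a_\ell^*=\beta(z^*)$ have two preimages near $z^*$. You instead avoid the IFT entirely: after normalizing $\frac{\partial^2 F}{\partial z^2}(t^*,z^*)>0$, you use the Taylor/second-derivative test to get $F(t^*,z^*\pm\eta)>0$, propagate this by continuity to a uniform $t$-neighborhood, push $F(\cdot,z^*)$ below zero via the mean value theorem in $t$, and conclude by two applications of the intermediate value theorem. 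Your version is more elementary and makes the location of the two roots explicit (one on each side of $z^*$), and you are right that the only delicate point is the order of quantifiers --- fixing $\eta$ before shrinking $\delta$ so that all three sign conditions hold simultaneously for every $\delta'\in(0,\delta)$, which you handle correctly. The paper's IFT route is shorter on the page and, as a by-product, identifies the zero set near $(a^*,z^*)$ as a graph, which would also give an exact count of the nearby roots; your closing remark about applying the IFT to $\frac{\partial F}{\partial z}$ to track the critical value is yet a third valid packaging, closer in spirit to the paper's but not identical to it.
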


We are ready to present a result that uses the existence of a triangular form to find witnesses of multistationarity. 

\begin{theorem}[Multistationarity when steady-state equations admit a triangular form] \label{prop:triangular}
Let $G$ be a reaction network with $s$ species.  
Suppose that the steady-state equations of $G$ admit 
a simplified system $\augmentH$,
a triangular form 
via functions $T_1, T_2, \dots, T_s$,
 and 
a variety 	 ${\mathcal W} \subsetneq   \mathbb{C}^{d} \times \mathbb{C}^{\bar m}$
 (as in Definition~\ref{def:triangular}).
 Fix a total-constant vector ${ c}^* \in \mathbb{R}^{d}_{>0}$, 
 effective parameters $a^* \in \mathbb{R}^{\bar m}_{>0}$,  and $x^* \in \mathbb{R}^s_{>0}$.  Assume that:
\begin{enumerate}
	\item $T_1(c^*,a^*,x_2^*, \dots, x_s^*)=T_2(c^*,a^*,x_3^*, \dots, x_s^*)=\dots=T_s(c^*,a^*,x_s^*)=0$ 
	(i.e., $x^*$ is a positive steady state of the system defined by $G$, ${ c}^*$, and any $\kappa^*$ for which $\bar a(\kappa^*)=a^*$),
	\item $x_s^*$ is a multiplicity-2 solution of $ T_s(c^*, a^*,x_s) ~=~ 0$,
	\item there exists an index $i \in \{1, 2, \dots, \bar{m} \}$ such that $(a_i^*, x_s^*)$ is a regular point of 
		\[ T_s(c^*; a_1^*,\ldots, a_{i-1}^*, a_{i}, 
	a_{i+1}^*,  \ldots, a_{\bar m}^*, x_s) ~=~ 0~,\]
	\item $(c^* , a^*) \notin {\mathcal W}$.
\end{enumerate}
Then $G$ is multistationary.  Moreover, 
a witness to multistationarity is guaranteed as follows:
there exists $\delta>0$ such that for all $\delta' \in (0,\delta)$, for either 
$a_i^{**}=a_i^*-\delta'$ or $a_i^{**}=a_i^*+\delta'$, the mass-action system given by $G$, 
$c^*$, and any 
$\kappa^*$ for which 
$\bar a(\kappa^*)=(a_1^*,\ldots, a_{i-1}^*, a_{i}^{**}, a_{i+1}^*,  \ldots, a_{\bar{m}}^*)$
 has at least two positive steady states.
\end{theorem}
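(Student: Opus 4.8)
The plan is to exploit the triangular structure to collapse the problem to the single univariate equation $T_s(c^*, a^*, x_s) = 0$, split its double root $x_s^*$ into two nearby simple roots by perturbing $a_i^*$ via Lemma~\ref{lem:4}, and then lift these two roots back to two full positive solutions by back-substituting through $\theta_{s-1}, \ldots, \theta_1$. First I would view $T_s(c^*, a, x_s) = \theta_s(c^*, a, x_s)$ as a $C^2$-function of the $\bar m + 1$ variables $(a, x_s)$, holding $c^*$ fixed. Hypothesis (2) says $x_s^*$ is a multiplicity-$2$ solution of $T_s(c^*, a^*, x_s) = 0$, and hypothesis (3) says the index $i$ makes $(a_i^*, x_s^*)$ a regular point; these are precisely conditions (1) and (2) of Lemma~\ref{lem:4} with $z = x_s$ and $\ell = i$. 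Applying the lemma produces $\delta_0 > 0$ so that for every $\delta' \in (0,\delta_0)$, for one of the two signs, the perturbed parameter $a^{**}$ (with $a_i^{**} = a_i^* \pm \delta'$ and $a_j^{**} = a_j^*$ for $j \neq i$) makes $T_s(c^*, a^{**}, x_s) = 0$ have two distinct real roots $x_s^{(1)}, x_s^{(2)}$ within any prescribed distance $\epsilon$ of $x_s^*$.

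Next I would lift these roots. Define the back-substitution map $\Theta(a, x_s) = (x_1, \dots, x_s)$ by computing $x_k = \theta_k(c^*, a, x_{k+1}, \dots, x_s)$ downward from $k = s-1$ to $k = 1$; this is continuous since each $\theta_k$ is $C^2$. By hypothesis (1), $\Theta(a^*, x_s^*) = x^* \in \mathbb{R}^s_{>0}$, so by continuity there is a neighborhood of $(a^*, x_s^*)$ on which $\Theta$ takes values in $\mathbb{R}^s_{>0}$. Shrinking $\delta_0$ and $\epsilon$, I can simultaneously arrange that $a^{**} \in \mathbb{R}^{\bar m}_{>0}$, that each $x^{(j)} := \Theta(a^{**}, x_s^{(j)})$ lies in $\mathbb{R}^s_{>0}$, and that $x^{(1)} \neq x^{(2)}$ (their $s$-th coordinates already differ). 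By construction each $x^{(j)}$ satisfies $T_1 = T_2 = \dots = T_s = 0$ at $(c^*, a^{**})$.

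I then pass from the triangular system to $h_{c^*, a^{**}}$ and onward to genuine steady states. The one nonformal step is keeping $(c^*, a^{**}) \notin \mathcal{W}$: since $\mathcal{W}$ is a \emph{proper} subvariety and $(c^*, a^*) \notin \mathcal{W}$ by hypothesis (4), the complex line through $(c^*, a^*)$ in the $a_i$-direction is not contained in $\mathcal{W}$, hence meets it in only finitely many points, none of which is $a_i^*$; thus all sufficiently small $\delta'$ keep $a^{**}$ outside $\mathcal{W}$. By Definition~\ref{def:triangular}(ii), the positive zeros of $h_{c^*, a^{**}}$ then coincide with those of the triangular system, so $x^{(1)}, x^{(2)}$ are two distinct positive zeros of $h_{c^*, a^{**}}$. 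Finally, choose $\kappa^*$ with $\bar a(\kappa^*) = a^{**}$ (possible by surjectivity of $\bar a$ in Definition~\ref{def:effective}); the invertibility of $M(\kappa^*)$ together with the conservation laws — exactly the argument verifying condition~(i) in the proof of Theorem~\ref{thm:c-general} — shows that $x^{(1)}, x^{(2)}$ are positive steady states of the mass-action system given by $G$, $c^*$, and $\kappa^*$. Hence $G$ is multistationary, and the displayed perturbation recipe is the promised witness.

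The main obstacle is essentially bookkeeping rather than a deep difficulty: one must fold together the several ``sufficiently small $\delta$'' requirements — existence and positivity of the two $x_s$-roots from Lemma~\ref{lem:4}, positivity of the back-substituted coordinates produced by $\Theta$, positivity of $a^{**}$, and avoidance of $\mathcal{W}$ — into a single threshold $\delta$ that makes all of them hold at once. The only step with genuine content beyond continuity is the $\mathcal{W}$-avoidance, which is where the hypothesis that $\mathcal{W}$ is a proper variety (so that the one-parameter perturbation line cannot be absorbed into it) is used.
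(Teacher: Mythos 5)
Your proposal is correct and follows exactly the route the paper intends: the paper's own proof of Theorem~\ref{prop:triangular} is the one-line remark ``Straightforward from Lemma~\ref{lem:4} and Definition~\ref{def:triangular},'' and you have supplied precisely the omitted details (splitting the double root of $T_s$ via Lemma~\ref{lem:4}, lifting through the $\theta_k$'s by continuity, staying off $\mathcal{W}$, and translating back to the mass-action system via surjectivity of $\bar a$ and invertibility of $M(\kappa)$). The only point worth flagging is cosmetic: conditions (2) and (3) together force $\partial T_s/\partial a_i \neq 0$ at $(a_i^*, x_s^*)$ (since $\partial T_s/\partial x_s$ vanishes there by multiplicity two), which is the actual hypothesis of Lemma~\ref{lem:4}(2); your phrasing compresses this step but the logic is sound.
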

\begin{proof}
Straightforward from Lemma~\ref{lem:4} and Definition~\ref{def:triangular}.
\end{proof}

\begin{remark}\label{rmk:degeneratess}
Theorem~\ref{prop:triangular} suggests a procedure for finding a witness to multistationarity for networks that admit 
a triangular form.  Namely, find a degenerate positive steady state $x^*$ and associated values for $c^*$ and $a^*$, 
and then perturb some coordinate of $a^*$ by a small amount.  See Example~\ref{ex:ccc}.
However, in practice this perturbation method could fail if 
we can obtain only approximations for $x^*$, $c^*$, and $a^*$, and if the multistationary region in the space of $a^*$'s 
 is too small for such approximations to find a witness.  
 Hence, our approach is most promising when $x^*$, $c^*$, and $a^*$ 
 can be exactly chosen with rational-number coordinates (as in Example~\ref{ex:ccc}).
 Indeed, for linearly binomial networks, this is easier as it is enough to find a solution 
 $x^*$ of $B(x^*)=0$, and the critical function $B$ has the nice properties detailed in Lemma~\ref{lm:ecfstructure}.
 Nevertheless, for this case, we gave other methods for obtaining witnesses to multistationarity.
\end{remark}

\subsection{Sufficient conditions for a triangular form}

Theorem~\ref{prop:triangular} gave an approach to multistationarity for networks
whose steady-state equations admit a triangular form.  In turn, Corollary~\ref{cor:tri} below guarantees that a network 
admits such a triangular form as long as the steady-state equations form a ``general zero-dimensional system''. 

\begin{definition}\label{def:gzd}
A set of $s$ polynomials \[H~=~\{h_1,h_2, \ldots, h_s\}
	~\subseteq~ {\mathbb C}[a_1, a_2, \ldots a_n, x_1, x_2, \dots, x_s] = {\mathbb C}[a,x]\]
forms a {\em general zero-dimensional system} if there exists a variety ${\mathcal W}\subsetneq {\mathbb C}^n$ such that for any 
$a^*=(a^*_1, a^*_2, \ldots, a^*_n)\in {\mathbb C}^n\backslash {\mathcal W}$, 
the system $h_1|_{a=a^*}=h_2|_{a=a^*}=\ldots=h_s|_{a=a^*}=0$  satisfies:
%
\begin{itemize}
\item[(A1)]the number of complex solutions is finite and nonzero;

\item[(A2)]for any distinct complex solutions $x^*=(x^*_1, \ldots, x^*_s)$ and ${y}^*=(y^*_1, \ldots, y^*_s)$,
$x^*_s\neq y^*_s$;

\item[(A3)]the ideal ${\mathcal I}(\{ h_1|_{a=a^*},h_2|_{a=a^*}, \dots, h_s|_{a=a^*} \})$ is radical. 
\end{itemize}
\end{definition}

\begin{remark}
When (A1) holds, 
the assumption (A3) implies that each 
 complex solution $x^*=(x^*_1, x^*_2, \ldots, x^*_s)$ has multiplicity $1$~\cite[Page 150, Corollary 2.6]{CLOb}.
\end{remark}
\begin{remark}\label{rmk:verifygzd}
It is computationally expensive to verify whether a given system is
general  zero-dimensional. 
In practice, one can take a heuristic approach: 
pick random values of 
$a_1, a_2, \ldots, a_n$, solve approximately for the resulting complex solutions, and then check whether conditions (A1)--(A3) in Definition \ref{def:gzd} are satisfied.
If this is the case, then we assume that the given system is general zero-dimensional and we try to look for the triangular form described in Theorem \ref{shape}. 
\end{remark}

\begin{remark}[Relation to Shape Lemma]\label{rmk:shape}
Theorem \ref{shape} can be viewed as a more general version of the Shape Lemma. 
The original Shape Lemma \cite{BMMT1994} pertains to a zero-dimensional ideal
arising from a polynomial system without parameters.
Later, a version for systems involving parameters was given by geometric resolutions \cite{GHMMP1998, GLS2000}. 
The main difference between Theorem~\ref{shape} and the results in \cite{GHMMP1998, GLS2000} is that
in Theorem \ref{shape}, a triangular system representing the solution set of a general zero-dimensional ideal is selected from a Gr\"obner basis, whereas in  \cite{GHMMP1998, GLS2000}, a triangular system is computed by an interpolation idea.
\end{remark}

\begin{theorem}
\label{shape}
Suppose that $h_1, h_2, \ldots, h_s\in {\mathbb C}[a,x]$ form a general zero-dimensional system. 
Let $\mathcal{G}$ be a
Gr\"obner basis of the following ideal 
with respect to the lexicographic order 
$a_n<\cdots<a_2<a_1<x_{s}<\cdots <x_2<x_{1}$:
\[
	{\mathcal I}(\{h_1,h_2, \ldots, h_s\} )~\subseteq ~{\mathbb C}[a,x]~.
	\] 
Then there exist $g_1, g_2, \ldots, g_s\in \mathcal{G}$ such that:
\begin{itemize}
\item[(1)]  $g_1, g_2, \ldots, g_s$ have the following triangular form: 
\begin{align*}
         g_{s}~&=~Q_{s, N}~x_s^N +Q_{s, N-1}~x_s^{N-1} + \ldots +Q_{s, 1}~x_s + Q_{s, 0}~, \\
	 g_{s-1}~ &= ~ Q_{s-1}~ x_{s-1} + R_{s-1}~,\\
	               &~~\vdots\\
	  g_{1}~ &= ~ Q_{1}~ x_{1} + R_{1}~,
\end{align*}
where $N>0$ and, for all 
$i \in \{0,1,\dots, N\}$ and 
$j \in \{1,2,\dots, s-1\}$, we have: 
\[
Q_{s, i} \in {\mathbb C}[a]~, \quad  
Q_{j} \in {\mathbb C}[a]~, \quad 
R_{j}\in {\mathbb C}[a, x_{j+1}, \ldots, x_{s}]~.
\]
\item[(2)]For any $a^*\in {\mathbb C}^n\backslash 
	V(Q_{s, N}\,  Q_1\,  Q_2 \cdots Q_{s-1})$, 
the set $\{g_1|_{a=a^*}, g_2|_{a=a^*}, \ldots g_s|_{a=a^*} \}$ is a Gr\"obner basis of the
following ideal
with respect to the lexicographic order $x_{s}<\cdots<x_2<x_{1}$:
	\[
	{\mathcal I}(\{ h_1|_{a=a^*},h_2|_{a=a^*}, \dots, h_s|_{a=a^*} \})
~\subseteq~ {\mathbb C}[x_1, x_2, \ldots,  x_s]~.
\]
\end{itemize}
\end{theorem}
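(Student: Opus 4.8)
\emph{The plan} is to descend to the generic point of the parameter space by passing to the field $K := \mathbb{C}(a)$ of rational functions in $a_1,\dots,a_n$, to apply the classical Shape Lemma there, and then to read the resulting triangular shape off of $\mathcal{G}$; part~(2) is then a specialization statement of the type in \cite{SunYao2010}. Set $J := \mathcal{I}(\{h_1,\dots,h_s\}) \subseteq \mathbb{C}[a,x]$ and let $J^e := J\cdot K[x]$ be its extension. First I would observe that, because the given order is lexicographic with every $x_i$ larger than every $a_j$, multiplication by an element of $\mathbb{C}[a]$ leaves the leading $x$-monomial unchanged: for $f \in J^e$ choose $q \in \mathbb{C}[a]\setminus\{0\}$ with $qf \in J$, so that $\mathrm{LM}(qf)$ is divisible by the leading monomial of some $g \in \mathcal{G}$, and taking $x$-parts shows $\mathrm{LM}_x(f)$ is divisible by $\mathrm{LM}_x(g)$. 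Hence $\mathcal{G}$, viewed in $K[x]$, is a Gr\"obner basis of $J^e$ for the induced order $x_s < \cdots < x_1$; here $J^e \neq (1)$ because (A1) makes the generic solution count nonzero, so $J \cap \mathbb{C}[a] = 0$.

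Next I would verify the Shape Lemma hypotheses for $J^e$ over $K$. Conditions (A1)--(A3) of Definition~\ref{def:gzd} hold for all $a^*$ outside a proper variety, and by generic flatness and (in characteristic zero) generic reducedness this gives that $J^e$ is zero-dimensional with $\dim_K K[x]/J^e = N$ equal to the generic solution count (so $N>0$), that $J^e$ is radical, and that the coordinate $x_s$ separates the points of $V(J^e)$ over $\bar K$. The Shape Lemma \cite{BMMT1994} then shows that the leading $x$-monomials of $J^e$ are exactly $x_1,\dots,x_{s-1},x_s^N$. Since the leading $x$-monomials of the Gr\"obner basis $\mathcal{G}$ generate $\mathrm{LT}_x(J^e)$, and since each minimal generator of a monomial ideal must equal the leading monomial of some Gr\"obner basis element, I can select $g_1,\dots,g_{s-1},g_s \in \mathcal{G}$ with $\mathrm{LM}_x(g_j)=x_j$ for $j \le s-1$ and $\mathrm{LM}_x(g_s)=x_s^N$. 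Inspecting the lex order, $\mathrm{LM}_x(g_s)=x_s^N$ forces $g_s \in \mathbb{C}[a,x_s]$, and $\mathrm{LM}_x(g_j)=x_j$ forces every remaining monomial of $g_j$ to involve only $x_{j+1},\dots,x_s$; this yields exactly the triangular form of part~(1), with $Q_{s,i}, Q_j \in \mathbb{C}[a]$ and $R_j \in \mathbb{C}[a,x_{j+1},\dots,x_s]$.

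For part~(2), the point is that $g_1,\dots,g_s$ already form a Gr\"obner basis of $J^e$ over $K$, their leading monomials being the minimal generators of $\mathrm{LT}_x(J^e)$. Fix $a^* \notin V(Q_{s,N}\,Q_1\cdots Q_{s-1})$, so that no leading coefficient vanishes and $\mathrm{LM}_x(g_j|_{a=a^*}) = \mathrm{LM}_x(g_j)$. Over $K$ the division algorithm expressing each $h_i$ as $\sum_j p_j g_j$, and reducing each S-polynomial $S(g_k,g_l)$ to zero, uses only $Q_{s,N}, Q_1,\dots,Q_{s-1}$ in its denominators; therefore these identities specialize at $a^*$. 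Specializing $h_i=\sum_j p_j g_j$ gives $J|_{a=a^*} = \langle g_1|_{a=a^*},\dots,g_s|_{a=a^*}\rangle$, and specializing the S-polynomial reductions verifies Buchberger's criterion, so the specialized set is a Gr\"obner basis of $J|_{a=a^*}$ for the order $x_s < \cdots < x_1$; this is precisely the specialization mechanism of \cite{SunYao2010}.

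\emph{The main obstacle} I expect is the descent in the second paragraph: turning (A1)--(A3), which are phrased as generic conditions on the fibers over points $a^* \in \mathbb{C}^n$, into the corresponding algebraic properties of the single ideal $J^e$ over $K=\mathbb{C}(a)$, and matching the integer $N$ in the Shape Lemma with the generic number of solutions. This requires generic flatness to compare $\dim_K K[x]/J^e$ with the fiber dimensions, and the characteristic-zero fact that a generically reduced family has reduced generic fibre. Once this descent is in place, the Shape Lemma and the leading-term bookkeeping of part~(1) are routine, and part~(2) reduces to the standard division-and-specialize argument.
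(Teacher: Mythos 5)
Your proof is essentially correct, but it takes a genuinely different route from the paper's. The paper never passes to the generic point: it extracts a \emph{noncomparable subset} $\mathcal{G}_m$ of $\mathcal{G}$ (Definition~\ref{noncomp}, Lemma~\ref{lem:noncomparable-set}), invokes the Kapur--Sun--Wang specialization theorem (Proposition~\ref{sg}) together with Lemma~\ref{nonempty} to know that $\mathcal{G}_m|_{a=c^*}$ is a Gr\"obner basis of the specialized ideal for a single well-chosen closed point $c^*\notin \mathcal{W}\cup V(h)$, and then reads off the exponents $N_i$ from the Finiteness Theorem at that fiber, pinning them down by the counting argument $N_s\ge N$ versus $N=N_1\cdots N_s$ (using (A2) and (A3) at the closed fiber); part (2) then needs the extra inductive claim $\mathcal{G}_m=\{g_1,\dots,g_s\}$ to identify $h$ with $Q_{s,N}Q_1\cdots Q_{s-1}$. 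You instead work over $K=\mathbb{C}(a)$: your observation that $\mathcal{G}$ remains a Gr\"obner basis of the extended ideal $J^e$ is correct (and your argument for it is sound), the Shape Lemma over $K$ then hands you the leading-term ideal $\langle x_1,\dots,x_{s-1},x_s^N\rangle$ directly, and your part (2) is in effect a self-contained reproof of Proposition~\ref{sg} in this special case, which has the advantage of producing the exclusion locus $V(Q_{s,N}Q_1\cdots Q_{s-1})$ without the noncomparable-subset bookkeeping. What your route costs is exactly the step you flag: descending (A1)--(A3), which are stated fiberwise over closed points, to the single ideal $J^e$ over $K$ (finiteness and $\dim_K K[x]/J^e=N$ via generic flatness, radicality of $J^e$ from generic reducedness in characteristic zero, and separation by $x_s$ from the squarefreeness and degree of the generator of $J^e\cap K[x_s]$). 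None of these would fail, but each needs an argument of the same order of difficulty as the closure-theorem computation in Lemma~\ref{nonempty}, and the paper's choice to work at a generic \emph{closed} point $c^*$ is precisely what lets it cite elementary facts from \cite{CLOb,CLO} instead. If you flesh out that descent (or cite a reference for the reducedness of the generic fibre), your proof is complete; as written, it is a correct plan with that one block left at sketch level.
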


The proof of Theorem \ref{shape} is in Appendix~\ref{sec:pf-shape-thm}.

\begin{corollary} \label{cor:tri}
Let $G$ be a reaction network with $s$ species and a simplified system $\augmentH(x)$. If 
	$h_{c, a,1},  h_{c, a,2}, \dots,  h_{c, a,s}$
	form a general zero-dimensional system, then the steady-state equations of~$G$ admit a triangular form.
\end{corollary}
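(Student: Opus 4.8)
The plan is to read off the triangular form directly from the Gr\"obner-basis structure provided by Theorem~\ref{shape}. First I would apply that theorem to the simplified system $\augmentH(x)$, taking the combined parameter vector $(c,a) \in \mathbb{C}^{d}\times\mathbb{C}^{\bar m}$ to play the role of the ``$a$'' there (so $n = d+\bar m$) and $x_1,\dots,x_s$ as the variables, ordered by a lexicographic order in which all parameters precede $x_s < x_{s-1} < \dots < x_1$. By hypothesis $h_{c,a,1},\dots,h_{c,a,s}$ form a general zero-dimensional system, so Definition~\ref{def:gzd} supplies a proper subvariety $\mathcal{W}_0 \subsetneq \mathbb{C}^{d}\times\mathbb{C}^{\bar m}$ off of which (A1)--(A3) hold, and Theorem~\ref{shape}(1) supplies elements $g_1,\dots,g_s$ of a Gr\"obner basis of $\mathcal{I}(\{h_{c,a,1},\dots,h_{c,a,s}\})$ of the triangular shape $g_s = Q_{s,N}x_s^N + \dots + Q_{s,0}$ and $g_j = Q_j x_j + R_j$ (for $j<s$), with $Q_{s,i}, Q_j \in \mathbb{C}[c,a]$ and $R_j \in \mathbb{C}[c,a,x_{j+1},\dots,x_s]$.

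Next I would define the triangular functions of Definition~\ref{def:triangular} by dividing out leading coefficients: set $T_s := g_s$, so that $\theta_s(c,a,x_s)=g_s$ is a polynomial depending on no concentration other than $x_s$, and for $1\le j\le s-1$ set $\theta_j := -R_j/Q_j$ and $T_j := x_j - \theta_j = g_j/Q_j$. Because $R_j$ involves only $x_{j+1},\dots,x_s$ while $Q_j$ involves no $x$-variable, $\theta_j$ depends only on $(c,a)$ and on $x_{j+1},\dots,x_s$, as condition (i) demands; moreover $\theta_j$ is a rational function, hence $C^2$ wherever the denominator $Q_j(c,a)$ does not vanish, and $\theta_s$ is $C^2$ everywhere.

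To obtain the exceptional variety of condition (ii), I would take
\[
\mathcal{W} ~:=~ \mathcal{W}_0 \,\cup\, V\!\left(Q_{s,N}\,Q_1\,Q_2\cdots Q_{s-1}\right) ~\subseteq~ \mathbb{C}^{d}\times\mathbb{C}^{\bar m}~.
\]
Since each of $Q_{s,N},Q_1,\dots,Q_{s-1}$ is the leading coefficient of some $g_i$ it is a nonzero element of $\mathbb{C}[c,a]$, so their product is a nonzero polynomial and $\mathcal{W}$ is a proper subvariety. For any $(c^*,a^*)\in(\mathbb{R}^{d}_{>0}\times\mathbb{R}^{\bar m}_{>0})\setminus\mathcal{W}$, Theorem~\ref{shape}(2) gives that $\{g_1|_{(c,a)=(c^*,a^*)},\dots,g_s|_{(c,a)=(c^*,a^*)}\}$ is a Gr\"obner basis of $\mathcal{I}(\{h_{c^*,a^*,1},\dots,h_{c^*,a^*,s}\})$; thus the specialized $g_i$ generate the same ideal as the specialized $h_i$ and so share the same complex, and hence the same positive, zeros. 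Finally, as $Q_j(c^*,a^*)\neq 0$ for $j<s$, the identity $g_j = Q_j T_j$ forces $g_j|$ and $T_j|$ to have the same zero set, while $T_s=g_s$; this matches the positive zeros of $h_{c^*,a^*}$ with those of $T_1,\dots,T_s$, which is condition (ii).

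The heavy lifting is done by Theorem~\ref{shape}, so the main thing to get right is the bookkeeping at the interface with Definition~\ref{def:triangular}: one must make sure that dividing $g_j$ by its leading coefficient $Q_j$ does not spoil the $C^2$ requirement on $\theta_j$, and that the resulting exceptional set remains a proper subvariety. Both points hinge on the observation that every $Q_j$ is a nonzero polynomial in the parameters $(c,a)$ alone; its zero locus is therefore a proper parameter-space subvariety that can be absorbed into $\mathcal{W}$, on whose complement every $\theta_j$ is smooth and the zero-set comparison above is valid. I expect no essential difficulty beyond this bookkeeping.
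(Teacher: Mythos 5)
Your proposal is correct and follows essentially the same route as the paper, which simply cites Definition~\ref{def:triangular} and Theorem~\ref{shape} and leaves the bookkeeping implicit; you have filled in exactly the intended details (setting $T_s=g_s$, $T_j=g_j/Q_j$, and absorbing $V(Q_{s,N}Q_1\cdots Q_{s-1})$ together with the general-zero-dimensionality locus into $\mathcal{W}$), and this matches how the authors themselves carry it out in the Calvin-cycle example. The only caveat, which applies equally to the paper's own terse proof, is that $\theta_j=-R_j/Q_j$ is literally $C^2$ only off the vanishing locus of $Q_j$ rather than on all of the positive orthant as Definition~\ref{def:triangular} formally demands; since that locus is contained in $\mathcal{W}$ and the $C^2$ property is only ever invoked near points outside $\mathcal{W}$, this is a harmless imprecision rather than a gap.
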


\begin{proof}
Straightforward from Definition~\ref{def:triangular}, Theorem~\ref{shape}, 
and the fact that if $(c^*,a^*)$ is not in $W$, then the same is true for any sufficiently small perturbation of  $(c^*,a^*)$.
\end{proof}

We end this section by 
showing, through the Calvin cycle example, how to 
use Theorem~\ref{prop:triangular} to find a witness to multistationarity.

\begin{example}[Calvin cycle, continued]\label{ex:ccc}
We return to the network in Example~\ref{cc}, which is known to be multistationarity \cite{GAKKSN2011}.  
Here we find a witness to multistationarity.

Consider the following upper-triangular matrix:
{\footnotesize
\[M(\kappa)=
\left(\begin{array}{ccccccccccc}
1&  1& 0& 0  & 0&0 & 0& 0&0& 0&0\\
 0&   \frac{1}{\kappa_1}& 0& 0  & 0& 0 & 0& 0& 0& 0 & 0\\
 0&  0& 1& 1  & 0&0 & 0& 0&0& 0&0\\
 0 &  0 & 0     &  \frac{1}{\kappa_3} & 0 & 0&0& 0&0&0 &0\\
0&  0& 0& 0  & 1&1 & 0& 0&0& 0&0\\
  0&  0& 0& 0  & 0&\frac{1}{\kappa_5} & 0& 0&0& 0&0\\
0&  0& 0& 0  & 0&0 & 1& 5&0& 0&1\\
0&  0& 0& 0  & 0&0 & 0& \frac{1}{\kappa_7}&0& 0&0\\
 0&  0& 0& 0  & 0&0 & 0& 0&1& 1&0\\
0&  0& 0& 0  & 0&0 & 0& 0&0& \frac{1}{\kappa_{9}}&0\\
  0&  0& 0& 0  & 0&0 & 0& 0&0& 0&\frac{1}{\kappa_{11}}
  \end{array}
\right).
\]
}
Note that ${\rm det} M(\kappa)>0$ for $\kappa\in \mathbb{R}^{10}_{>0}$. 
Let the effective parameters be: 
\begin{align}\label{tab:excc}
\bar a_1 =\kappa_2, \;\;  &\bar a_2=\kappa_4,  \;\;\bar a_3=\kappa_6, \;\; \bar a_4=\kappa_{8}, 
 \;\; \bar a_5=\kappa_{10},  \;\; \bar a_6=\kappa_{12},   \\ \notag
\bar a_7=\frac{\kappa_2}{\kappa_1}, \;\; & \bar a_8=\frac{\kappa_4}{\kappa_3}, \;\;  \bar a_9=
\frac{\kappa_6}{\kappa_5}, \;\;\bar a_{10}=\frac{\kappa_{8}}{\kappa_{7}}, \;\; \bar a_{11}=
\frac{\kappa_{10}}{\kappa_{9}}, \;\; \bar a_{12}=\frac{\kappa_{12}}{\kappa_{11}}. 
\end{align}
From the above effective parameters \eqref{tab:excc} and equations \eqref{eq:linearchange}--\eqref{consys-h}, the resulting system $\augmentH(x)$ is:
\begin{center}
\begin{tabular}{lll}
$h_{c, a,1}= -a_1x_3+a_{5}x_{15}$,  & &$h_{c, a,2}= x_2 + x_3 - c_1$, \\
$h_{c, a,3}=  x_1x_2-a_7x_3$,  & &$h_{c, a,4}= 2a_1x_3 -a_2x_6$, \\
$h_{c, a,5} = x_5 + x_6 - c_2$, &  &$h_{c, a,6}= x_4x_5 - a_8x_6$, \\
$h_{c, a,7}= a_2x_6-a_3x_9$, & &$h_{c, a,8} = x_8 + x_9 - c_3 $, \\
$h_{c, a,9} = x_7x_8 - a_9x_9$, & &$h_{c, a,10} =a_3x_9 -5a_4x_{12}-a_{6}x_{17}$, \\
$h_{c, a,11} =x_{11}+x_{12}-c_4 $, & &$h_{c, a,12} =x_{10}^5x_{11}-a_{10}x_{12}$,\\
$h_{c, a,13} =3a_{4}x_{12}-a_{5}x_{15}$, & &$h_{c, a,14}=x_{14} + x_{15} - c_5$,\\
$h_{c, a,15} =x_{13}x_{14}-a_{11}x_{15}$, & &$h_{c, a,16} =x_{16} + x_{17} - c_6$, \\
$h_{c, a,17} =x_{10}x_{16}-a_{12}x_{17}$.
\end{tabular}
\end{center}

With an eye toward applying 
Theorem \ref{prop:triangular}, we find a triangular form for our steady-state equations, as follows. 
Following the heuristic proposed in Remark \ref{rmk:verifygzd}, 
we verified for several values of the $a_i$'s and $c_i$'s that 
 (A1), (A2), and (A3) in Definition \ref{def:gzd} are satisfied.
 Hence, we assume that the $h_{c,a,i}$'s
 form a general zero-dimensional system.
If so, Corollary~\ref{cor:tri} would imply that 
the steady-state equations admit a triangular form.  More precisely,
we would obtain 
such a triangular set as a subset of a Gr\"obner basis 
 (by Theorem \ref{shape}).  
 
Accordingly, we compute a Gr\"obner
basis of  the generated ideal $\langle h_{c,a,i} \rangle$ 
with respect to the lexicographic order $c<a<x_{17}<\ldots<x_{1}$, via {\tt Maple} \cite{maple}, which consists of $86$ polynomials. 
Following the proof of Theorem \ref{shape} (see Appendix~\ref{sec:pf-shape-thm}), 
we select for each $i=1,2, \ldots, 17$, 
some $g_i$ among these $86$ polynomials for which the leading monomial of $g_i$, where $g_i$ viewed in $\mathbb{Q}(a_i,c_i)[x]$, 
has the form $x_i^{N_i}$ for some $N_i \geq 0$.
This yields the following $17$ polynomials $g_i$, which form a triangular set:
\begin{align*}
	g_{17}~&=~a_6(a_{10}-a_{12}^5)x_{17}^6+
	(a_4a_{12}^5c_4-5a_6a_{10}c_6)x_{17}^5
	+
	 10{a_{6}a_{10}c_6^2}x_{17}^4 \\
	& \quad \quad 
	-
	10{a_6a_{10}c_6^3}x_{17}^3 
	+
	5{a_{6}a_{10}c_6^4}x_{17}^2-
	{a_{6}a_{10}c_6^5}{x_{17}}, \\
g_{16}~&=~{x_{16}}+x_{17}-{c_6}, \\
g_{15}~&=~{a_{5}}{x_{15}}-3a_{6}x_{17}, \\
g_{14}~&=~{x_{14}}+x_{15}-{c_5}, \\
	&~ \vdots \\
g_{3}~&=~ {a_1}{x_3}-3a_{6}x_{17}, \\
 g_{2}~&=~{x_2} + x_3 - {c_1}, \\
 g_{1}~&=~Q_{1}({a, c})~ {x_{1}} + R_{1}({a, c}, x_{17})~,
\end{align*}
 where $Q_{1}(a,  c)$ and  $R_{1}({a,  c}, x_{17})$ are polynomials. 
 So, by 
 Theorem \ref{shape} (2), 
the steady-state equations admit the following triangular form:
\[T_{17} = g_{17}(a, c, x_{17}), \;\; T_{16}=x_{16} - (c_6-x_{17}),\;\; T_{15}=x_{15}-3\frac{a_{6}}{a_{5}}x_{17}, \;\;T_{14}=x_{14} - (c_5-x_{15} )\]
\[\ldots\]
\[T_{3}=x_{3} - 3\frac{a_{6}}{a_1}x_{17}, \;\;T_{2}={x_2} - ({c_1}-x_3 ), \;\;T_{1}=x_1 - \frac{R_{1}({a, c}, x_{17})}{Q_{1}({a, c})}~. \]
This triangular form is valid
as long as the leading coefficients of $g_i$'s do not vanish.  In other words,
the variety $\mathcal W$,
as in Definition~\ref{def:triangular}, can be defined by the vanishing set of those coefficients.

Next, we aim to find a degenerate positive steady state $x^*$ and corresponding
 parameters $(a^*, c^*)$. 
 Let $\hat a=a_6$ and $\hat x = x$. 
The idea is to 
compute the critical function $C(\hat a, \hat x)$ and then 
find a positive point where the function vanishes. 
By solving the equations $h_{c, a,1}=h_{c, a,3}=h_{c, a,4}=h_{c, a,6}=
h_{c, a,7}=h_{c, a,8}=h_{c, a,9}=h_{c, a,10}=h_{c, a,12}=
h_{c, a,13}=h_{c, a,15}=h_{c, a,17}=0$ in the unknowns $a_1, a_2, a_3, a_4, a_5, a_7, a_8, a_9, a_{10}, a_{11}, a_{12}$, we obtain:
\begin{center}
\begin{tabular}{llllll}
$a_1=\frac{3a_6x_{17}}{x_3}$, &
$a_2=\frac{6a_6x_{17}}{x_6}$, & 
$a_3=\frac{6a_6x_{17}}{x_9}$, &
$a_4=\frac{a_6x_{17}}{x_{12}}$, & 
$a_5=\frac{3a_6x_{17}}{x_{15}}$, &\\
$a_7=\frac{x_1x_2}{x_3}$, &
$a_8=\frac{x_4x_5}{x_6}$, & 
$a_{9}=\frac{x_7x_8}{x_9}$, &
$a_{10}=\frac{x^5_{10}x_{11}}{x_{12}}$, &
$a_{11}=\frac{x_{13}x_{14}}{x_{15}}$, &
$a_{12}=\frac{x_{10}x_{16}}{x_{17}}$. 
\end{tabular}
\end{center}
Thus, we obtain the following steady-state parametrization (outside $\mathcal W$) 
	$\phi: \mathbb{R}^{18}_{>0} \rightarrow   \mathbb{R}_{>0}^{12}\times  \mathbb{R}_{>0}^{17}$, 
where 
$\phi(\hat a, \hat x)=\phi(a_6; x) $ is defined as 
\[\left(\frac{3a_6x_{17}}{x_3}, 
\frac{6a_6x_{17}}{x_6}, 
\frac{6a_6x_{17}}{x_9},
\frac{a_6x_{17}}{x_{12}},
\frac{3a_6x_{17}}{x_{15}},
a_6,
\frac{x_1x_2}{x_3},
\frac{x_4x_5}{x_6},  
\frac{x_7x_8}{x_9}, 
\frac{x^5_{10}x_{11}}{x_{12}}, 
\frac{x_{13}x_{14}}{x_{15}}, 
\frac{x_{10}x_{16}}{x_{17}};
 x\right)~.
\]

The resulting critical function is:
\[C(\hat a, \hat x)~=~C(a_6,x) 
		~=~	
-324a_6^5x_2x_5x_8x_{10}^5x_{14}x_{17}^4\left(4x_{11}x_{16}+5x_{11}x_{17}-x_{12}x_{16}\right)~. \]
It is straightforward to find a positive rational vector $(a_6^*,x^*) $ such that $ C(a_6^*, x^*)$ vanishes. For instance, one can simply choose 
\begin{align*}
a^*_6 ~=~ 1 \quad {\rm and } \quad 
x^* ~=~ \left(
1, \;1, \;1,\;1,\;1, \;1, \;
1,\;1,\; 1,\;1,\;1, \; 9,\; 1, \;
1,\; 1,\; 1,\;1
\right).
\end{align*}
Plugging $u^*$ into $\phi(u)$, we compute:
 \begin{align*}
a^*~=~ \left(3, \;
 6 , \; 
 6, \; 
 \frac{1}{9},\;
 3, \;
 1,\;
 1, \;
 1, \;
 1, \; 
 \frac{1}{9}, \;
 1, \;
 1\right).
 \end{align*}
 Also, by plugging $x^*$ into the conservation laws, we obtain:
 \begin{align*}
c^*~=~ \left(2, \;
 2, \;
  2, \; 
 10, \;
 2, \;
 2 \right).
\end{align*}
Then $x^*$ is a degenerate positive steady state for $(a^*, c^*)$. 
In other words, we have verified the first hypothesis of Theorem~\ref{prop:triangular}.
It is straightforward to check the validity of the remaining three hypotheses of this theorem. 
 Thus, by Theorem \ref{prop:triangular}, there exists a small positive number $\delta$ such that for any $0 < \delta' < \delta$
\[(a^*_1, \ldots, a^*_3, a^*_4+\delta', a^*_5, \ldots, a^*_{12}, c^*)
\quad \quad 
{\rm 
or }
\quad \quad 
	(a^*_1, \ldots, a^*_3, a^*_4-\delta', a^*_5, \ldots, a^*_{12}, c^*)\]
generates multistationarity. Indeed, one can check that the system 
\[\augmentH(a^{**}, c^*, x)=0, \;\;\;\text{where}\;a^{**}=\left( a^*_1, \ldots, a^*_3, a^*_4-\frac{1}{1000}, a^*_5, \ldots, a^*_{12} \right)~,\] 
has two distinct positive steady states, which are approximately equal to:
 \begin{align*}
x^{(1)}\approx &(0.96, 1.01,  0.98, 0.96, 1.01, 0.98,  0.96, 1.01, 0.98,
  0.96, 1.14, 8.85, 0.96, 1.01,  0.98, 1.01,  0.98), and\\
x^{(2)}\approx&(1.02,  0.98, 1.01, 1.02,  0.98, 1.01, 1.02, 0.98, 1.01, 
 1.02,  0.87,  9.12,  1.02, 0.98, 1.01,  0.98,  1.01).
 \end{align*}
Finally, any $\kappa^* \in \mathbb{R}^{12}_{>0}$ for which $\bar a(\kappa^*) = a^{**}$,
as in~\eqref{tab:excc}, yields a witness to multistationarity. One such $\kappa^*$ is:
 \begin{align*}
 \kappa^*~=~\left(
 3, \;
  3, \; 
 \frac{5999}{1000}, \; 
 \frac{5999}{1000},\;
 6, \;
 6, \;
 1, \;
 \frac{1}{9}, \;
3, \; 
 3, \;
  1, \;
 1
 \right)~. 
 \end{align*}
\end{example}

\section{Discussion}\label{sec:discussion}
Steady-state parametrizations have been shown in recent years to 
be very useful for analyzing chemical reaction networks
and more specifically their capacity for multistationarity. 
Our first main results are in this vein, discerning multistationarity and finding witnesses to 
multistationarity for networks with steady-state parametrizations
(Theorem~\ref{thm:c-general}), including linearly binomial networks (Theorem~\ref{thm:NP}). Furthermore, we characterize a class of MESSI networks that are linearly binomial,  which provides an interesting range of application for our results. 
In a complementary direction, 
we also showed how to obtain witnesses to multistationarity when a network's steady-state equations 
admit a triangular form and a degenerate steady state exists (Theorem \ref{prop:triangular}).  

Hence, the main contribution of our work is a suite of new tools for tackling the important but difficult 
problem of assessing and analyzing multistationarity.  Moreover, our results can decide multistationarity for networks from biology 
that previously could not be handled systematically.

Finally, our work used novel approaches that we expect to be useful in the future.
For instance, we used steady-state parametrizations in which the rate constants depend on the species variables,
 thereby simplifying the subsequent analyses.
We also analyzed reaction networks based on results on specialization of Gr\"obner bases, adapting a general technique used in several applications.   
We expect similar algebraic techniques to allow us in the future
to go beyond multistationarity to study topics such as 
stability of steady states and Hopf bifurcations.

\subsection*{Acknowledgements}
The authors thank Frank Sottile for helpful discussions, Alan Rendall for pointing us to the Calvin Cycle model, and
Carsten Conradi for helpful discussions on the ERK network. 
The authors also thank three conscientious referees whose comments helped improve our work.
AD and MPM were partially supported by UBACYT 20020170100048BA, 
CONICET PIP 11220150100473 and 11220150100483, and ANPCyT PICT 2016-0398, Argentina.
AS partially supported by the NSF (DMS-1513364 and DMS-1752672)
and the Simons Foundation (\#521874).
XT was partially supported by the NSF (DMS-1752672).


\begin{thebibliography}{10}

\bibitem{BP}
Murad Banaji and Casian Pantea.
\newblock Some results on injectivity and multistationarity in chemical
  reaction networks.
\newblock {\em SIAM J.\ Appl.\ Dyn.\ Syst.}, 15(2):807--869, 2016.


\bibitem{BMMT1994}
Eberhard Becker, Maria Grazia Marinari, Teo Mora, and Carlo Traverso.
\newblock The shape of the {S}hape {L}emma.
\newblock In {\em Proceedings of ISSAC '94}, pages 129--133. ACM New York, 1994.


\bibitem{FAM2018}
Frederic Bihan, Alicia Dickenstein, and Magali Giaroli.
\newblock Lower bounds for positive roots and regions of multistationarity in
  chemical reaction networks.
\newblock {\em Preprint,} {\tt arXiv:1807.05157}.

\bibitem{CFMW}
Carsten Conradi, Elisenda Feliu, Maya Mincheva, and Carsten Wiuf.
\newblock Identifying parameter regions for multistationarity.
\newblock {\em PLoS Comput.\ Biol.}, 13(10):e1005751, 2017.

\bibitem{perspective}
Carsten Conradi and Anne Shiu.
\newblock Dynamics of post-translational modification systems: Recent progress
  and future challenges.
\newblock {\em Biophys.\ J.}, 114(3):507--515, 2018.

\bibitem{CLOb}
David Cox, Jon Little, and Donal O'Shea.
\newblock {\em Using Algebraic Geometry}, volume 185.
\newblock Springer Science \& Business Media, 2005.

\bibitem{CLO}
David Cox, Jon Little, and Donal O'Shea.
\newblock {\em {Ideals, varieties, and algorithms: An introduction to
  computational algebraic geometry and commutative algebra}}.
\newblock Springer-Verlag, 2007.

\bibitem{ME1}
Gheorghe Craciun and Martin Feinberg.
\newblock Multiple equilibria in complex chemical reaction networks. {I}. {T}he
  injectivity property.
\newblock {\em SIAM J. Appl. Math.}, 65(5):1526--1546, 2005.

\bibitem{ME3}
Gheorghe Craciun and Martin Feinberg.
\newblock Multiple equilibria in complex chemical reaction networks: Semiopen
  mass action systems.
\newblock {\em SIAM J. Appl. Math.}, 70(6):1859--1877, 2010.

\bibitem{CHW08}
Gheorghe Craciun, J.~William Helton, and Ruth~J. Williams.
\newblock Homotopy methods for counting reaction network equilibria.
\newblock {\em Math. Biosci.}, 216(2):140--149, 2008.

\bibitem{invitation}
Alicia Dickenstein.
\newblock Biochemical reaction networks: {A}n invitation for algebraic
  geometers.
\newblock In {\em Mathematical Congress of the Americas}, volume 656, pages
  65--83. American Mathematical Soc., 2016.

\bibitem{enciso-fixed-pts}
German Enciso.
\newblock Fixed points and convergence in monotone systems under positive or
  negative feedback.
\newblock {\em Int.\ J.\ Control}, 87(2):301--311, 2014.

\bibitem{Feliu-inj}
Elisenda Feliu.
\newblock Injectivity, multiple zeros and multistationarity in reaction
  networks.
\newblock {\em Proc.\ R.\ Soc.\ A}, 471(2173), 2014.

\bibitem{FW2012}
Elisenda Feliu and  Carsten Wiuf.
\newblock Enzyme-sharing as a cause of multi-stationarity in signalling
  systems.
\newblock {\em J. R. Soc. Interface}, 9(71):1224--1232, 2012.

\bibitem{fw13}
Elisenda Feliu and Carsten Wiuf.
\newblock Simplifying biochemical models with intermediate species.
\newblock {\em J. R. Soc. Interface}, 10:20130484, 2013.

\bibitem{feliu-wiuf-ptm}
Elisenda Feliu and Carsten Wiuf.
\newblock Variable elimination in post-translational modification reaction
  networks with mass-action kinetics.
\newblock {\em J.\ Math.\ Biol.}, 66(1--2):281--310, 2013.

\bibitem{FSW}
Bryan F\'elix, Anne Shiu, and Zev Woodstock.
\newblock Analyzing multistationarity in chemical reaction networks using the
  determinant optimization method.
\newblock {\em Appl.\ Math.\ Comput.}, 287--288:60--73, 2016.


\bibitem{GKZ}
Israel Gelfand, Mikhail Kapranov, and Andrei Zelevinsky.
\newblock Discriminants, resultants and multidimensional determinants.
\newblock Birkh\"auser Boston, 1994.

\bibitem{MFA2018}
Magali Giaroli, Frederic Bihan, and Alicia Dickenstein.
\newblock Regions of multistationarity in cascades of Goldbeter-Koshland loops.
\newblock {\em Preprint}, {\tt arXiv:1807.08400}.

\bibitem{GHMMP1998}
Marc Giusti, Joos Heintz, Jose Enrique Morais, Jacques Morgenstern, and Luis Miguel Pardo.
\newblock Straight-line programs in geometric elimination theory.
\newblock {\em Journal of Pure and Applied Algebra}, 124(1):101--146, 1998.

\bibitem{GLS2000}
Marc Giusti, Gr\'egoire Lecerf, and Bruno Salvy.
\newblock A {G}r\"obner free alternative for polynomial system solving.
\newblock {\em J.\ Complexity}, 17:154--211, 2001.



\bibitem{gnacadja2011reachability}
Gilles Gnacadja.
\newblock Reachability, persistence, and constructive chemical reaction
  networks (part iii): a mathematical formalism for binary enzymatic networks
  and application to persistence.
\newblock {\em J.\ Math.\ Chem.}, 49(10):2158--2176, 2011.

\bibitem{GAKKSN2011}
Sergio Grimbs, Anne Arnold, Aneta Koseska, J\"urgen Kurths, Joachim Selbig, and
  Zoran Nikoloski.
\newblock Spatiotemporal dynamics of the {C}alvin cycle: Multistationarity and
  symmetry breaking instabilities.
\newblock {\em BioSystems}, 103:212--223, 2011.

\bibitem{KathaMulti}
Katharina Holstein, Dietrich Flockerzi, and Carsten Conradi.
\newblock Multistationarity in sequential distributed multisite phosphorylation
  networks.
\newblock {\em Bull.\ Math.\ Biol.}, 75(11):2028--2058, 2013.

\bibitem{translated}
Matthew~D. Johnston.
\newblock Translated chemical reaction networks.
\newblock {\em Bull.\ Math.\ Biol.}, 76(6):1081--1116, 2014.

\bibitem{joshi2013complete}
Badal Joshi.
\newblock Complete characterization by multistationarity of fully open networks
  with one non-flow reaction.
\newblock {\em Appl.\ Math.\ Comput.}, 219:6931--6945, 2013.

\bibitem{mss-review}
Badal Joshi and Anne Shiu.
\newblock {A} survey of methods for deciding whether a reaction network is
  multistationary.
\newblock {\em Math. Model. Nat. Phenom., special issue on ``Chemical
  dynamics''}, 10(5):47--67, 2015.

\bibitem{Joshi:Shiu:Multistationary}
Badal Joshi and Anne Shiu.
\newblock Which small reaction networks are multistationary?
\newblock {\em SIAM J.\ Appl.\ Dyn.\ Syst.}, 16(2):802--833, 2017.

\bibitem{SunYao2010}
Deepak Kapur, Yao Sun, and Dingkang Wang.
\newblock A new algorithm for computing comprehensive {G}r\"obner systems.
\newblock In {\em ISSAC'10 Proceedings of the 35th International Symposium on
  Symbolic and Algebraic Computation}, pages 29--36, 2010.

\bibitem{jmp2018}
Matthew Johnston, Stefan M\"uller, and Casian Pantea.
\newblock A deficiency-based approach to parametrizing positive equilibria of
  biochemical reaction systems.
\newblock {\em Preprint}, {\tt arXiv:1805.09295}.

\bibitem{MiGu13}
Inom Mirzaev and Jeremy Gunawardena.
\newblock Laplacian dynamics on general graphs.
\newblock {\em Bull. Math. Biol.}, 75(11):2118--49, 2013.

\bibitem{signs}
Stefan M\"uller, Elisenda Feliu, Georg Regensburger, Carsten Conradi, Anne
  Shiu, and Alicia Dickenstein.
\newblock Sign conditions for injectivity of generalized polynomial maps with
  applications to chemical reaction networks and real algebraic geometry.
\newblock {\em Found. Comput. Math.}, 16(1):69--97, 2016.

\bibitem{MHR}
Stefan M\"uller, Josef Hofbauer, and Georg Regensburger.
\newblock On the bijectivity of families of exponential/generalized polynomial maps.
\newblock {\em Preprint}, {\tt arXiv:1804.01851}. 


\bibitem{maple}
Maple 17 (2013)
\newblock Maplesoft, a division of Waterloo Maple Inc., Waterloo, Ontario.

\bibitem{messi}
Mercedes P{\'e}rez~Mill{\'a}n and Alicia Dickenstein.
\newblock The structure of {MESSI} biological systems.
\newblock {\em SIAM J.\ Appl.\ Dyn.\ Syst.}, 17(2):1650--1682, 2018.

\bibitem{TSS}
Mercedes P{\'e}rez~Mill{\'a}n, Alicia Dickenstein, Anne Shiu, and Carsten
  Conradi.
\newblock Chemical reaction systems with toric steady states.
\newblock {\em Bull. Math. Biol.}, 74(5):1027--1065, 2012.

\bibitem{long-term}
Boris~Y. Rubinstein, Henry~H. Mattingly, Alexander~M. Berezhkovskii, and
  Stanislav~Y. Shvartsman.
\newblock Long-term dynamics of multisite phosphorylation.
\newblock {\em Mol.\ Biol.\ Cell}, 27(14):2331--2340, 2016.

\bibitem{SF}
AmirHosein Sadeghimanesh and Elisenda Feliu.
\newblock The multistationarity structure of networks with intermediates and a binomial core network.
\newblock {\em Preprint}, {\tt arXiv:1808.07548}.

\bibitem{ShinarFeinberg2012}
Guy Shinar and Martin Feinberg.
\newblock Concordant chemical reaction networks.
\newblock {\em Math.\ Biosci.}, 240(2):92--113, 2012.

\bibitem{Smallest}
Anne Shiu.
\newblock {The smallest multistationary mass-preserving chemical reaction
  network}.
\newblock {\em Lect. Notes Comput. Sc.}, 5147:172--184, 2008.

\bibitem{shiu-dewolff}
Anne Shiu and Timo de~Wolff.
\newblock Nondegenerate multistationarity in small reaction networks.
\newblock {\em Preprint, {\tt arXiv:1802.00306}}, 2018.

\bibitem{SS2010}
Anne Shiu and Bernd Sturmfels.
\newblock Siphons in chemical reaction networks.
\newblock {\em Bull. Math. Biol.}, 72(6):1448--1463, 2010.

\bibitem{TG}
Matthew Thomson and Jeremy Gunawardena.
\newblock The rational parameterisation theorem for multisite
  post-translational modification systems.
\newblock {\em J. Theoret. Biol.}, 261(4):626--636, 2009.

\bibitem{Tutte}
William~T. Tutte.
\newblock The dissection of equilateral triangles into equilateral triangles.
\newblock {\em Math.\ Proc.\ Cambridge}, 44(4):463--482, 1948.

\bibitem{WangSontag}
Liming Wang and Eduardo~D. Sontag.
\newblock On the number of steady states in a multiple futile cycle.
\newblock {\em J. Math. Biol.}, 57(1):29--52, 2008.

\bibitem{WiufFeliu_powerlaw}
Carsten Wiuf and Elisenda Feliu.
\newblock Power-law kinetics and determinant criteria for the preclusion of
  multistationarity in networks of interacting species.
\newblock {\em SIAM J. Appl. Dyn. Syst.}, 12:1685--1721, 2013.

\end{thebibliography}

\appendix

\section{Proof of Theorem~\ref{thm:binom}} \label{sec:pf-thm-binom}
We now prove Theorem~\ref{thm:binom}. We also illustrate the proof in Example~\ref{ex:toy}.
We assume the reader is familiar with the notion of the Laplacian $\Lap(G)$ of a 
digraph $G$ and its main properties. One important observation is that mass-action 
kinetics associated with a digraph $G$ with vertices labeled by variables $x_1, \dots, x_s$ 
equals $\dot x =\Lap(G) x$. 
Another important observation is that when $G$ is strongly connected, the kernel of $\Lap(G)$ 
has dimension one and there is a known generator $\rho(G)$ with positive entries described as follows.
Recall that an {\it $i$-tree} $T$ of  a digraph is a 
spanning tree where the $i$-th vertex is its unique sink (equivalently, the $i$-th is the only vertex
of the tree with no edges leaving from it) and  we call $k^{T}$ 
the product of the labels of all the edges of $T$. Then, the $i$-th coordinate of $\rho(G)$ equals
\begin{equation}\label{eq:rhoi}
\rho_i(G)=\underset{T\; an \; i-tree}{\sum}k^{T}.
\end{equation}
We refer the reader to~\cite{MiGu13,Tutte} for a detailed account.

\begin{proof}[Proof of Theorem~\ref{thm:binom}]
Recall that in a MESSI network there are two types of species: intermediate and core. 
Our proof proceeds by performing (invertible) linear operations on the steady-state equations, which in the end yield (equivalent) binomial equations.

We begin by operating on the steady-state equations of intermediate species.
Given a core complex $y$, we consider the following set of 
intermediate complexes: 
$I_{y}=\{y' \text{ intermediate} : y\uri y'\}$.
Following the reasoning in \cite{fw13}, we build a labeled directed graph denoted by
$G_{y}$, with node set $I_{y}\cup \{y\}$, and labeled directed edges as in $G$, except that any 
reaction of the form $y'\overset{\kappa}{\rightarrow} y''$, where $y'\in I_y$ and $y''$ is any  
core complex, is replaced by $y'\overset{\kappa}{\rightarrow} y$, with the same rate constant 
(if there are several core complexes to which $y'$ reacts, the edges are collapsed and the 
label equals the sum of the labels of the corresponding collapsed edges).
Note that, as all
intermediate complexes in MESSI networks react via intermediates to some core complex, the graph $G_y$ is
strongly connected.

Number the species in $I_y$ and denote by $x_1, \dots, x_{n_y}$ the corresponding concentrations. 
Then, the mass-action ODEs corresponding to them in the given network,  
are given by $f_\ell=\dot{x}_\ell=(\Lap(G_{y}))_\ell {\mathbf x}^{\top}$, 
where  $(\Lap(G_y))_\ell$ is the $\ell$th row of the Laplacian $\Lap(G_{y})$ of $G_{y}$,
${\bf x}=(x_1, \dots, x_{n_y},m(y))$, and $m(y)$ is  the monomial associated with the complex $y$.
Call $\rho_\ell=\rho(G_{y})_\ell$ for $1\leq \ell \leq n_y+1$. 
From the Matrix-Tree Theorem we know that, up to sign, the determinant of the first 
$n_y\times n_y$ principal minor of 
$\Lap(G_{y})$ equals $\rho_{n_y+1}\neq 0$. Call $A$ the $(n_y+1)\times(n_y+1)$ block matrix 
\[A=\left(\begin{array}{c|c}A_1 & 0 \\ \hline 0 & 1\end{array}\right),\]
where $A_1$ is the $n_y\times n_y$ is such that $A\cdot \Lap(G_{y})$ has the form:  
\[A\cdot \Lap(G_{y})=\left(\begin{array}{ccc|c}1 & & 0 &- \alpha_1\\ & \ddots & & \vdots\\ 
0 & &1& - \alpha_{n_y}\\ \hline * & \dots & * & * \end{array}\right).\]
Such a matrix $A$ exists since the $n_y \times n_y$ first principal minor
of $\Lap(G_{y})$ is invertible.
Observe that $\ker(\Lap(G_{y}))$ is generated by 
$(\rho_1,\rho_2,\dots,\rho_{n_y},\rho_{n_y+1})$, so $\rho_\ell- \alpha_\ell\rho_{n_y+1}=0$,
for $1\leq \ell\leq n_y$, and then $\alpha_\ell=\frac{\rho_\ell}{\rho_{n_y+1}}$ 
(and $\alpha_\ell\neq 0$). 
By multiplying $\Lap(G_{y})$ on the left by $A$, which is equivalent to operating  
linearly on the original equations $f_\ell=0$, we deduce the binomial equations 
$x_\ell-\alpha_\ell \, m(y)=0$, for the intermediate species $x_\ell$ at steady state, 
for $1\leq \ell \leq n_y$. Note that, as $\rho_{n_y+1}\neq 0$ for all $\kappa\in\mathbb{R}^m_{>0}$, 
these operations are well defined for all $\kappa\in\mathbb{R}^m_{>0}$.

We can afterwards substitute the steady-state value of  the intermediate species $x_\ell \in I_y$ 
into the original steady-state equations, for core species, of $G$. We moreover show that this 
substitution can be achieved via linear operations. 
Indeed, for a core species $X_k$, write the corresponding ODE as:
\[f_k~=~\dot{x}_k~=~p_k+\sum_{\ell=1}^{n_y} \kappa_\ell x_{\ell}~,\]
where $p_k$ is a polynomial that does not depend on $x_1,\dots, x_{n_y}$ and 
$\kappa_\ell\ge 0$ is positive precisely when $X_{\ell}$ reacts with rate constant $\kappa_\ell$ 
to a core complex that involves $X_k$.
We now subtract a linear combination of the intermediate-species binomials $(x_\ell-\alpha_\ell \, m(y))$: 
\[f_k-\sum_{\ell=1}^{n_y} \kappa_\ell (x_{\ell}-\alpha_{\ell} m(y))
~=~
p_k+ \sum_{\ell=1}^{n_y} (\kappa_\ell \alpha_{\ell}) m(y)~,
\]
and so we replace $f_k=0$ by $p_k+ \sum_{\ell=1}^{n_y} (\kappa_\ell \alpha_{\ell}) m(y)=0$, where 
all the intermediates in $I_y$ have been eliminated by performing linear operations 
on the original steady-state equations (for core species) and the new binomials (for the intermediates).

A key observation 
pertaining to how we obtained binomial equations for all intermediate species is that, by 
condition~$(\cond)$, the set of intermediate complexes can be written as the disjoint union of 
sets $I_y$ for a certain (finite) number of core complexes $y$. By the natural bijection between 
intermediate complexes and intermediate species, we can then obtain the corresponding binomials 
by operating linearly on the original equations. We also, as described above, eliminated all the intermediate 
species from the core-species equations by linear operations. 
We will denote this procedure as follows. First assume, without loss of generality, that the 
intermediate species are the last $s-n$ species. Then we can assert that there exists an 
invertible matrix $M_1=M_1(\kappa)\in \Q(\kappa)^{s\times s}$  which is well defined for all 
$\kappa\in\mathbb{R}^m_{>0}$ such that 
$M_1(f_1,\dots,f_s)^{\top}=(\tilde{f}_1,\dots,\tilde{f}_n,\tilde{h}_{n+1},\dots,\tilde{h}_s)^{\top}$,
where $\tilde{f}_1,\dots,\tilde{f}_n$
do not depend on the intermediate-species concentrations 
$x_{n+1},\dots, x_{s}$, 
and $\tilde{h}_{n+\ell}$ is the binomial for the intermediate species 
$x_{n+\ell}$ ($1\leq \ell \leq s-n$) and its form is 
$x_{n+\ell}-\alpha_{n+\ell} \, m(y)=0$.

Before we continue operating on the core-species equations (the $\tilde{f}_i$'s are not binomials), we describe the map $\Katau=\Katau(\kappa)$ 
mentioned in \eqref{eq:katau}. For each $X_i+X_j\uri X_\ell+X_m$ in $G$, the 
reaction constant $\tau$ in $G_1$ which gives the label 
$X_i+X_j\overset{\tau}{\longrightarrow} X_\ell+X_m$ 
has the form
\begin{equation*}
\tau=\kappa+\overset{s-n}{\underset{k=1}{\sum}}\kappa_k\alpha_k,
\end{equation*}
where $\kappa\ge 0$ is positive when $X_i+X_j\overset{\kappa}{\longrightarrow} X_\ell+X_m$ in $G$ 
(and $\kappa=0$ otherwise), and $\kappa_k\ge 0$ is positive if there is a reaction from the 
intermediate species $X_{n+k}\overset{\kappa_k}{\longrightarrow} X_\ell+X_m$ and 
$X_i+X_j\uri X_{n+k}$ in $G$ (and $\kappa_k=0$ otherwise). 
As we pointed out in Remark~\ref{rem:G1}, the steady states of $G$ are in one-to-one 
correspondence with those of $G_1$ and, in fact, the polynomials $\tilde{f}_i$ can be read 
from the digraph $G_1$ (see Theorem~3.2 in~\cite{fw13}).

What we show now is that we can 
operate linearly on the core-species equations 
$\tilde{f}_i=0$ (for $1\leq i\leq n$) in order to 
obtain equivalent binomial equations.
To avoid unnecessary notation, we will assume in what follows that the partition of $\Sp$ 
is minimal.
Recall that a vertex in a directed graph has \emph{outdegree zero} if it is not the tail of any 
directed edge. 
Let us define subsets of indices based on the graph $G_E$:
\begin{align*}
  L_0=&\{\beta \ge 1 : \text{outdegree of }\Sp^{(\beta)}\text{ is }0\},\\
  L_k=&\{\beta \ge 1: \text{for any edge }  \Sp^{(\beta)}\to\Sp^{(\gamma)}\text{ in }G_E \text{ it holds that }
  \gamma \in L_t, \text{ with }  t<k \}\backslash \underset{t=0}{\overset{k-1}{\bigcup}} L_t, ~ k \ge1.
\end{align*}
As  $\Sp$ is finite and there are no directed cycles in $G_E$, there must exist a subset 
$\Sp^{(\beta)}$ with $1\le \beta\le m$ such that its outdegree in $G_E$ is zero. 
This means that $L_0\neq \emptyset$.

Consider $\alpha\in L_0$. 
By the assumption of minimality of the partition, 
there is a connected component of $G^\circ_2$, 
which we denote by $H_\alpha$, 
 with vertices 
the species in $\Sp^{(\alpha)}$.
Let $\widetilde{H}_\alpha$ be the corresponding underlying 
undirected graph.
As $\widetilde{H}_\alpha$ is a tree, consider $X_i$, a leaf of the tree (this is, a vertex of 
degree one). $X_i$ is only connected to one vertex $X_j$, so  $\tilde{f}_i$ is already a 
binomial of the form
\[\tilde{f}_i=\tau_{ji}x_jx_\ell-\tau_{ij}x_ix_h,\]
for some species $X_\ell\in \Sp^{(\beta)}, X_h\in \Sp^{(\gamma)}$, $\beta, \gamma$ in levels 
strictly greater than $0$. Moreover, $\tilde{f}_j$ is 
\[\tilde{f}_j=p_j+\tau_{ij}x_ix_h-\tau_{ji} x_jx_\ell=p_j-\tilde{f}_i,\]
with $p_j$ a polynomial that does not depend on $x_i$.
Then, $\tilde{f}_j+\tilde{f}_i=p_j$.
And we can replace $\tilde{f}_j$ with $p_j$.
As the associated digraph $G_E$ has no directed cycles, all the reactions are enzymatic. 
This means that the reactions that correspond to $\tau_{ij}$ and $\tau_{ji}$ in $G_1$ are 
$X_i+X_h\overset{\tau_{ij}}{\longrightarrow}X_j+X_h$ and  
$X_j+X_\ell\overset{\tau_{ji}}{\longrightarrow}X_i+X_\ell$, 
respectively, and none of these reactions affect either $\tilde{f}_h$ or $\tilde{f}_\ell$. 
Moreover, as $X_i\in \Sp^{(\alpha)}$ with $\alpha\in L_0$ and $\Sp^{(\alpha)}$ has outdegree 
zero in $G_E$, $x_i$  only appears in $\tilde{f}_i$ and $\tilde{f}_j$.
We have then eliminated by linear operations the variable $x_i$ from all the equations other 
than $\tilde{f}_i$. And this can be done for all the species whose vertices are leaves of 
$\widetilde{H}_\alpha$. 
We can then erase all the leaves from $\widetilde{H}_\alpha$ and, by an inductive argument we 
see that we can operate linearly, with integer coefficients, on 
$\tilde{f}_1,\dots,\tilde{f}_n$ to obtain binomials for the species in 
$\Sp^{(\alpha)}$. This argument holds for any $\alpha \in L_0$. As the species in the 
$\Sp^{(\alpha)}$'s for $\alpha\in L_0$ do not appear in any label of the $L_i$'s for $i\geq 1$ 
and all the reactions in all the $H_\alpha$'s with $\alpha\in L_0$ do not affect the equations 
of those species that appear on its labels, by an inductive argument we can 
complete the proof to obtain binomial equations by operating linearly on 
the equations $\tilde{f}_j=0$ for the species in $\Sp^{(\alpha)}$ for $\alpha \in L_i$, $i \geq 1$.

We have then proved that there exists a matrix with integer entries 
$\widetilde{M}_2\in\Q^{n\times n}$, and an invertible block matrix $M_2\in \Q^{s\times s}$ 
of the form
\[M_2=\left(\begin{array}{c|c} \widetilde{M}_2 & 0\\ \hline 0 & Id_{s-n} \end{array}\right),\]
such that 
$M_2~(\tilde{f}_1,\dots,\tilde{f}_n,\tilde{h}_{n+1},\dots,\tilde{h}_s)^{\top}=
(\tilde{h}_1,\dots,\tilde{h}_s)^{\top}$,
where $\tilde{h}_1,\dots, \tilde{h}_s$ are binomials.

We have so far that there are invertible matrices $M_1\in\Q(\kappa)^{s\times s}$ and 
$M_2\in\Q^{s\times s}$, with $M_1$ well defined for all $\kappa\in\R^m_{>0}$, such that 
$M_2M_1(f_1,\dots,f_s)^{\top}=(\tilde{h}_1,\dots,\tilde{h}_s)^{\top}$, with $\det(M_2M_1)\neq 0$. 
If  $f_{j_1},\dots,f_{j_{s-d}}$ is a basis of the $\Q(\kappa)$-linear subspace generated by 
$f_1,\dots,f_s$ there must exist a set of $s-d$ binomials  
$\{\bar{h}_{j_1},\dots,\bar{h}_{j_{s-d}}\}\subseteq\{\tilde{h}_1,\dots,\tilde{h}_s\}$ 
and an invertible matrix $M\in \Q(\kappa)^{(s-d)\times(s-d)}$, which is well defined for all 
$\kappa \in \mathbb{R}^m_{>0}$, such that 
$M(f_{j_1},\dots,f_{j_{s-d}})^{\top}=(\bar{h}_{j_1},\dots,\bar{h}_{j_{s-d}})^{\top}$, as we wanted to prove.
\end{proof}

\begin{example}\label{ex:toy}
Consider the following network:
\begin{center}
\begin{tabular}{c}
\ce{S_{0} + E
<=>[\kappa_1][\kappa_2]
ES_{0}
->[\kappa_3]
S_{1} + E
->[\kappa_4]
S_{2} + E
->[\kappa_5]
S_{3} + E}
\\
\ce{S_{3}
->[\kappa_6]
S_{2}
->[\kappa_7]
S_{1}
->[\kappa_8]
S_{0}
},\\
\end{tabular}
\end{center}
which has $s=6$ species:
\begin{center}
\begin{tabular}{llllll}
$X_{1}$=\ce{S_{0}}, & $X_2$=\ce{S_{1}},& $X_3$=\ce{S_{2}},&$X_{4}$=\ce{S_{3}},  & $X_5$=\ce{E}, & $X_{6}$=\ce{ES_{0}}.\\
\end{tabular}
\end{center}
There are $2$ conservation laws: 
\begin{align*}
 x_1+x_2+x_3+x_4+x_6 &=c_1\\
 x_5+x_6 &=c_2.
\end{align*}
The equations $f_{c,\kappa}(x)$ are

\begin{minipage}{0.47\textwidth}
 \noindent $f_{c,\kappa,1}=x_1+x_2+x_3+x_4+x_6-c_1$,\\
$f_{c,\kappa,3}=\kappa_4x_2x_5-\kappa_5x_3x_5+\kappa_6x_4-\kappa_7x_3$,\\
$f_{c,\kappa,5}=x_5+x_6-c_2$,
\end{minipage}
\begin{minipage}{0.47\textwidth}
 \noindent $f_{c,\kappa,2}=\kappa_3x_6-\kappa_4x_2x_5+\kappa_7x_3-\kappa_8x_2$,\\
$f_{c,\kappa,4}=\kappa_5x_3x_5-\kappa_6x_4$,\\
$f_{c,\kappa,6}=\kappa_1x_1x_5-(\kappa_2+\kappa_3)x_6$.
\end{minipage}

\medskip

The matrix $M_1(\kappa)$ is the product of two matrices: the first one multiplied by 
$(f_1,\dots,f_6)^{\top}$ gives binomials for the intermediate species equations; the second one 
eliminates the intermediate species from the core species equations.
{\footnotesize
\[M_1=M_1(\kappa)=\left(\begin{array}{ccc|c}
 &  &  & -\kappa_2\\
 &  &  & -\kappa_3\\
 & Id_5 &  & 0 \\
 &  &  & 0 \\
 &  &  & -\kappa_2-\kappa_3\\
 \hline
 0 & \dots & 0 & 1
\end{array}\right)\cdot\left(\begin{array}{ccc|c}
 & & & 0\\
 & & & \\
 & Id_5 & & \vdots \\
 & & & \\
 & & & 0\\
 \hline
 0 & \dots & 0 & -\dfrac{\vphantom{A^A}1}{\kappa_2+\kappa_3} 
\end{array}\right),\]
}
where $Id_5$ is the $5\times 5$ identity matrix. This leads to:
{\small
\begin{align}
\nonumber M_1(f_1,\dots,f_6)^{\top} =&(\kappa_8x_2-\frac{\kappa_1\kappa_3}{\kappa_2+\kappa_3}x_1x_5,~ 
-\kappa_4x_2x_5+\kappa_7x_3-\kappa_8x_2+\frac{\kappa_1\kappa_3}{\kappa_2+\kappa_3}x_1x_5,\\ \label{eq:tilde}
&f_3,~f_4,~0,~-\frac{\kappa_1}{\kappa_2+\kappa_3}x_1x_5+x_6)^{\top} =(\tilde{f}_1,\dots,\tilde{f}_5,\tilde{h}_6).
\end{align}
}
The corresponding digraph $G_2^\circ$ defined in \S\ref{ssec:associated_digraphs} is:
\begin{center}
\begin{tabular}{c}
\ce{S_{0}
<=>[\frac{\kappa_1}{\kappa_2+\kappa_3}x_5][\kappa_8]
S_{1}
<=>[\kappa_4x_5][\kappa_7]
S_{2}
<=>[\kappa_5x_5][\kappa_6]
S_{3}
},\\
\end{tabular}
\end{center}
and the underlying undirected graph is a tree with leaves \ce{S_{0}} and \ce{S_{3}}. 
Then $\tilde{f}_1$ and $\tilde{f}_4$ are already binomials, and  
$p_2=\tilde{f}_2-\tilde{f}_1=-\kappa_4x_2x_5+\kappa_7x_3$ (from~\eqref{eq:tilde}), and 
$p_3=\tilde{f}_3+\tilde{f}_4=(\kappa_4x_2x_5-\kappa_5x_3x_5+\kappa_6x_4-\kappa_7x_3)+(\kappa_5x_3x_5-\kappa_6x_4)=\kappa_4x_2x_5-\kappa_7x_3$, 
which are binomials. The matrix $M_2$ is
{\small
\[M_2=\left(\begin{array}{cccccc}
1 & 0 & 0 & 0 & 0 & 0\\
1 & 1 & 0 & 0 & 0 & 0\\
0 & 0 & 1 & 1 & 0 & 0 \\
0 & 0 & 0 & 1 & 0 & 0 \\
0 & 0 & 0 & 0 & 1 & 0\\
0 & 0 & 0 & 0 & 0 & 1
\end{array}\right),\]
}
and $M_2~(\tilde{f}_1,\dots,\tilde{f}_5,\tilde{h}_6)^{\top}=(\tilde{h}_1,\dots,\tilde{h}_6)^{\top}$.

In order to obtain $h_{c,\kappa,2}$, $h_{c,\kappa,3}$, $h_{c,\kappa,4}$, 
$h_{c,\kappa,6}$ we need to build $f_1$ and $f_5$ from 
$f_{c,\kappa,2}$, $f_{c,\kappa,3}$, $f_{c,\kappa,4}$, $f_{c,\kappa,6}$, 
multiply by the product $M_2M_1$, and then pick $4$ linearly independent binomials:
{\small
\[M'=\left(\begin{array}{cccccc}
1 & 0 & 0 & 0 & 0 & 0\\
0 & 1 & 0 & 0 & 0 & 0\\
0 & 0 & 0 & 1 & 0 & 0 \\
0 & 0 & 0 & 0 & 0 & 1
\end{array}\right)M_2M_1
\left(\begin{array}{rrrr}
-1 & -1 & -1 & -1 \\
1 & 0 & 0 & 0\\
0 & 1 & 0 & 0 \\
0 & 0 & 1 & 0  \\
0 & 0 & 0 & -1 \\
0 & 0 & 0 & 1 
\end{array}\right)=
\left(\begin{array}{rrrr}
-1 & -1 & -1 & -\frac{\kappa_3}{\kappa_2+\kappa_3} \\
0 & -1 & -1 & 0\\
0 & 0 & 1 & 0 \\
0 & 0 & 0 & -\frac{1}{\kappa_2+\kappa_3} 
\end{array}\right)
,\]}

\noindent
which leads to $M'(f_{c,\kappa,2},f_{c,\kappa,3},f_{c,\kappa,4},f_{c,\kappa,6})^{\top}=
(\kappa_8x_2-\frac{\kappa_1\kappa_3}{\kappa_2+\kappa_3}x_1x_5,-\kappa_4x_2x_5+\kappa_7x_3,
\kappa_5x_3x_5-\kappa_6x_4,-\frac{\kappa_1}{\kappa_2+\kappa_3}x_1x_5+x_6)^{\top}$.
Notice that, as $M_1(\kappa)$ is well defined for all $\kappa\in\mathbb{R}_{>0}^8$, then $M'$ is 
also well defined for all $\kappa\in\mathbb{R}_{>0}^8$.

We now choose the matrix $M=M(\kappa)$:
\[M=\left(\begin{array}{rrrr}
\frac{1}{\kappa_8} & 0 & 0 & 0\\
0 & \frac{1}{\kappa_7} & 0 & 0 \\
0 & 0  & -\frac{1}{\kappa_6} & 0  \\
0 & 0 & 0 & 1
\end{array}\right)M',\]
and the effective parameters $\bar{a}_1=\frac{\kappa_1\kappa_3}{\kappa_8(\kappa_2+\kappa_3)}$, 
$\bar{a}_2=\frac{\kappa_4}{\kappa_7}$,  $\bar{a}_3=\frac{\kappa_5}{\kappa_6}$, and  
$\bar{a}_4=\frac{\kappa_1}{\kappa_2+\kappa_3}$.  Then, the reparametrization map $\bar{a}$ is 
surjective, $M(\kappa)$ is well defined for all $\kappa\in\mathbb{R}_{>0}^8$, and 
$\det(M)=(-\frac{1}{\kappa_6\kappa_7\kappa_8})\cdot (-\frac{1}{\kappa_2+\kappa_3})>0$.
\end{example}

\section{Proof of Lemma \ref{lem:4}} \label{sec:pf-lem-4}

In this appendix, we prove Lemma \ref{lem:4}.

\noindent
\begin{proof}[Proof of Lemma \ref{lem:4}]
Let $\varepsilon >0$.  
By hypothesis, $f(a, z)$ and $(a^*,z^*) \in \mathbb{R}^{n+1}$ satisfy:
	\begin{enumerate}[(I)]
	\item $f(a^*,z^*)=0$,
	\item  $\frac{\partial f}{\partial z}(a^*,z^*)=0$,
	\item  $\frac{\partial^2 f}{\partial z^2}(a^*,z^*) \neq 0$, and
	\item  $\frac{\partial f}{\partial a_\ell}(a^*,z^*)\neq0$.
	\end{enumerate}
By assumptions (I) and (IV), the implicit function theorem applies. 
So, there exists a function 
	\[
	\tilde{\beta}:B_{\varepsilon'}(a_1,\dots,a_{\ell-1},a_{\ell+1},\dots,a_n,z)\to \mathbb{R} \]
defined on a ball of some radius $\varepsilon' \le \varepsilon$ in $\mathbb{R}^n$ 
	 such that 
$\tilde{\beta}(a_1^*,\dots,a_{\ell-1}^*,a_{\ell+1}^*,\dots,a_n^*,z^*)=a_\ell^*$
and, near the point of interest $(a^*, z^*)$, the $f=0$ locus is the graph of $\tilde{\beta}$. 

Call $\beta: (z^*-\epsilon', z^*+\epsilon') \to \mathbb{R}$ the restriction $\beta(z) =\tilde{\beta}(a_1^*,\dots,a_{\ell-1}^*,a_{\ell+1}^*,\dots,a_n^*,z)$, 
and call $\hat{\beta}(z)=(a_1^*,\dots,a_{l-1}^*,\beta(z),a_{\ell+1}^*,\dots, a_n^*)$;
 then near $(a^*, z^*)$ we have 
\begin{align} \label{eq:f=0}
f(\hat{\beta}(z),z)~=~0 \quad {\rm for~} z \in (z^*-\varepsilon', z^*+\varepsilon')~.
\end{align}

Take the derivative of equation~\eqref{eq:f=0}, via the chain rule:
\begin{align} \label{eq:deriv-1}
\pd{f}{a_\ell} \left( \hat{\beta}(z),z \right) \beta'(z) + \pd{f}{z}(\hat{\beta}(z),z) ~=~0~.
\end{align}
Evaluating this equation at $z=z^*$, and recalling that $\beta(z^*)=a_\ell^*$ and 
$\hat{\beta}(z^*)=a^*$, we obtain:
\begin{align*}
\pd{f}{a_\ell}(a^*,z^*) \beta'(z^*) + \pd{f}{z}(a^*,z^*) ~=~0~.
\end{align*}
Recall that $\pd{f}{a_\ell}(a^*,z^*)\neq 0$, by hypothesis (IV), and $\pd{f}{z}(a^*,z^*)=0$, by (II).  Thus, $\beta'(z^*)=0$.
Next,  take another derivative, applying the chain rule to equation~\eqref{eq:deriv-1}, and then evaluate at $z=z^*$:
\begin{align*}
\pd{f}{a_\ell}(a^*,z^*) \beta''(z^*) +\pd{^2 f}{a_\ell^2} (a^*,z^*) \left(  \beta'(z^*)  \right)^2
	+ 2\pd{^2 f}{z \partial a_\ell} (a^*,z^*) \beta'(z^*) + \pd{^2 f}{z^2} (a^*,z^*)	
~=~0~.
\end{align*}
Thus, we deduce from (III) and (IV) that  $\beta''(z^*) \ne 0$. It follows that the univariate function $\beta$ has a maximum or a minimum at $z^*$ (depending
on the sign of the second derivative).
Hence, there exists a sufficiently small $\delta>0$ such that for all 
$\delta' \in (0,\delta)$, either $b_\ell^{**}=a_\ell^*-\delta'$ (if $a_\ell^*$ is a local maximum) or $a_\ell^{**}=a_\ell^*+\delta'$ 
(if $a_\ell^*$ is a local minimum) yields $f(a^{**}, z)=0$, for $a^{**}=(a_1^*,\dots,a_{\ell-1}^*,a_\ell^{**},a_{\ell+1}^*,\dots,a_n^*)$, 
with two distinct real solutions within distance $\varepsilon$ of $z^*$. 
\end{proof}

\section{Proof of Theorem \ref{shape}} \label{sec:pf-shape-thm}
The goal of this appendix is to prove Theorem \ref{shape}. 
We first recall some definitions and a result of Kapur, Sun, and Wang
from the theory of comprehensive Gr\"obner bases~\cite{SunYao2010}. 
For basic concepts from computational algebraic geometry, 
see 
the books
\cite{CLOb, CLO}. 

Let $h\in {\mathbb C}[a, x] := {\mathbb C}[a_1, a_2, \ldots, a_n, x_1, x_2, \ldots,  x_s]$.
We denote
by 
\[
{\rm lpp}_x(h) \quad \quad {\rm and}  \quad \quad {\rm lc}_x(h)~,
\]
the leading monomial (or ``leading power product'') and leading coefficient of $h$, respectively,
when $h$ is viewed in 
	$\mathbb{C}(a)[ x]$
 taken with the lexicographic order $x_{s}<\cdots<x_2<x_{1}$.
For instance, if $h=a_1^2x_1 + x_2$, then ${\rm lpp}_x(h)=x_1$ and ${\rm lc}_x(h)=a_1^2$. 

\begin{definition} \cite[Definition 4.1]{SunYao2010} \label{noncomp}
Given  $H\subseteq {\mathbb C}[a, x]$, a subset $H'$ of $H$ is a
{\em noncomparable subset of} $H$ if 
\begin{enumerate}
	\item for every $h\in H$, there exists $g\in H'$ such that ${\rm lpp}_x(h)$ is a multiple of ${\rm lpp}_x(g)$, and
	\item for every $g_1, g_2\in H'$, with $g_1 \neq g_2$, the leading monomial 
	${\rm lpp}_x(g_1)$ is \underline{not} a multiple of ${\rm lpp}_x(g_2)$, and 
${\rm lpp}_x(g_2)$ is \underline{not} a multiple of ${\rm lpp}_x(g_1)$. 
	\end{enumerate} 
\end{definition}

\begin{example}
Consider $H=\{a_2x_2^2-1, a_1x_1-1, (a_1+1)x_1-x_2, (a_1+1)x_2-a_1\}$.
Let $H'=\{a_1x_1-1, (a_1+1)x_2-a_1\}~ (\subseteq H)$. We verify that $H'$ is a noncomparable subset of $H$:
\begin{enumerate}
\item Note that $\{{\rm lpp}_x(h) \mid h \in H\}=\{x_2^2, x_2, x_1\}$ and 
$\{{\rm lpp}_x(g) \mid g \in H'\}=\{x_2, x_1\}$. So, every monomial in the first set is a multiple of
some monomial in the second set. 
\item For the two polynomials in $H'$, their leading monomials are, 
respectively, $x_1$ and $x_2$. 
We see that $x_1$ is not a multiple of $x_2$, and $x_2$ is not a multiple of $x_1$.
\end{enumerate} 
\end{example}

The following straightforward lemma shows that noncomparable subsets always exist and explains how (in theory) to effectively find one.

\begin{lemma}[Existence of noncomparable subsets] \label{lem:noncomparable-set}
Let $H$
be a finite, nonempty subset of $ {\mathbb C}[a,x]$.  The following procedure yields a noncomparable subset of $H$:

\begin{enumerate}
\item  Let $M=\{{\rm lpp}_x(h) \mid h\in H\}$, and let $D=\emptyset$.

\item  Pick a monomial from $M$, say $m_1$. 
Let $d:=m_1$. Search $M$. If there exists $m \in M$ such that $m|d$ and $m\neq d$, then set $d := m$. 
Continue searching until there is no $m$ in $M$ such that $m|d$ and $m\neq d$. 
Then add $d$ into $D$.

\item Let $M' := \{ m \in M: d|m\}$, and let $M := M \backslash M'$. 

\item Repeat steps 2--3 until $M$ is empty.

\item For each $d \in D$, pick some $f_d \in F$ for which ${\rm lpp}_x(f)=d$.  Output $F_D:=\{f_d \mid d \in D \}$.  
\end{enumerate}
\end{lemma}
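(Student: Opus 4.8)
The plan is to track the ``working set'' $M$ through the iterations of the procedure and thereby reduce the statement to a purely combinatorial fact about divisibility of monomials. First I would fix notation: write $M_1 := \{{\rm lpp}_x(h) \mid h \in H\}$ for the initial set of leading monomials, and let the main loop run for iterations $i = 1, 2, \dots, r$. Denote by $M_i$ the value of the working set $M$ at the \emph{start} of iteration $i$, by $d_i \in M_i$ the monomial adjoined to $D$ during iteration $i$, and by $M_{i+1} := M_i \setminus \{m \in M_i : d_i \mid m\}$ the set that remains after Step (3). Before anything else I would record that both loops terminate: the inner descent in Step (2) replaces $d$ by a \emph{proper} divisor lying in the finite set $M_i$, so it visits distinct monomials of strictly decreasing total degree and must stop; and the outer loop strictly shrinks the finite set $M$ at each pass (at least $d_i$ itself is deleted), so $r < \infty$ and $M_{r+1} = \emptyset$. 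I would also note $F_D \subseteq H$ -- for each $d \in D \subseteq M_1$ there is indeed some $h \in H$ with ${\rm lpp}_x(h) = d$ -- and, since Step (5) selects exactly one $f_d$ per $d \in D$, that the map $d \mapsto f_d$ is injective with ${\rm lpp}_x(f_d) = d$; hence it suffices to verify the two conditions of Definition~\ref{noncomp} for the monomial set $D = \{d_1, \dots, d_r\}$.

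The two structural observations I would isolate are: (a) the working sets are nested, $M_1 \supseteq M_2 \supseteq \dots \supseteq M_{r+1} = \emptyset$; and (b) at the end of the inner loop, $d_i$ is \emph{minimal} in $M_i$ under divisibility, meaning there is no $m \in M_i$ with $m \mid d_i$ and $m \neq d_i$ (this is precisely the stopping criterion of Step (2)).

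For the covering condition (condition (1) of Definition~\ref{noncomp}) I would argue that every monomial of $M_1$ is deleted at some iteration: since $M_{r+1} = \emptyset$ and the sets are nested, each $m \in M_1$ lies in $M_i \setminus M_{i+1}$ for a unique $i$, and by the rule for Step (3) this forces $d_i \mid m$. As ${\rm lpp}_x(h) \in M_1$ for every $h \in H$, this produces a $g = f_{d_i} \in F_D$ with ${\rm lpp}_x(h)$ a multiple of ${\rm lpp}_x(g)$, as required.

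The noncomparability condition (condition (2)) is the crux, and the one place where the bookkeeping must be handled carefully. Take distinct $d_i, d_j \in D$ with $i < j$. Since $d_j$ is chosen at iteration $j$ it lies in $M_j$, and by nestedness $M_j \subseteq M_{i+1}$; as $M_{i+1}$ was obtained from $M_i$ by deleting all multiples of $d_i$, we conclude $d_i \nmid d_j$. For the reverse divisibility I would invoke minimality: again by nestedness $d_j \in M_j \subseteq M_i$ and $d_j \neq d_i$, so observation (b) forbids $d_j \mid d_i$. Hence neither of $d_i, d_j$ is a multiple of the other, which is exactly condition (2). The main obstacle is thus not any deep computation but the careful use of the monotonicity $M_j \subseteq M_i$ for $j > i$: it is what simultaneously certifies that a later-selected minimal monomial survived the earlier deletion (giving $d_i \nmid d_j$) and that it was already present to challenge the earlier minimality (giving $d_j \nmid d_i$).
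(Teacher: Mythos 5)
Your proof is correct. The paper itself offers no argument for this lemma -- it is introduced as ``straightforward'' and only the procedure is recorded -- so there is nothing to compare against; your bookkeeping (nestedness of the working sets $M_1 \supseteq M_2 \supseteq \cdots$, minimality of each $d_i$ in $M_i$ under divisibility, and the two-sided argument that $d_i \nmid d_j$ because $d_j$ survived the deletion at step $i$ while $d_j \nmid d_i$ by minimality of $d_i$ in $M_i$) is exactly the verification the authors leave to the reader, and it is carried out correctly, including the termination claims. The only cosmetic point is that the ``$f_d \in F$'' in Step (5) of the statement is a typo for $f_d \in H$, which you implicitly and correctly repair by noting $F_D \subseteq H$.
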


Now consider a set of $s$ polynomials 
$H=\{h_1,h_2, \ldots, h_s\}$ in 
	${\mathbb C}[a,x]$. 
Denote by ${\mathcal I}(H)$ the ideal generated by $H$ in the polynomial ring. 
Denote by $V(H)$ the 
variety generated by $H$ (or, equivalently generated by ${\mathcal I}(H)$) in ${\mathbb C}^{n+s}$:
\[
	V(H) ~:=~
	\{(a;x)=(a_1, a_2, \ldots a_n, x_1, x_2, \ldots,  x_s)\in {\mathbb C}^{n+s} \mid h(a;x)=0 ~{\rm for~all}~h\in {\mathcal I}(H)\}~.
	\]

Below, we show that if the system $H$ is a general zero-dimensional system (Definition \ref{def:gzd}),  
then $H$ admits a triangular form (see the proof of Theorem \ref{shape}). 
More 
specifically, we prove that any noncomparable subset of a Gr\"obner basis of ${\mathcal I}(H)$ with respect to
the lexicographic order $a_n<\cdots<a_2<a_1<x_{s}<\cdots<x_2<x_{1}$ has 
the desired triangular form. 
The proof requires the following result, due to Kapur, Sun, and Wang \cite[Theorem 4.3]{SunYao2010}, 
which relates Gr\"obner bases of ${\mathcal I}(H)$ to those of the specialized ideal ${\mathcal I}(H|_{b=b^*})$: 

\begin{proposition}[Specialization of Gr\"obner bases~\cite{SunYao2010}] \label{sg}
Consider $H \subseteq {\mathbb C}[a, x]$,  and let ${\mathcal G}$ be a Gr\"obner basis of the
 ideal ${\mathcal I}(H) \subseteq 
{\mathbb C}[a,x]$ with respect to the lexicographic order 
$a_n<\cdots<a_2<a_1<x_{s}<\cdots<x_2<x_{1}$.
Let ${\mathcal G}_{\cap}={\mathcal G}\cap {\mathbb C}[a]$, 
let ${\mathcal G}_m$ be a noncomparable subset of ${\mathcal G}\backslash {\mathcal G}_{\cap}$, and
let $h=\Pi_{g
\in {\mathcal G}_m}{\rm lc}_{x}(g)$. For any $a^*=(a_1^*, a_1^*, \ldots, a_n^*)\in {\mathbb C}^n$,  if 
	${\mathcal G}_{\cap}|_{a=a^*} \subseteq \{0\}$ 
	and $h|_{a=a^*}\neq 0$,  then 
${\mathcal G}_m|_{a=a^*}$ is a Gr\"obner basis of 
the ideal ${\mathcal I}(H|_{a=a^*})
~\subseteq~ {\mathbb C}[x]$ with respect to the lexicographic order $x_{s}<\cdots<x_2<x_{1}${\bf.}

\end{proposition}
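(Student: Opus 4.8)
Throughout, write $\sigma:{\mathbb C}[a,x]\to{\mathbb C}[x]$ for the evaluation homomorphism $a\mapsto a^*$, let $\tilde{I}:={\mathcal I}(H|_{a=a^*})=\langle \sigma(h_1),\dots,\sigma(h_s)\rangle\subseteq{\mathbb C}[x]$ be the specialized ideal, and set $N:=\langle {\rm lpp}_x(g):g\in{\mathcal G}_m\rangle\subseteq{\mathbb C}[x]$. The plan is to reduce the whole proposition to a single leading-term statement about $\tilde{I}$ and then to prove that statement by lifting to ${\mathcal I}(H)$ and reducing modulo ${\mathcal G}$ in the block order. The first step is to record that the leading data of ${\mathcal G}_m$ survive specialization: since $h=\prod_{g\in{\mathcal G}_m}{\rm lc}_x(g)$ and $h|_{a=a^*}\neq 0$, each factor ${\rm lc}_x(g)|_{a=a^*}$ is nonzero, so ${\rm lpp}_x(\sigma(g))={\rm lpp}_x(g)$ for every $g\in{\mathcal G}_m$.

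I claim it suffices to prove the following leading-term assertion $(\ast)$: \emph{for every nonzero $f\in\tilde{I}$, the monomial ${\rm lpp}_x(f)$ is a multiple of ${\rm lpp}_x(g)$ for some $g\in{\mathcal G}_m$.} Indeed, the containment ${\mathcal G}_m|_{a=a^*}\subseteq\tilde{I}$ is immediate: writing each $g\in{\mathcal G}_m\subseteq{\mathcal I}(H)$ as $g=\sum_i p_i h_i$ gives $\sigma(g)=\sum_i\sigma(p_i)\sigma(h_i)\in\tilde{I}$. Combined with the leading-monomial preservation above, this yields $N\subseteq{\rm LT}(\tilde{I})$, while $(\ast)$ is exactly the reverse inclusion ${\rm LT}(\tilde{I})\subseteq N$. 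Hence ${\rm LT}(\tilde{I})=N$ is generated by the leading terms of ${\mathcal G}_m|_{a=a^*}$, and since ${\mathcal G}_m|_{a=a^*}\subseteq\tilde{I}$, this says precisely that ${\mathcal G}_m|_{a=a^*}$ is a Gr\"obner basis of $\tilde{I}$ with respect to the lex order $x_s<\cdots<x_1$.

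To attack $(\ast)$, fix $0\neq f\in\tilde{I}$, write $f=\sum_i q_i\,\sigma(h_i)$ with $q_i\in{\mathbb C}[x]$, and lift to $F:=\sum_i q_i h_i\in{\mathcal I}(H)$, so that $\sigma(F)=f$. Because ${\mathcal G}$ is a Gr\"obner basis, dividing $F$ by ${\mathcal G}$ in the block order yields a standard representation $F=\sum_{g\in{\mathcal G}}Q_g\,g$ in which the block multidegree of each $Q_g\,g$ is $\preceq$ that of $F$. Applying $\sigma$ and using the hypothesis ${\mathcal G}_{\cap}|_{a=a^*}\subseteq\{0\}$ (so $\sigma(g)=0$ for $g\in{\mathcal G}_{\cap}$), the purely parametric generators drop out and $f=\sum_{g\in{\mathcal G}\setminus{\mathcal G}_{\cap}}\sigma(Q_g)\,\sigma(g)$. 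By the noncomparability of ${\mathcal G}_m$ inside ${\mathcal G}\setminus{\mathcal G}_{\cap}$ (condition (1) of Definition~\ref{noncomp}), every ${\rm lpp}_x(g)$ with $g\in{\mathcal G}\setminus{\mathcal G}_{\cap}$ is a multiple of some ${\rm lpp}_x(g')$ with $g'\in{\mathcal G}_m$; thus $N$ already encodes the entire $x$-leading ``staircase'' of ${\mathcal G}\setminus{\mathcal G}_{\cap}$, and it remains only to read off from the specialized representation that ${\rm lpp}_x(f)\in N$.

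The main obstacle is precisely the possible \emph{drop of the leading term under specialization}: although the block-order representation controls the block multidegree of $F$, the $x$-leading monomial of $\sigma(F)$ may be strictly smaller than that of $F$ when top terms cancel, or when the top contributions come from generators in ${\mathcal G}_{\cap}$ or from $a$-coefficients that vanish at $a^*$. Controlling this collapse is where the three hypotheses enter jointly: the vanishing of ${\mathcal G}_{\cap}$ removes the parametric relations, the condition $h|_{a=a^*}\neq 0$ guarantees that the leading coefficients of the generators in ${\mathcal G}_m$ do not degenerate, and the noncomparability guarantees that no leading monomial outside $N$ can survive. The clean way to finish is an induction on the block multidegree of the lift $F$, taken over all lifts of $f$: whenever the surviving top monomial of $\sigma(F)$ threatens to fall outside $N$, one modifies $F$ by subtracting an element of ${\mathcal I}(H)$ that specializes appropriately, strictly lowering the block multidegree and contradicting minimality, thereby forcing ${\rm lpp}_x(f)\in N$. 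This last bookkeeping is exactly the content of the specialization theorem of Kapur, Sun, and Wang, and it may either be carried out directly or invoked as \cite[Theorem~4.3]{SunYao2010}.
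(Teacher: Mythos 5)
The paper never proves Proposition~\ref{sg}: it is imported wholesale as \cite[Theorem~4.3]{SunYao2010}, with the surrounding text explicitly attributing it to Kapur, Sun, and Wang. Since your sketch ultimately rests its decisive step on that same citation, it is consistent with the paper's treatment, and the scaffolding you add is correct and worthwhile: the observation that $h|_{a=a^*}\neq 0$ forces ${\rm lpp}_x(\sigma(g))={\rm lpp}_x(g)$ for all $g\in\mathcal{G}_m$; the containment $\mathcal{G}_m|_{a=a^*}\subseteq \tilde{I}$; the reduction of the whole proposition to the single assertion $(\ast)$ that every nonzero $f\in\tilde{I}$ has ${\rm lpp}_x(f)$ divisible by some ${\rm lpp}_x(g)$, $g\in\mathcal{G}_m$; and the use of $\mathcal{G}_{\cap}|_{a=a^*}\subseteq\{0\}$ to delete the purely parametric generators from a specialized standard representation. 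All of this is sound.

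However, you offer the direct induction as a genuine alternative to citing \cite{SunYao2010}, and as written that branch has a real gap, precisely where you locate the difficulty. The step ``whenever the surviving top monomial of $\sigma(F)$ threatens to fall outside $N$, one modifies $F$ by subtracting an element of $\mathcal{I}(H)$ that specializes appropriately, strictly lowering the block multidegree'' is asserted, not proved, and it is the entire content of the theorem. The hard case is a generator $g\in\mathcal{G}\setminus\mathcal{G}_{\cap}$ with ${\rm lc}_x(g)|_{a=a^*}=0$ but $\sigma(g)\neq 0$: then ${\rm lpp}_x(\sigma(g))$ is strictly smaller than ${\rm lpp}_x(g)$ and need not be a multiple of any ${\rm lpp}_x(g')$ with $g'\in\mathcal{G}_m$, so noncomparability --- which only controls the generic staircase, not these degenerate specializations --- gives you nothing for such terms. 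One must show that every such $\sigma(g)$ nevertheless reduces to zero modulo $\sigma(\mathcal{G}_m)$, which requires exploiting the elimination property of the block order $a\ll x$ together with the hypothesis on $\mathcal{G}_{\cap}$ in a nontrivial inductive rewriting argument; it is not clear that your proposed modification of the lift even exists at each stage, and no mechanism for producing it is given. So keep the citation: with it, your proposal matches the paper, which likewise takes this result on faith from \cite[Theorem~4.3]{SunYao2010}; without it, the sketch does not close.
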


To prove Theorem \ref{shape}, we also need the following lemma:

\begin{lemma}\label{nonempty}
For a general zero-dimensional system
$H=\{h_1,h_2, \ldots, h_s\}\subseteq
	{\mathbb C}[a,x]$, we have 
${\mathcal I}(H)\cap  {\mathbb C}[a] = \{0\}.$ 
\end{lemma}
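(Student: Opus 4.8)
The plan is to argue by contradiction. Suppose there is a nonzero polynomial $g \in \mathcal{I}(H) \cap \mathbb{C}[a]$. Because $g$ lies in the ideal $\mathcal{I}(H)$, it vanishes at every point of $V(H)$; and because $g$ involves only the parameters $a$, its value at a point $(a^*, x^*) \in V(H)$ depends only on $a^*$. The strategy is to exhibit a parameter value $a^*$ at which $g$ does \emph{not} vanish, yet for which the specialized system $h_1|_{a=a^*} = \cdots = h_s|_{a=a^*} = 0$ still has a solution $x^*$; then $(a^*, x^*) \in V(H)$ would force $g(a^*) = 0$, a contradiction.

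First I would invoke Definition~\ref{def:gzd}: there is a proper subvariety $\mathcal{W} \subsetneq \mathbb{C}^n$ such that every $a^* \in \mathbb{C}^n \setminus \mathcal{W}$ satisfies (A1)--(A3). Only condition (A1) is needed here: it guarantees that for each such $a^*$ the specialized system has a finite and nonzero number of complex solutions, hence at least one solution $x^* \in \mathbb{C}^s$.

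Next I would produce a parameter value lying outside both $\mathcal{W}$ and the zero set of $g$. Since $\mathcal{W} \subsetneq \mathbb{C}^n$, its vanishing ideal contains some nonzero polynomial $w \in \mathbb{C}[a]$. As $\mathbb{C}[a]$ is an integral domain and $g \neq 0$, the product $g\,w$ is a nonzero polynomial; a nonzero polynomial over the infinite field $\mathbb{C}$ cannot vanish identically, so there exists $a^* \in \mathbb{C}^n$ with $g(a^*)\,w(a^*) \neq 0$. For this $a^*$ we have $g(a^*) \neq 0$ and, since $w$ vanishes on $\mathcal{W}$, also $a^* \notin \mathcal{W}$. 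By the previous step there is a solution $x^*$ of the specialized system, so $(a^*, x^*) \in V(H)$, and therefore $g(a^*) = g(a^*, x^*) = 0$, contradicting $g(a^*) \neq 0$. Hence no nonzero $g$ exists, and $\mathcal{I}(H) \cap \mathbb{C}[a] = \{0\}$.

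The argument is short, and the only delicate point is this last step: guaranteeing a parameter value that simultaneously avoids the ``bad locus'' $\mathcal{W}$ and the hypersurface $V(g)$. This is exactly where the irreducibility of $\mathbb{C}^n$ is used — equivalently, that $\mathbb{C}[a]$ is an integral domain and $\mathbb{C}$ is infinite, so the union of two proper Zariski-closed subsets cannot exhaust $\mathbb{C}^n$. It is worth emphasizing that conditions (A2) and (A3) play no role here; the existence of \emph{some} solution (A1) is all that is required.
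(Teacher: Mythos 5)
Your proof is correct, and it reaches the conclusion by a somewhat different route than the paper. The paper's argument notes that condition (A1) gives $\mathbb{C}^n\setminus\mathcal{W}\subseteq\pi(V(H))$, hence $\overline{\pi(V(H))}=\mathbb{C}^n$, and then invokes the Closure Theorem of elimination theory (\cite[pg.\ 193, Thm.\ 3]{CLO}) to identify $\overline{\pi(V(H))}$ with $V(\mathcal{I}(H)\cap\mathbb{C}[a])$; since the only ideal cutting out all of $\mathbb{C}^n$ is the zero ideal, the elimination ideal vanishes. You instead argue by contradiction and only use the trivial half of that correspondence --- that any $g\in\mathcal{I}(H)\cap\mathbb{C}[a]$ must vanish at every parameter value admitting a fiber point --- combined with the observation that the union of the two proper Zariski-closed sets $\mathcal{W}$ and $V(g)$ cannot exhaust $\mathbb{C}^n$. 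This makes your argument more elementary and self-contained (no appeal to the Closure Theorem, which in its full strength rests on the Nullstellensatz), at the cost of being slightly longer than the paper's one-line citation. You are also right that only (A1) is used; the paper's proof likewise relies only on the nonemptiness of generic fibers. The one small point worth making explicit is that a proper subvariety $\mathcal{W}\subsetneq\mathbb{C}^n$ indeed admits a nonzero polynomial $w$ vanishing on it, which you do address; everything else checks out.
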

\begin{proof}
Since $H$ is a general zero-dimensional system (Definition \ref{def:gzd}), 
there exists a proper variety ${\mathcal W} \subsetneq \mathbb{C}^n$
such that for every $a^* \in \mathbb{C}^n \setminus {\mathcal W}$ 
the specialized system $H|_{a=a^*}$ has at least one complex solution.  
It follows that ${\mathbb C}^n\backslash {\mathcal W}\subseteq \pi(V(H))$, 
where
$\pi: \mathbb{C}^{n+s} \to \mathbb{C}^n$ denotes the standard projection 	
given by $(a,x) \mapsto   a$.  Thus,
\[
\overline{{\mathbb C}^n\backslash {\mathcal W}}
	~\subseteq~ \overline{\pi(V(H))}
	~\subseteq~ \mathbb{C}^n~. 
\]	
Note that
$\overline{{\mathbb C}^n\backslash {\mathcal W}}={\mathbb C}^n$ 
(as ${\mathcal W} \subsetneq \mathbb{C}^n$), so 
$\overline{\pi(V(H)}={\mathbb C}^n$. 
By \cite[pg.\ 193, Thm.\ 3]{CLO}, we know 
that 
$\overline{\pi(V(H))}
=
V({\mathcal I}(H)\cap {\mathbb C}[a])$.
So, $V({\mathcal I}(H)\cap {\mathbb C}[a])={\mathbb C}^n$. 
The only ideal that generates the variety ${\mathbb C}^n$ is the zero ideal.
So, ${\mathcal I}(H)\cap {\mathbb C}[a] = \{0\}$.
\end{proof}

\begin{proof} [{Proof of Theorem \ref{shape}}]
Let $H=\{h_1,h_2, \ldots, h_s\}$. 
Let ${\mathcal G}_m$ 
be a noncomparable subset 
of ${\mathcal G}$ (which exists by
Lemma~\ref{lem:noncomparable-set}).  
By Lemma \ref{nonempty}, we have:
\[
	{\mathcal G}_m\cap {\mathbb C}[a]
		~\subseteq ~
		{\mathcal I}(H)\cap {\mathbb C}[a]
		~=~\{0\}~.
\] 
In fact, $0 \notin {\mathcal G}_m$ (by the definition of noncomparable subset and because $\mathcal G \neq \{0\}$), so 
${\mathcal G}_m\cap {\mathbb C}[a]
=\emptyset $. 
So, 
by Proposition~\ref{sg}, for every 
 $a^*\in 
 {\mathbb C}^n\backslash V(h)$, where $h=\Pi_{g
\in {\mathcal G}_m}{\rm lc}_{x}(g)$,
the set 
 ${\mathcal G}_m|_{a=a^*}$ is a Gr\"obner basis of 
$	{\mathcal I}(H|_{a=a^*})
\subseteq {\mathbb C}[x]$
with respect to 
$x_{s}<\cdots<x_{1}$.

(1) We show that a subset $\{g_1, g_2, \ldots , g_s\}$ of ${\mathcal G}_m$ has the
required triangular form.
 As $H$ is a general zero-dimensional system, let $\mathcal W$ be the variety in ${\mathbb C}^n$ 
 such that 
$H$ satisfies the hypotheses (A1)--(A3) in Definition \ref{def:gzd}. 
  Let $c^*\in {\mathbb C}^n\backslash \left({\mathcal W}\cup V(h)\right)$. 
  Then 
 by (A1) in Definition \ref{def:gzd},
 we know that $V({\mathcal G}_m|_{a=c^*})=V(H|_{a=c^*})$ is a nonempty finite set in ${\mathbb C}^s$. 
 Hence, by \cite[pg.\ 234, 
 Thm.\ 6 (i) and (iii)]{CLO},
 for $i\in\{1, 2, \ldots, s\}$, there exists $g_i \in {\mathcal G}_m$ such that 
 the leading monomial of 
 $g_i|_{a=c^*}$ 
 has the form
 $x_i^{N_i}$, where $N_i$ is a non-negative integer.  
  In particular, 
 	$g_s|_{a=c^*}\in \mathbb C[x_{s}]$. 
 
 It follows that
 ${\rm lpp}_x(g_i) = x_i^{N_i}$, because $c^*\notin  V(h)$ and so ${\rm lc}_x(g_i)|_{a=c^*} \neq 0$.  
 Hence, if we show that $N_2=N_3=\dots=N_s=1$, then, by the definition of the
 lexicographic order, $g_1, g_2, \ldots , g_s$ have the forms shown in Theorem~\ref{shape} (1).

 Hence, to finish proving (1), we need only show that $N_2=N_3=\dots=N_s=1$.
Let $N:= \lvert V(H|_{a=c^*}) \rvert$.  
Then every $x^* \in V(H|_{a=c^*})$ has a distinct $x_s$-coordinate
(by (A2) in Definition \ref{def:gzd}), and every such coordinate is a root of 
$g_s|_{a=c^*}\in \mathbb C[x_{s}]$.  Hence, 
\begin{align} \label{eq:bound-N}
	N_s ~=~ {\rm deg}(g_s|_{a=c^*} ) ~\geq ~ N~.
\end{align}
Next, by (A3) in Definition \ref{def:gzd} and 
\cite[pg.\ 235, Prop.\ 8(ii)]{CLO}, 
we know that 
	\begin{align} \label{eq:bound-N-2}
	N~=~N_1N_2 \cdots N_s~. 
	\end{align}
So, by~\eqref{eq:bound-N} and~\eqref{eq:bound-N-2}, we have $N_s=N$ and $N_2 = N_3 =\cdots=N_{s-1}=1$.

 (2) Consider the following claim: \\
 \noindent
 {\bf Claim:} ${\mathcal G}_m= \{g_1, ~g_2, ~\ldots, ~g_s\}$.\\
 This claim implies that $h=Q_{s, N} \, Q_1 \, Q_2 \cdots Q_{s-1}$ 
 and so, by what we saw earlier, for every 
  $a^*\in {\mathbb C}^n\backslash 
	V(Q_{s, N} \, Q_1\,  Q_2 \cdots Q_{s-1})$,
	the set 
 $
 \{g_1|_{a=a^*},~ g_2|_{a=a^*},~ \ldots,~ g_s|_{a=a^*} \}$ is a Gr\"obner basis of 
the ideal 
$	{\mathcal I}(H|_{a=a^*})$
with respect to 
$x_{s}<\cdots<x_{1}$.
So, to complete the proof, we need only prove the Claim.
   
   First, the containment 
  ${\mathcal G}_m\supseteq \{g_1, g_2, \ldots g_s\}$ follows from the fact that
  the  $g_i$'s were selected from ${\mathcal G}_m$.
  Next, we show the containment ${\mathcal G}_m\subseteq \{g_1, g_2, \ldots , g_s\}$ by proving the following equality by induction on $i$:
 \begin{equation} \label{eq:prove-ind}
 	{\mathcal G}_m\cap {\mathbb C}[a, x_{i},x_{i+1} \ldots, x_{s}]
	~\subseteq ~
	 \{g_{i},g_{i+1}, \ldots, g_s\}~, \quad {\rm for~all}~ i\in \{1, 2, \ldots, s\}~.
 \end{equation} 
 For $i=s$, assume that $g \in 
	{\mathcal G}_m\cap {\mathbb C}[a, x_{s}]$.
Then ${\rm lpp}_x(g)= x_s^{\tilde{N}}$ for some $\tilde{N} \geq 0$, and also
recall that ${\rm lpp}_x(g_s)= x_s^N$  for some $N >0$.  However,
${\mathcal G}_m$ is noncomparable, so $g=g_s$.

For the inductive step, assume that the containment~\eqref{eq:prove-ind} holds for all $i \in \{j, j+1, \dots, s \}$ (for some $j\leq s$). 
For $i=j-1$, let $g\in {\mathcal G}_m\cap {\mathbb C}[a_1, a_2, \ldots , a_n, x_{j-1}, x_j,\ldots, x_{s}]$,
and write ${\rm lpp}_x(g) = x_s^{a_s}\cdots x_j^{a_j}x_{j-1}^{a_{j-1}}$. 
If $a_{j-1}=0$, then 
$g\in {\mathcal G}_m\cap {\mathbb C}[a_1, a_2, \ldots , a_n, x_j, x_{j+1},\ldots, x_s]$, 
 so by the induction 
 hypothesis, 
 $g\in \{g_j, g_{j+1},\ldots, g_s\}$. 
 If $a_{j-1}>0$, then 
 ${\rm lpp}_x(g_{j-1})=x_{j-1} | {\rm lpp}_x(g)$.
Hence,  $g=g_{j-1}$ (because $\mathcal G_m$ is noncomparable).
\end{proof}


\end{document}